\newcolumntype{L}[1]{>{\raggedright\let\newline\\\arraybackslash\hspace{0pt}}m{#1}}
\newcolumntype{C}[1]{>{\centering\let\newline\\\arraybackslash\hspace{0pt}}m{#1}}
\newcolumntype{R}[1]{>{\raggedleft\let\newline\\\arraybackslash\hspace{0pt}}m{#1}}
\definecolor{myurlcolor}{rgb}{0,0,0.9}
\newcommand{\be}{\begin{equation}}
\newcommand{\ee}{\end{equation}}
\newcommand{\beq}{\begin{eqnarray}}
\newcommand{\eeq}{\end{eqnarray}}
\newcommand{\beqs}{\begin{eqnarray*}}
\newcommand{\eeqs}{\end{eqnarray*}}
\theoremstyle{plain}
\newtheorem{thm}{Theorem}
\newtheorem{lem}[thm]{Lemma}
\newtheorem{prop}[thm]{Proposition}
\newtheorem{cor}[thm]{Corollary}
\theoremstyle{definition}
\newtheorem{Def}[thm]{Definition}
\tikzstyle WL=[line width=10pt,opacity=1.0]
\tikzstyle 5WL=[line width=5pt,opacity=1.0]
\tikzstyle 1WL=[line width=1pt,opacity=1.0]
\newcommand*{\myproofname}{Proof}
\begin{document}

\title{Privacy-Utility Trade-Off}

\author{Hao Zhong and Kaifeng Bu
\thanks{Hao Zhong is with the College of Information Science and Technology, Chengdu University of Technology, Chengdu, 610059, China,(E-mail: zhonghao@cdut.edu.cn)}
\thanks{Kaifeng Bu is with Department of Physics, Harvard University, Cambridge, Massachusetts 02138, USA,(E-mail: kfbu@fas.harvard.edu.)}}

\maketitle

\begin{abstract}
In this paper, we investigate the privacy-utility trade-off (PUT) problem, which considers the minimal privacy loss at a fixed expense of utility. Several different  kinds of privacy in the PUT problem are studied, including differential privacy, approximate differential privacy, maximal information, maximal leakage, R\'{e}nyi differential privacy, Sibson mutual information and mutual information. The average Hamming distance is used to measure the distortion caused by the privacy mechanism. 
We consider two scenarios: global privacy and local privacy.
In the framework of  global privacy framework, the privacy-distortion function is upper-bounded by the privacy loss of a special mechanism, and lower-bounded by the optimal privacy loss with any possible prior input distribution. In  the framework of local privacy, we generalize a coloring method for the PUT problem. 

\end{abstract}

\begin{IEEEkeywords}
Privacy-utility trade-off, differential privacy, maximal leakage, R\'{e}nyi differential privacy, Sibson mutual information
\end{IEEEkeywords}

\section{Introduction}

Advance in digital technologies of data-producing and data-using provides enormous reliable and convenient services in the era of Big Data. Along with the prosperity, security issues and challenges result in increasing privacy concerns. Nowadays, examples include patient data breach, facial recognition data abuse, and surveillance measures without consent during the COVID-19 pandemic. 

Traditional semantic security for cryptosystems is not always achievable due to the auxiliary information available to adversary. Hence, differential privacy has been introduced by Dwork et al. \cite{Dwork2006a, Dwork2006b} to
deal with this problem. The original differential privacy is a typical divergence-based privacy since it is exactly the max-divergence between the probability distributions on neighboring databases. 

 This new measure requires an upper bound on the worst-case change, which may take high expense of utility. As a result, several relaxations of differential privacy have been studied recently. We categorize these existing measures into two main classes, divergence-based privacy and mutual information-based privacy. Besides, $\delta$-approximation can be used for particular privacy notion to relax the strict relative shift.

The most generalized divergence-based privacy for now was given by \cite{Mironov2017}, R\'{e}nyi differential privacy. This $\alpha$-R\'{e}nyi divergence-based measure keeps track of the privacy cost and allows tighter analysis of mechanism composition \cite{Papernot2016, Papernot2018, Feldman2021}. All the divergence-based measures consider the worst-case guarantee of privacy, and thus they protect the information of every single individual.

The mutual information-based privacy concerns the expected information leakage from the input (plain) database to the output (synthetic or sanitized) database. A straightforward measure is based on Shannon’s
mutual information which makes use of related results in information theory \cite{Cuff2016, Wang2016}. Another measure called max-information was also given by Dwork et al. \cite{Dwork2015}, which aims at bounding the change in the conditional probability of events relative to their a priori probability. Two most common mutual information-based privacy measures are based on Sibson's \cite{Sibson1969} and Arimoto's $\alpha$-mutual information \cite{Arimoto1975}, respectively. Plenty of works \cite{Verdu2015, Ho2015, Liao2019, Wu2020, Esposito2020a, Esposito2021} based on the $\alpha$-mutual information were published, concerning data processing and mechanism composition. Notably, the case $\alpha=\infty$ leads to a newly defined privacy definition, maximal leakage, which was first proposed by Issa et al. \cite{Issa2016}. As compared to most of mutual-information metrics, it is easier to compute and analyze.

In Ref.~\cite{Dwork2006c}, Dwork et al. defined approximate $(\epsilon,\delta)$-diferential privacy. It is a differential privacy notion with a $\delta$ shift tolerance, which provides more flexibility while designing mechanisms. In another paper by Dwork \cite{Dwork2015}, she unified this idea with max information. Let $x$ and $x^{\prime}$ be two adjacent database, i.e., differing in only one individual. It is worth noting that for $(\epsilon,\delta)$-differential privacy, the mechanism $Q$ is approximate private if and only if there exists a distribution which is statistically $\delta$-close to $Q(x)$ and max-divergently $\epsilon-$close to $Q(x^{\prime})$ \cite{Dwork2014}. 

 The main goal is to find a privacy-preserving mechanism not destroying statistical utility, or to solve the privacy-utility trade-off (PUT) problem. A rate-distortion theory flavored PUT problem is to determine the minimal compression during a valid communication \cite{Kalantari2018}. As an important topic in information theory, rate distortion has been studied and applied in many fields including complexity and learning theory. For example, the information bottleneck (IB) problem which was initiated by Tishby \cite{Tishby1999}, has attracted lots of attention as it provides an explanation of the success of machine learning \cite{Tishby2015, Shwartz-Ziv2017, Saxe2018, Goldfeld2020}. As for PUT problem, global differential privacy-distortion with average Hamming distance was studied in \cite{Wang2016}; local differential privacy-distortion with average Hamming distance was studied in \cite{Kalantari2018}; local maximal leakage-distortion was studied in \cite{Saeidian2021}; Arimoto mutual information-distortion with hard Hamming distance was studied in \cite{Liao2019}.

In this paper, we study the the minimal privacy loss with given Hamming distortion for seven
different privacy notions
including differential privacy, approximate differential privacy, maximal information, maximal leakage, R\'{e}nyi differential privacy, Sibson mutual information and mutual information.  Our main contributions in this work are listed as follows:
\begin{enumerate}
    \item We obtain the lower- and upper-bounds of privacy-distortions for the privacy notions mentioned above. Moreover, for the prior input distribution is unknown, we obtain the closed-form expression for differential privacy- and rate-distortion.
    \item We generalize the coloring method to solve the PUT problems for local approximate differential privacy- and maximal leakage-distortion. The term "local" means the data server is untrusted and the individual information is calibrated before uploading. 
    \item If the input space consisting of samples independently and identically distributed chosen, we show the analytical closed form results for differential privacy-, maximal leakage- and rate-distortion, together with the lower- and upper-bounds for the rest of privacy-distortions. 
\end{enumerate}

The remainder of this paper is organized as follows. In Section \ref{sec:2},
we introduce the basics notions and concepts. In Section \ref{sec:3}, we investigate the global privacy distortion, i.e., the privacy-utility trade-off when there is a curator with access to the plain data. In Section \ref{sec:4}, we study the local privacy distortion and demonstrate our results by two images. Moreover,  we show the privacy-distortion trade-off for parallel composition. Finally, we conclude in Section \ref{sec:6}.
 

\section{Preliminaries}\label{sec:2}
Let  $\mathcal{X}$ be the input data space, which is a finite set. A randomized mechanism $Q$ is a conditional distribution $Q_{Y|X}$ mapping the discrete random variable $X$ to an output data $Y\in\mathcal{Y}.$ We assume that the data $x$ is drawn from some distribution $P_X$ with full support on $\mathcal{X}$. 
For simplicity, we denote $P_X(i)$ and $Q_{Y|X}(j|i)$ as $P_i$ and $Q(j|i)$ for $i\in\mathcal{X}$ and $j\in\mathcal{Y}$, respectively.

In general, the prior distribution $P_X$ is unknown. Hence, we assume it belongs to some set of probability distribution on $\mathcal{X}$, which is called source set of $P_X$ and discloses the prior knowledge about the input distribution. 
For example, Kalantari et al. \cite{Kalantari2018} divided the set of source sets into the following three classes.

\begin{Def}[\cite{Kalantari2018}]
A source set $\mathcal{P}$ is classified into three classes as follows:

\textbf{Class \uppercase\expandafter{\romannumeral1}.} Source set $\mathcal{P}$ whose convex hull Conv$(\mathcal{P})$ includes the uniform distribution $P_U$. 

\textbf{Class \uppercase\expandafter{\romannumeral2}.} Source set $\mathcal{P}$ not belonging to Class \uppercase\expandafter{\romannumeral1}, with ordered probability values. Without loss of generality, $P_1\geq P_2\geq \cdots \geq P_{|\mathcal{X}|}$.

\textbf{Class \uppercase\expandafter{\romannumeral3}.} Source set $\mathcal{P}$ that cannot be classified as Class I or Class II.
\end{Def}

Note that Class I contains two special cases: (1) it only contains the  uniform distribution and (2) it contains 
all possible probability distributions, which means there is no auxiliary information about input distribution. And Class II contains a special case that it only contains one non-uniform probability distribution. 

\subsection{Distortion Measure}
The input space $\mathcal{X}$ can represent attributes for a single individual or several individuals. A straight way to estimate the utility of mechanism is to calculate the distortion between $X$ and $Y$, where smaller distortion corresponds to greater usefulness. This is because smaller distortion implies one can gain more information of the original data from the released one. Let $\mathcal{Y}$ be a synthetic database, i.e., it is in the same universe with $\mathcal{X}$. Then we define the distortion as follows.
\begin{eqnarray}
\mathbb{E}_{P,Q}[d(X,Y)]=\sum_{x\in\mathcal{X}}\sum_{y\in\mathcal{Y}}P_X(x)Q_{Y|X}(y|x)d(x,y),
\end{eqnarray}
where $d(x,y)$ is the Hamming distance between $x$ and $y$, i.e., the number of attributes they differ in.  Two elements $x$ and $x^{\prime}$ are called neighbors if $d(x,x^{\prime})=1$. In this paper, we focus on the accuracy-persevering mechanisms that do not distort the data above some threshold $D$. 
\begin{Def}
A mechanism $Q$ is $(P,D)$-valid if the expected distortion $\mathbb{E}_{P,Q}[d(x,y)]\leq D$, where $\mathbb{E}_{P,Q}$ denotes the average over the randomness of input and mechanism. Moreover, $Q$ is $(\mathcal{P},D)$-valid if $Q$ is $(P,D)$-valid for any $P\in \mathcal{P}$. The set of all $(P,D)$-valid ($(\mathcal{P},D)$-valid, respectively) mechanisms is denoted by $\mathcal{Q}(P,D)$ ($\mathcal{Q}(\mathcal{P},D)$, respectively).
\end{Def}

Due to the convexity, it is directly to see that $\mathcal{Q}(\mathcal{P},D)=\mathcal{Q}(Conv(\mathcal{P}),D)$. Hence, without loss of generality, we always assume that $\mathcal{P}$ is convex.

\subsection{Different Notions of Privacy}
In this paper, we investigate different notions of privacy, which we list as follows.
\begin{Def}\label{def:prav}
Let $P_X$ be a distribution with full support on $\mathcal{X}$ and let $\epsilon$ represent a non-negative real number.

(1) [$\epsilon$-differential privacy, \cite{Dwork2006a, Dwork2006b}] A mechanism $Q$ is called $\epsilon$-differential private if for any $y\in\mathcal{Y}$, neighboring elements $x$ and $x^{\prime}\in\mathcal{X}$,
$$Q_{Y|X}(y|x)\leq e^{\epsilon} Q_{Y|X}(y|x^{\prime}).$$
In this case, we denote the minimal $\epsilon$ as $\epsilon_{DP}(Q)$ for the mechanism $Q$. 

(2) [$(\epsilon,\delta)$-differential privacy, \cite{Dwork2014}] Given a positive  number $\delta<1$, a mechanism $Q$ is called $(\epsilon,\delta)$-differential private if for any $y\in\mathcal{Y}$, neighboring elements $x$ and $x^{\prime}\in\mathcal{X}$,
$$Q_{Y|X}(y|x)\leq e^{\epsilon} Q_{Y|X}(y|x^{\prime})+\delta.$$
In this case, we denote the minimal $\epsilon$ as $\epsilon_{\delta}(Q)$ for the mechanism $Q$. 

(3) [Maximal information, \cite{Dwork2015}] A mechanism $Q$ has maximal information $\epsilon$ if for any $y\in\mathcal{Y}$, $x\in\mathcal{X}$,
$$Q_{Y|X}(y|x)\leq e^{\epsilon} \mathop{\mathbb{E}}_{x\sim P_X}[Q_{Y|X}(y|x)].$$
In this case, we denote the minimal $\epsilon$ as $\epsilon_{MI}(Q)$ for the mechanism $Q$. 

(4) [Maximal leakage, \cite{Issa2016}] A mechanism $Q$ has maximal leakage $\epsilon$ if
$$\sum_{y\in\mathcal{Y}}\max_{x\in\mathcal{X}}Q_{Y|X}(y|x)\leq e^{\epsilon}.$$
In this case, we denote the minimal $\epsilon$ as $\epsilon_{ML}(Q)$ for the mechanism $Q$. 

(5) [R\'{e}nyi differential privacy, \cite{Mironov2017}] Given a positive number $\alpha>1$, a mechanism $Q$ is $\epsilon$-R\'{e}nyi differential private if for any neighboring elements $x$ and $x^{\prime}\in\mathcal{X}$, the R\'{e}nyi divergence
$$D_{\alpha}\left({Q_{Y\vert X}(\cdot\vert x)\Vert Q_{Y\vert X}(\cdot\vert x^{\prime})}\right)\leq \epsilon.$$
In this case, we denote the minimal $\epsilon$ as $\epsilon_{\alpha,DP}(Q)$ for the mechanism $Q$. 

(6) [$\alpha$-mutual information privacy, \cite{Esposito2020a, Sibson1969, Verdu2015}] Given a positive number $\alpha>1$, a mechanism $Q$ has maximal $\alpha$-mutual information $\epsilon$ if the the Sibson mutual information 
$$I_{\alpha}\left({X;Q(X)}\right)\leq \epsilon.$$
In this case, we denote the minimal $\epsilon$ as $\epsilon_{\alpha,M}(Q)$ for the mechanism $Q$. 

(7) [Mutual information privacy, \cite{Cuff2016, Wang2016}] A mechanism $Q$ is called $\epsilon$-mutual information private if the mutual information 
$$I\left({X;Q(X)}\right)\leq\epsilon.$$
In this case, we denote the minimal $\epsilon$ as $\epsilon_{M}(Q)$ for the mechanism $Q$. 

\end{Def}

The minimum $\epsilon_{\star}(Q)$ above is called the privacy loss of mechanism $Q$. We now introduce the privacy-distortion function, which describes the minimal achievable privacy loss under restricted distortion. It is defined as follows. 
\begin{Def}\label{def:pri-dis}
Let $\mathcal{P}$ be a set with full-support distributions and $D$ a positive number.
\begin{equation}
\epsilon^{*}_{\star}(\mathcal{P},D):=\min_{Q\in\mathcal{Q}(\mathcal{P},D)}\epsilon_{\star}(Q)   
\end{equation}
is called the corresponding privacy-distortion function for $\star$ representing the privacy notions aforementioned. The set of mechanisms achieving the minimum is denoted by $\mathcal{Q}^{*}_{\star}(\mathcal{P},D)$. In particular, if $\mathcal{P}$ consists of only a single element $P$, then we write them as $\epsilon^{*}_{\star}(P,D)$ and $\mathcal{Q}_{\star}^{*}(P,D)$. Conversely, given restricted privacy cost $\epsilon\geq0$, we denote the smallest expected distortion achievable by 
\begin{equation}
D_{\star}^{*}(\mathcal{P},\epsilon):=\min_{\epsilon_{\star}(Q)\leq\epsilon}\max_{P\in\mathcal{P}}\mathbb{E}_{P,Q}[d(x,y)].
\end{equation}
\end{Def}

\section{Global Privacy Framework}\label{sec:3}

A global privacy framework relies on a data server which collects all the data directly from individuals and publish them in a privacy-protected synthetic database (see Fig. \ref{fig:global}). The $n$-individual database $x$ is drawn from $\mathcal{X}=\{1,2,\cdots,m\}^{n}$ with respect to the probability distribution $P$ in $\mathcal{P}$ where $m$ is the total number of types the individuals belong to. 
For fixed $x$, all database in $\mathcal{X}$ can be divided into $n+1$ categories based on their Hamming distance to $x$. Let $\mathcal{N}_{l}(x)$ be the set consisting of all elements with $l$ attributes different from $x$ and $0\leq l\leq n$. Then the size of the set $\mathcal{N}_{l}(x)$ is $N_{l}:=\binom{n}{l}(m-1)^l$. 
To avoid notation representation confusion, 
$\tilde{\epsilon}^{*}$ is used to denote the privacy-distortion function of global case.
Here, we have the following result which holds for every privacy notions in Definition \ref{def:prav}.
\begin{figure}
    \centering
    \includegraphics[width=0.5\textwidth]{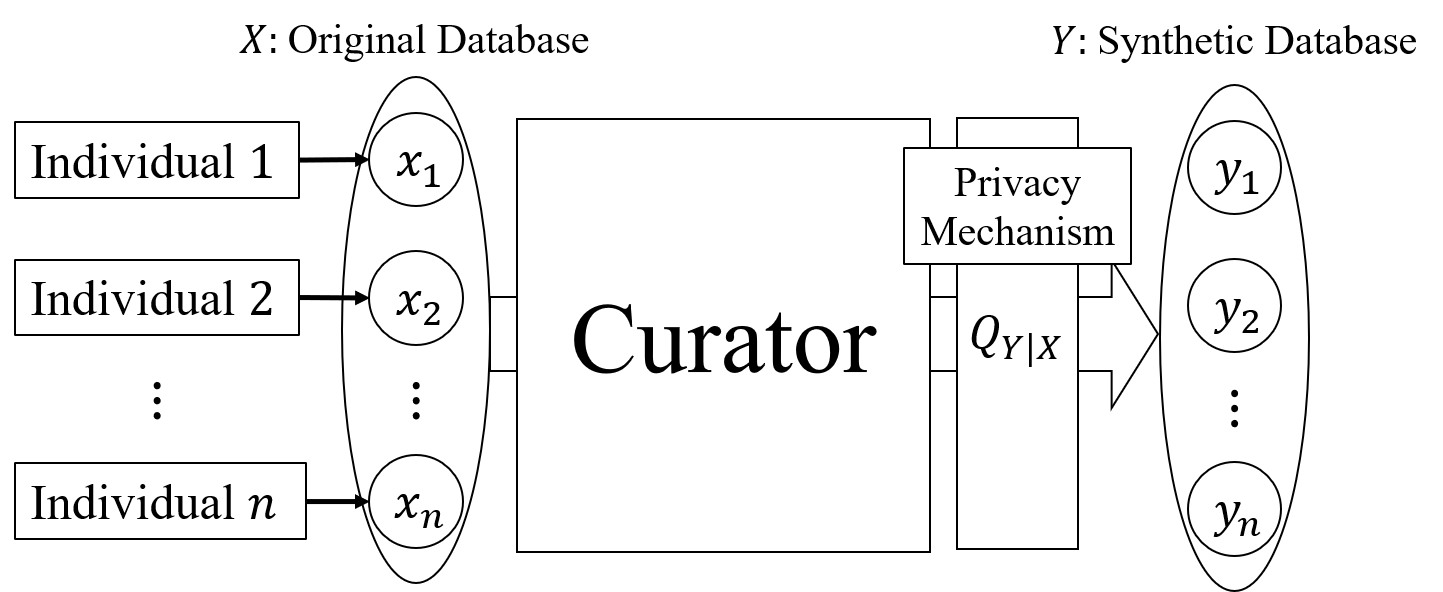}
    \caption{Global Privacy Framework: the curator collects all the data directly from individuals and publish them in a privacy-protected synthetic database}
    \label{fig:global}
\end{figure}


\begin{lem}\label{lem:general case bound}
Given a source set $\mathcal{P}$, let $D$ be a positive integer not greater than $n$, we have the following properties of privacy-distortion function: 

(1) For any $P\in\mathcal{P}$, $\tilde{\epsilon}^{*}(\mathcal{P},D)\geq\tilde{\epsilon}^{*}(P,D).$

(2) Let $Q_U$ be the uniform mechanism, i.e., $Q_U(y|x)=\frac{1}{m^n}$ for any $x,y\in\mathcal{X}.$ Then for $\frac{n(m-1)}{m}\leq D\leq 1$, $\tilde{\epsilon}^{*}(\mathcal{P},D)=\tilde{\epsilon}^{*}(Q_U)=0.$

(3) Let $Q_D$ be the mechanism given by Wang et al. \cite{Wang2016}, defined as $$Q_D(y|x)=\left(1-\frac{D}{n}\right)^n\left(\frac{(m-1)(n-D)}{D}\right)^{-d(x,y)},$$ then $\tilde{\epsilon}^{*}(\mathcal{P},D)\leq\tilde{\epsilon}(Q_D).$

\end{lem}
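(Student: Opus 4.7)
The three parts of the lemma are essentially independent, so I would handle them one at a time; each is really a visibility/validity check rather than a deep argument.

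For part (1), the plan is to observe that $(\mathcal{P},D)$-validity is defined as $(P,D)$-validity for \emph{every} $P\in\mathcal{P}$, so for any fixed $P\in\mathcal{P}$ we have the set inclusion $\mathcal{Q}(\mathcal{P},D)\subseteq \mathcal{Q}(P,D)$. Taking the minimum of $\epsilon_\star(Q)$ over a smaller feasible set can only increase it, which gives the inequality. This works uniformly for every privacy notion in Definition~\ref{def:prav}, since only the feasible set (not the objective) changes.

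For part (2), I would show two things about the uniform mechanism $Q_U$: that its privacy loss vanishes for every notion in Definition~\ref{def:prav}, and that it lies in $\mathcal{Q}(\mathcal{P},D)$ under the stated distortion regime. For the first, since $Q_U(y|x)=1/m^n$ is independent of $x$, the ratios $Q_U(y|x)/Q_U(y|x')$ equal $1$ (killing DP, approximate DP, maximal information, and R\'enyi DP), the sum $\sum_y\max_x Q_U(y|x)=1$ (killing maximal leakage), and $Y$ is independent of $X$ (killing mutual information and its Sibson variant). For the second, expanding coordinate-wise gives $\mathbb{E}_{Q_U}[d(x,Y)] = n\cdot \frac{m-1}{m}$ for every $x$, so $Q_U\in\mathcal{Q}(\mathcal{P},D)$ whenever $D\geq n(m-1)/m$. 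Combined with the trivial lower bound $\tilde\epsilon^\ast\geq 0$, this yields equality.

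For part (3), it suffices to check that $Q_D$ is a well-defined conditional distribution in $\mathcal{Q}(\mathcal{P},D)$, since then $\tilde\epsilon^\ast(\mathcal{P},D)\leq \tilde\epsilon(Q_D)$ by definition. Writing $\alpha:=\frac{D}{(m-1)(n-D)}$, the mechanism becomes $Q_D(y|x)=(1-D/n)^n\alpha^{d(x,y)}$, which depends on $y$ only through $d(x,y)$. Grouping $y$ by Hamming shell of size $N_\ell=\binom{n}{\ell}(m-1)^\ell$ and using $(m-1)\alpha=D/(n-D)$ gives
\begin{equation}
\sum_{y}Q_D(y|x)=\Big(\tfrac{n-D}{n}\Big)^n\sum_{\ell=0}^{n}\binom{n}{\ell}\Big(\tfrac{D}{n-D}\Big)^{\ell}=1
\end{equation}
by the binomial theorem, so $Q_D$ is a probability distribution. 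Differentiating the binomial identity (equivalently, using $\sum_\ell\ell\binom{n}{\ell}t^\ell=nt(1+t)^{n-1}$ with $t=D/(n-D)$) yields $\mathbb{E}_{Q_D}[d(x,Y)]=D$ for every $x$, hence $\mathbb{E}_{P,Q_D}[d(X,Y)]=D$ for every prior $P$, so $Q_D\in\mathcal{Q}(\mathcal{P},D)$.

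The only mildly delicate step is the explicit distortion computation in part (3); parts (1) and (2) are essentially bookkeeping. I would also flag that the bound $D\leq 1$ in the statement appears to be a typo for $D\leq n$ (otherwise the regime is empty whenever $n(m-1)/m>1$), but this does not affect the proof.
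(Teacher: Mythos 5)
Your proposal is correct and follows essentially the same route as the paper: part (1) via the inclusion $\mathcal{Q}(\mathcal{P},D)\subseteq\mathcal{Q}(P,D)$, and parts (2)--(3) by verifying that $Q_U$ and $Q_D$ are $(\mathcal{P},D)$-valid (the paper leaves these checks as "easy to verify," whereas you carry out the binomial-theorem computations explicitly, and your values $n(m-1)/m$ and $D$ for the expected distortions are right). Your observation that the range $\frac{n(m-1)}{m}\leq D\leq 1$ should presumably read $D\leq n$ is also well taken.
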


\begin{proof}[Proof of Lemma \ref{lem:general case bound}]
By Definition \ref{def:pri-dis}, for any mechanism $Q\in\mathcal{Q}^{*}(\mathcal{P},D)$, $Q$ is $(\mathcal{P},D)$-valid. Thus, $Q$ is $(P,D)$-valid for any $P\in\mathcal{P}$. Hence, $\tilde{\epsilon}^{*}(Q)=\tilde{\epsilon}^{*}(\mathcal{P},D)\geq\tilde{\epsilon}^{*}(P,D).$ 

Conversely, for any $(\mathcal{P},D)$-valid mechanism $Q$, the definition of privacy distortion function implies $\tilde{\epsilon}^{*}(\mathcal{P},D)\leq\tilde{\epsilon}^{*}(Q).$ Besides, it is easy to verify that $Q_U$ is $(\mathcal{P},D)$-valid for $\frac{n(m-1)}{m}\leq D\leq 1$ and $Q_D$ is $(\mathcal{P},D)$-valid for $0< D\leq 1$. This completes the proof.
\end{proof}


\begin{thm}\label{thm:tn,c2}
Let $\mathcal{P}$ be a source set such that each probability distribution in  $\mathcal{P}$ has full support and let $\theta^{*}:=m^{n} \cdot \max_{P\in\mathcal{P}}\min_{x\in\mathcal{X}}P(x)$. Then we have the following results on the privacy-distortion function with respect to the privacy notions in Definition \ref{def:prav}.

(1) (Differential privacy)
\beqs
&&\max\left\{0,\log(m-1)\frac{\theta^{*}n-D}{D}\right\}\\
&\leq& \tilde\epsilon_{DP}^{*}(\mathcal{P},D)\\
&\leq& \max\left\{0,\log\frac{(m-1)(n-D)}{D}\right\}.
\eeqs

(2) (Approximate differential privacy)
\beqs
&&\max\left\{0,\log(m-1)\frac{\theta^{*}n(1-\delta m^{n-1})-D}{D}\right\}\\
&\leq& \tilde\epsilon_{\delta}^{*}(\mathcal{P},D)\\
&\leq& \max\left\{0,\log(m-1)\frac{n-D}{D}(1-\delta(1-\frac{D}{n})^{-n})\right\}.
\eeqs

(3) (Maximal information)
\beqs
&&\max\left\{0,\log m(1-\frac{D}{\theta^{*}n})\right\}\\
&\leq& \tilde\epsilon_{MI}^{*}(\mathcal{P},D)\\
&\leq& \max\left\{0,-\log\min_{P_X\in\mathcal{P},y\in\mathcal{Y}}\sum_{l=0}^{m}\left((m-1)\frac{n-D}{D}\right)^{-l}P_{X}(\mathcal{N}_{l}(y))\right\}.
\eeqs

(4) (Maximal leakage)
\beqs
&&\max\left\{0,\log m(1-\frac{D}{\theta^{*}n})\right\}\\
&\leq& \tilde\epsilon_{ML}^{*}(\mathcal{P},D)\\
&\leq& \max\left\{0,n\log m\left(1-\frac{D}{n}\right)\right\}
\eeqs

(5) (R\'{e}nyi differential privacy)
\beqs
&&\max\left\{0,\log\frac{(m-1)(n-D/\theta^{*})^{\alpha/(\alpha-1)}}{D/\theta^{*}}-\frac{1}{\alpha-1}\log nm^{n-1}\right\}\\
&\leq& \tilde\epsilon_{\alpha,DP}^{*}(\mathcal{P},D)\\
&\leq& \max\Big\{0,\frac{1}{\alpha-1}\log\frac{D}{n(m-1)}\Big(m-2+\\
&&\left(\frac{(n-D)(m-1)}{D}\right)^{\alpha}+\left(\frac{(n-D)(m-1)}{D}\right)^{1-\alpha}\Big)\Big\}.
\eeqs

(6) (Sibson mutual information)
\beqs
&&\max\left\{0,\frac{\alpha-n}{\alpha-1}\log m+\frac{\alpha}{\alpha-1}\log\frac{1-D/n}{\frac{1}{2}(1-\theta^{*})(n+1)m+\theta^{*}}\right\}\\
&\leq& \tilde\epsilon_{\alpha,M}^{*}(\mathcal{P},D)\\
&\leq& \max\Big\{0,n\log m((n-D)^{\alpha}+D^{\alpha}(m-1)^{1-\alpha})^{1/(\alpha-1)}\\
&&-\frac{\alpha}{\alpha-1}n\log n\Big\}.
\eeqs

(7) (Mutual information)
\beqs
&&\max\Big\{0,\theta^{*}n\log \eta^{-\frac{1}{n}}(1-\frac{D}{n\theta^{*}})\\&&+\theta^{*}D\log\left(\frac{D}{(m-1)(n\theta^{*}-D)}\right)\Big\}\\
&\leq& \tilde\epsilon_{M}^{*}(\mathcal{P},D)\\
&\leq& \max\left\{0,n\log m(1-\frac{D}{n})+D\log\left(\frac{D}{(m-1)(n-D)}\right)\right\}
\eeqs
where $\eta=\max_{x\in\mathcal{X}}P^{\prime}(x)$ and $P^{\prime}=\arg\max_{P\in\mathcal{P}}\min_{x\in\mathcal{X}}P(x).$

In particular, $\tilde{\epsilon}^{*}(\mathcal{P},D)=0$ if $n(1-1/m)\leq D\leq 1$ for all notions aforementioned.
\end{thm}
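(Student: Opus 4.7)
The plan is to split the seven cases into a uniform upper-bound analysis and a uniform lower-bound analysis, tied together by Lemma~\ref{lem:general case bound} and the explicit symmetric form of the mechanism $Q_D$. Throughout, let $\rho := (m-1)(n-D)/D$ so that $Q_D(y|x) = (1-D/n)^n \rho^{-d(x,y)}$ depends on $(x,y)$ only through the Hamming distance. Also, choose the worst-case prior $P^{*} := \arg\max_{P \in \mathcal{P}} \min_x P(x)$, for which $m^n \min_x P^{*}(x) = \theta^{*}$; this is the prior that will drive all the lower bounds.

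For the \emph{upper bounds}, Lemma~\ref{lem:general case bound}(3) reduces the task to computing $\tilde{\epsilon}_{\star}(Q_D)$ for each privacy notion $\star$. Because $Q_D$ is Hamming-symmetric, each evaluation boils down to a binomial-type sum $\sum_{l=0}^{n} N_l \rho^{-l}$ (or its $\alpha$-weighted analogue). Explicitly: for $\epsilon_{DP}$ the ratio between neighbors is exactly $\rho$; for $\epsilon_{ML}$ the sum $\sum_y \max_x Q_D(y|x)$ equals $m^n(1-D/n)^n$; for $\epsilon_{MI}$ and $\epsilon_M$ we pair $Q_D(y|x)$ with $\mathbb{E}_{x \sim P}[Q_D(y|x)]$, and the prior-dependence survives only through the innermost sum; for $\epsilon_{\alpha,DP}$ and $\epsilon_{\alpha,M}$ the Rényi/Sibson divergences collapse to closed-form $\alpha$-weighted binomial sums, giving the stated expressions. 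The outer $\max\{0,\cdot\}$ records that Lemma~\ref{lem:general case bound}(2) already furnishes the trivial bound $\tilde{\epsilon}^{*}(\mathcal{P},D) \leq \tilde{\epsilon}(Q_U) = 0$ whenever $n(1-1/m) \leq D$, which also proves the final sentence of the theorem.

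For the \emph{lower bounds}, Lemma~\ref{lem:general case bound}(1) gives $\tilde{\epsilon}^{*}(\mathcal{P},D) \geq \tilde{\epsilon}^{*}(P^{*},D)$, so I fix any $(P^{*},D)$-valid mechanism $Q$ and extract a privacy-loss lower bound from the distortion constraint $\sum_{x,y} P^{*}(x) Q(y|x) d(x,y) \leq D$. For (1), iterating $Q(y|x) \leq e^{\epsilon} Q(y|x')$ along a Hamming path of length $d(x,y)$ yields $Q(y|x) \leq e^{\epsilon d(x,y)} Q(y|x')$; integrating against $P^{*}$ and using $\min_x P^{*}(x) = \theta^{*}/m^n$ gives $\log((m-1)(\theta^{*}n-D)/D)$. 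Part (2) is analogous with a $\delta$-shift tolerance, producing the correction $1 - \delta m^{n-1}$. For (3) and (4), the pointwise inequalities $Q(y|x) \leq e^{\epsilon}\mathbb{E}_{x}[Q(y|x)]$ and $Q(y|x) \leq e^{\epsilon}\max_{x'} Q(y|x')$ combined with the fact that any $(P^{*},D)$-valid mechanism must place total mass at least $1-D/(\theta^{*}n)$ on the ``diagonal'' $y=x$ yield the factor $m(1-D/(\theta^{*}n))$. For (5)--(7), I would bound the Rényi/Sibson/Shannon divergence from below by its value at a worst-case pair of neighbors (or at the $Q$-induced marginal on $\mathcal{N}_l$), then eliminate the free optimization parameter via the distortion budget using Jensen's or the log-sum inequality.

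The main obstacle I anticipate is the Rényi and Sibson lower bounds in items (5) and (6): one must simultaneously handle the nonlinear $\alpha$-power structure, the combinatorics of the shells $\mathcal{N}_l(x)$, and the worst-case prior. Rather than optimizing over all $Q$ directly, I expect the cleanest route is to reduce to a ``permutation-symmetric'' surrogate mechanism (where $Q(y|x)$ depends only on $d(x,y)$), arguing that symmetrization can only decrease the privacy loss while preserving the distortion; then the Rényi/Sibson quantity becomes a one-dimensional optimization over a distance profile that is controlled by $D/\theta^{*}$. The $\max\{0,\cdot\}$ wrapping every bound then records the regime where $Q_U$ itself becomes feasible and the nontrivial part of the bound is superseded.
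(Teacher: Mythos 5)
Your overall architecture (upper bounds from $Q_D$ via Lemma~\ref{lem:general case bound}, lower bounds from the worst-case prior $P'=\arg\max_{P\in\mathcal{P}}\min_x P(x)$ together with the unweighted distortion budget $\sum_{x,y}Q(y|x)d(x,y)\le Dm^n/\theta^{*}$) is exactly the paper's, and the upper-bound computations for the divergence-type notions are indeed routine binomial sums over the Hamming shells. One piece you still owe on the upper-bound side: for the Sibson item you must show that $\max_{P\in\mathcal{P}}I_{\alpha}(X;Q_D(X))$ is controlled by its value at the uniform prior; the paper gets this from convexity of the Sibson functional in $P$ plus a KKT verification, and it is not automatic from the Hamming symmetry of $Q_D$ alone, since $\mathcal{P}$ need not contain $P_U$.

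The genuine gap is in the lower bounds. Iterating the neighbor inequality along a Hamming path only gives $Q(y|x)\ge e^{-\epsilon\, d(x,y)}Q(y|y)$, and integrating that against $P'$ does \emph{not} produce the constant $(m-1)(\theta^{*}n-D)/D$: that route lets all the excess conditional mass sit on the first shell and yields only the much weaker estimate $D\ge \theta^{*}n(m-1)e^{-\epsilon}\bigl(1+(m-1)e^{-\epsilon}\bigr)^{n-1}\bigl(1+(m-1)e^{\epsilon}\bigr)^{-n}$. The engine of the paper's argument is a double count of neighbor pairs between adjacent shells: each $x\in\mathcal{N}_{l-1}(y)$ has $(n-l+1)(m-1)$ neighbors in $\mathcal{N}_{l}(y)$ and each $x'\in\mathcal{N}_{l}(y)$ has $l$ neighbors in $\mathcal{N}_{l-1}(y)$, which turns the one-step inequality into the recursion $(n-l+1)(m-1)S_{l-1}(y)\le e^{\epsilon}\,l\,S_{l}(y)$ for $S_{l}(y)=\sum_{x\in\mathcal{N}_{l}(y)}Q(y|x)$; telescoping over $l$ and summing over $y$ is what converts the distortion budget into the stated bound, and the same recursion (combined with H\"{o}lder's inequality) drives the R\'{e}nyi case. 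Likewise, your claim that a valid mechanism puts mass at least $1-D/(\theta^{*}n)$ on the diagonal is not the right statement for $n>1$ (the constraint bounds $\mathbb{E}[d]$, not $n\Pr[Y\ne X]$); the maximal-information and maximal-leakage lower bounds again go through the shells, via $\sum_{y}S_{l}(y)\le e^{\epsilon}N_{l}$ for every $l$. Finally, for items (5)--(7) your symmetrization plan is a genuinely different route from the paper, which instead invokes the probability-preservation lemmas for R\'{e}nyi divergence and Sibson information and, for mutual information, solves a relaxed convex program by KKT. Symmetrization is plausible for the prior-free notions, but for the mutual-information-type notions the step ``symmetrizing cannot increase the privacy loss'' is unproven: relabeling $Q$ by a Hamming isometry changes the effective prior, so quasi-convexity of the privacy loss in $Q$ at fixed $P'$ does not directly apply, and you would still need to solve the resulting symmetric optimization to extract the stated constants.
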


\begin{proof}[Proof of Theorem \ref{thm:tn,c2}]
Based on  Lemma \ref{lem:general case bound}, we have already known $Q_U$ is $(\mathcal{P},D)$-valid for $\frac{n(m-1)}{m}\leq D\leq 1$. Thus $\tilde{\epsilon}^{*}(\mathcal{P},D)=0$ if $\frac{n(m-1)}{m}\leq D\leq 1$ for all notions aforementioned. 
Otherwise, for $0<D<\frac{n(m-1)}{m}$ we consider the upper- and lower-bounds, respectively.

(1) First, let us consider the upper bound on the privacy-distortion functions. By Lemma \ref{lem:general case bound}, $\tilde{\epsilon}^{*}(\mathcal{P},D)$ is upper-bounded by $\tilde{\epsilon}^{*}(Q_D)$. Hence, it suffices to calculate or upper-bound $\tilde{\epsilon}^{*}(Q_D)$. For simplicity, we denote $\frac{(m-1)(n-D)}{D}$ and $\left(1-\frac{D}{n}\right)^{n}$ by $A$ and $B$, respectively. Note that if $0<D<n(m-1)/m$ then $m^{-n}<B<1<A$. 
Let us start with $\alpha$-type notions, i.e., $\tilde{\epsilon}_{\alpha,DP}(\mathcal{P},D)$ and $\tilde{\epsilon}_{\alpha,M}(\mathcal{P},D)$.

1.1) For $\tilde{\epsilon}_{\alpha,DP}(\mathcal{P},D)$,  we have 
\beqs
\lefteqn{\tilde{\epsilon}_{\alpha,DP}(Q_D)=\max_{d(x,x^{\prime})=1}\frac{1}{\alpha-1}\log\sum_{y}Q_D^{\alpha}(y|x)Q^{1-\alpha}(y|x^{\prime})}\\
&&=\max_{d(x,x^{\prime})=1}\frac{1}{\alpha-1}\log\Big(\sum_{d(x,y)=d(x^{\prime},y)}A^{-d(x,y)}B\\&&\ \ +\sum_{d(x,y)=d(x^{\prime},y)+1}A^{-d(x,y)}B\cdot A^{1-\alpha}\\&&\ \ +\sum_{d(x,y)=d(x^{\prime},y)-1}A^{-d(x,y)}B\cdot A^{\alpha-1}\Big)\\
&&=\max_{d(x,x^{\prime})=1}\frac{1}{\alpha-1}\log\\&&\ \ \ \Big(\sum_{l=1}^{n}(m-2)\binom{m-1}{l-1}(m-1)^{l-1}A^{-l}B\\&&\ \ +\sum_{l=1}^{n}\binom{m-1}{l-1}(m-1)^{l-1}A^{-l}B\cdot A^{1-\alpha}\\&&\ \ +\sum_{l=0}^{n-1}A^{-l}B\cdot A^{\alpha-1}\Big)\\
&&=\max_{d(x,x^{\prime})=1}\frac{1}{\alpha-1}\log\sum_{l=0}^{n-1}(m-1)^{l}A^{-l}\Big((m-2)A^{-1}B\\&&\ \ +A^{-\alpha}B+A^{\alpha-1}B\Big)\\
&&=\frac{1}{\alpha-1}\log \frac{B}{A}\left((m-2)+A^{1-\alpha}+A^{\alpha}\right)(1+(m-1)A^{-1})^{n-1}\\
&&=\frac{1}{\alpha-1}\log\frac{D}{n(m-1)}\Big(m-2+\left(\frac{(n-D)(m-1)}{D}\right)^{\alpha}\\&&\ \ +\left(\frac{(n-D)(m-1)}{D}\right)^{1-\alpha}\Big).
\eeqs

1.2) For $\tilde{\epsilon}_{\alpha,M}(\mathcal{P},D)$, by the definition of Sibson $\alpha$-mutual information, we have 
\begin{eqnarray}
\nonumber\lefteqn{\tilde{\epsilon}_{\alpha,M}(Q_D)=\max_{P\in\mathcal{P}}\frac{\alpha}{\alpha-1}\log\sum_{y}\left(\sum_{x}P(x)Q^{\alpha}(y|x)\right)^{\frac{1}{\alpha}}}\\
\label{eq:smi}&&=\frac{\alpha}{\alpha-1}\log B+\frac{\alpha}{\alpha-1}\log\max_{P\in\mathcal{P}}\sum_{y}\left(\sum_{x}P(x)A^{-\alpha(d(x,y))}\right)^{\frac{1}{\alpha}}.
\end{eqnarray}

We now show that if $\mathcal{P}$ contains the uniform distribution $P_U$, then $\tilde{\epsilon}_{\alpha,M}(Q_D)$ obtains the maximal value at $P=P_U$. Otherwise, $\tilde{\epsilon}_{\alpha,M}(Q_D)$ is upper-bounded by $$\frac{\alpha}{\alpha-1}\log B+\frac{\alpha}{\alpha-1}\log\sum_{y}\left(\sum_{x}P_U(x)A^{-\alpha(d(x,y))}\right)^{\frac{1}{\alpha}}.$$ To prove this claim, let us consider the following optimization problem.
\begin{eqnarray}
\nonumber\lefteqn{\min_{P}-\sum_{y}\left(\sum_{x}P(x)A^{-\alpha(d(x,y))}\right)^{\frac{1}{\alpha}}}\\\label{op:smi-upper}
&&\text{subject to}\\\nonumber
&&(c1)\text{ }P(x)\geq0,\\\nonumber
&&(c2)\text{ }\sum_{x}P(x)=1.\nonumber
\end{eqnarray}

It is obvious that the feasible region of \eqref{op:smi-upper} consists of all the probability distributions. By the convexity of Sibson mutual information (see Lemma \ref{lem:con_of_pri} in Appendix \ref{app:bpp}), this optimization problem is convex. To find the optimal value, we utilize the Karush-Kuhn-Tucker (KKT) conditions \cite{Boyd2004}. The KKT conditions for convex problem \eqref{op:smi-upper} are as follows.


a) For $x\in\mathcal{X}$, 
\beqs&&\nabla_{P(x)} L(P(x),\lambda,\alpha_{x})\\&=&\lambda-\sum_{y}\frac{1}{\alpha}\left(\sum_{x}P(x)A^{-\alpha d(x,y)}\right)^{\frac{1-\alpha}{\alpha}}+\alpha_{x}\\&=&0\eeqs
where $L(P(x),\lambda,\alpha_{x})$ is the Lagrangian defined as follows.
\beqs
L(P(x),\lambda,\alpha_{x})=&-\sum_{y}\left(\sum_{x}P(x)A^{-\alpha(d(x,y))}\right)^{\frac{1}{\alpha}}\\&+\lambda\left(\sum_{x}P(x)-1\right)-\sum_{x}\alpha_{x}P(x).
\eeqs

b) $\sum_{x}P(x)=1$

c) $\alpha_{x}P(x)=0$, $\alpha_x\geq 0$, $P(x)\geq 0.$

It is easy to verify that 
$$(P=P_U,\ \lambda=\frac{1}{\alpha}m^{\frac{n(\alpha-1)}{\alpha}}\left(1+(m-1)A^{-\alpha}\right)^{\frac{n}{\alpha}},\ \alpha_x=0)$$
satisfies the above KKT conditions. Note that feasible solution of KKT conditions is also optimal for convex problem \cite{Boyd2004}. Thus 
\beqs
&&\tilde{\epsilon}_{\alpha,M}(Q_D)\\
&\leq&\frac{\alpha}{\alpha-1}\log B+\frac{\alpha}{\alpha-1}\log\sum_{y}\left(\sum_{x}P_U(x)A^{-\alpha(d(x,y))}\right)^{\frac{1}{\alpha}}\\
&=&n\log m((n-D)^{\alpha}+D^{\alpha}(m-1)^{1-\alpha})^{1/(\alpha-1)}-\frac{\alpha}{\alpha-1}n\log n.
\eeqs

1.3) For $\tilde{\epsilon}_{DP}$,  let $c_1=\frac{D}{m-1}$ and $c_2=n-D$, then $0<c_1<n/m<c_2<n$ and we have
\beqs
\lefteqn{\tilde{\epsilon}_{DP}(Q_D)=\lim_{\alpha\to\infty}\tilde{\epsilon}_{\alpha,DP}(Q_D)}\\
&&=\lim_{\alpha\to\infty}\frac{c_1\left(\frac{c_2}{c_1}\right)^{\alpha}\log\frac{c_2}{c_1}+c_2\left(\frac{c_1}{c_2}\right)^{\alpha}\log\frac{c_1}{c_2}}{\frac{m-2}{m-1}D+c_1\left(\frac{c_2}{c_1}\right)^{\alpha}+c_2\left(\frac{c_1}{c_2}\right)^{\alpha}}\\
&&=\lim_{\alpha\to\infty}\frac{c_1\left(\frac{c_2}{c_1}\right)^{\alpha}\log\frac{c_2}{c_1}}{\frac{m-2}{m-1}D+c_1\left(\frac{c_2}{c_1}\right)^{\alpha}}\\
&&=\log\frac{c_2}{c_1}=\log\frac{(m-1)(n-D)}{D}.
\eeqs

1.4) For $\tilde{\epsilon}_{ML}(Q_D)$, by the relation between Sibson mutual information and maximal leakage (see Lemma \ref{remark:implication} in Appendix \ref{app:bpp}),  we have
\beqs
\lefteqn{\tilde{\epsilon}_{ML}(Q_D)=\lim_{\alpha\to\infty}\tilde{\epsilon}_{\alpha,M}(Q_D)}\\
&&\leq n\log m+n\lim_{\alpha\to\infty}\frac{n}{\alpha-1}\log\left(\left(\frac{c_2}{n}\right)^{\alpha}+(m-1)\left(\frac{c_1}{n}\right)^{\alpha}\right)\\
&&=n\log m+n\frac{c_{2}^{\alpha}\log (c_{2}/n)+(m-1)c_{1}^{\alpha}\log (c_{1}/n)}{c_{2}^{\alpha}+(m-1)c_{1}^{\alpha}}\\
&&=n\log m(c_{2}/n)=n\log m(1-\frac{D}{n}).
\eeqs

1.5) For $\tilde{\epsilon}_{M}(Q_D)$,  by the relation between Sibson mutual information and mutual information (see Lemma \ref{remark:implication} in Appendix \ref{app:bpp}), we have
\beqs
\lefteqn{\tilde{\epsilon}_{M}(Q_D)=\lim_{\alpha\to 1}\tilde{\epsilon}_{\alpha,M}(Q_D)}\\
&&\leq n\log m+n\lim_{\alpha\to 1}\frac{n}{\alpha-1}\log\left(\left(\frac{c_2}{n}\right)^{\alpha}+(m-1)\left(\frac{c_1}{n}\right)^{\alpha}\right)\\
&&=n\log m+n\frac{c_{2}^{\alpha}\log (c_{2}/n)+(m-1)c_{1}^{\alpha}\log (c_{1}/n)}{c_{2}^{\alpha}+(m-1)c_{1}^{\alpha}}\\
&&=n\log m+n\frac{c_2\log(c_2/n)+(m-1)c_{1}\log(c_1/n)}{c_2+(m-1)c_1}\\
&&=n\log m(1-\frac{D}{n})+D\log\frac{D}{(m-1)(n-D)}.
\eeqs

1.6) For $\tilde{\epsilon}_{\delta}(Q_D)$, by Definition \ref{def:prav},
$$\epsilon_{\delta}(Q_D)=\max\left\{0,{\max_{y\in\mathcal{Y}, d(x,x^{\prime})=1}\log\frac{Q_{D}(y|x)-\delta}{Q_{D}(y|x^{\prime})}}\right\}.$$
Note that $|d(x,y)-d(x^{\prime},y)|\leq d(x,x^{\prime})$. Hence,
$$\tilde{\epsilon}_{\delta}(Q_D)=\max\left\{0,\log(m-1)\frac{n-D}{D}(1-\delta\left(1-\frac{D}{n}\right)^{-n})\right\}.$$

1.7) For $\tilde{\epsilon}_{MI}(Q_D)$, by Definition \ref{def:prav},
\beqs
\lefteqn{\epsilon_{MI}(Q)=\max_{P\in\mathcal{P},x\in\mathcal{X},y\in\mathcal{Y}}\log\frac{Q_{D}(y|x)}{\sum_{x\in\mathcal{X}}P_{X}(x)Q_{D}(y|x)}}\\
&&=-\log\min_{P_X\in\mathcal{P},y\in\mathcal{Y}}\sum_{l=0}^{m}\left((m-1)\frac{n-D}{D}\right)^{-l}P_{X}(\mathcal{N}_{l}(y)).
\eeqs

(2) Second, let us consider the lower bounds on the privacy-distortion functions. For any mechanism $Q$ in $\mathcal{Q}(\mathcal{P},D)$, it is $(P,D)$-valid for any $P\in\mathcal{P}$. Thus, for $$P^{\prime}=\arg\max_{P\in\mathcal{P}}\min_{x\in\mathcal{X}}P(x)\text{ and }\theta^{*}=m^n\cdot \max_{P\in\mathcal{P}}\min_{x\in\mathcal{X}}P(x),$$
\beqs
&&\sum_{x,y}m^{-n}\theta^{*}Q(y|x)d(x,y)\\&\leq&\sum_{x,y}P^{\prime}(x)Q(y|x)d(x,y)\\&=&\mathbb{E}_{P^{\prime},Q}[d(X,Y)]\leq D.
\eeqs
Therefore, we have 
\begin{equation}\label{ineq}
\sum_{x,y}Q(y|x)d(x,y)\leq \frac{D m^n}{\theta^{*}}.
\end{equation}
We complete the proof by contradiction.

2.1) For $\tilde{\epsilon}_{\delta}$,  assume there exists a $Q\in\mathcal{Q}(\mathcal{P},D)$ such that $\tilde{\epsilon}_{\delta}(Q)<\epsilon:=\log(m-1)\frac{\theta^{*}n(1-m^{n-1}\delta)-D}{D}$. 
Given an arbitrary vector $y\in\mathcal{X}$, we can always find $x\in\mathcal{N}_{l-1}(y)$ and $x^{\prime}\in\mathcal{N}_l(y)$ for any positive integer $l$ no larger than $n$. Since $x$ and $x^{\prime}$ are  neighboring, we have
\beqs
&&Q(y|x)<e^{\epsilon}Q(y|x^{\prime})+\delta\\
&\Rightarrow& (n-l+1)(m-1)Q(y|x)\\&&<e^{\epsilon}\sum_{x^{\prime}\in\mathcal{N}_{1}(x)\bigcap\mathcal{N}_{l}(y)}Q(y|x^{\prime})+\delta(n-l+1)(m-1)\\
&\Rightarrow& (n-l+1)(m-1)\sum_{x\in\mathcal{N}_{l-1}(y)}Q(y|x)\\&&<e^{\epsilon}\sum_{x\in\mathcal{N}_{l-1}(y)}\sum_{x^{\prime}\in\mathcal{N}_{1}(x)\bigcap\mathcal{N}_{l}(y)}Q(y|x^{\prime})+l N_{l}\delta\\
&\Rightarrow& (n-l+1)(m-1)\sum_{x\in\mathcal{N}_{l-1}(y)}Q(y|x)\\&&<e^{\epsilon}l\sum_{x\in\mathcal{N}_{l}(y)}Q(y|x)+l N_{l}\delta\\
&\Rightarrow& n(m-1)\sum_{x\in\mathcal{N}_{l-1}(y)}Q(y|x)\\&&<e^{\epsilon}l\sum_{x\in\mathcal{N}_{l}(y)}Q(y|x)\\&&+(m-1)(l-1)\sum_{x\in\mathcal{N}_{l-1}(y)}Q(y|x)+l N_{l}\delta\\
&\Rightarrow& n(m-1)\sum_{l=1}^{n}\sum_{x\in\mathcal{N}_{l-1}(y)}Q(y|x)\\&&<e^{\epsilon}\sum_{l=1}^{n}l\sum_{x\in\mathcal{N}_{l}(y)}Q(y|x)\\&&+(m-1)\sum_{l=1}^{n}(l-1)\sum_{x\in\mathcal{N}_{l-1}(y)}Q(y|x)+n(m-1)m^{n-1}\delta\\
&\Rightarrow& n(m-1)\sum_{x}Q(y|x)\\&&<(e^{\epsilon}+(m-1))\sum_{x}Q(y|x)d(x,y)+n(m-1)m^{n-1}\delta\\
&\Rightarrow& n(m-1)m^n\\&&<(e^{\epsilon}+(m-1))\sum_{x,y}Q(y|x)d(x,y)+n(m-1)m^{2n-1}\delta\\
&\Rightarrow& \sum_{x,y}Q(y|x)d(x,y)>\frac{n(m-1)m^{n}(1-m^{n-1}\delta)}{e^{\epsilon}+(m-1)}=\frac{D m^{n}}{\theta^{*}}.
\eeqs
Contradiction! Thus $\tilde{\epsilon}^{*}_{\delta}(\mathcal{P},D)\geq \log(m-1)\frac{\theta^{*}n(1-m^{n-1}\delta)-D}{D}.$

2.2)  The $\epsilon$-DP can be regarded as $(\epsilon,0)$-DP. Thus, for $\tilde{\epsilon}_{DP}$, we have $\tilde{\epsilon}^{*}_{DP}(\mathcal{P},D)\geq \log\frac{(m-1)(\theta^{*}n-D)}{D}.$

2.3) For $\tilde{\epsilon}_{MI}$,  assume there exists a $Q\in\mathcal{Q}(\mathcal{P},D)$ such that $\tilde{\epsilon}_{MI}(Q)<\epsilon:=\log m(1-\frac{D}{\theta^{*}n})$,  then for any $x$, $y\in\mathcal{X}$ and $P\in\mathcal{P}$,
\beqs
&&Q(y|x)<e^{\epsilon}\sum_{z}P(z)Q(y|z)\\
&\Rightarrow& \sum_{x\in\mathcal{N}_{l}(y)}Q(y|x)<e^{\epsilon}N_{l}\sum_{z}P(z)Q(y|z)\\
&\Rightarrow& \sum_{y}\sum_{x\in\mathcal{N}_{l}(y)}Q(y|x)<e^{\epsilon}N_{l}\sum_{y,z}P(z)Q(y|z)\\
&\Rightarrow& \sum_{y}\sum_{x\in\mathcal{N}_{l}(y)}Q(y|x)<e^{\epsilon}N_{l}\\
&\Rightarrow& \sum_{l=0}^{n}(n-l)\sum_{y}\sum_{x\in\mathcal{N}_{l}(y)}Q(y|x)<e^{\epsilon}\sum_{l=0}^{n}(n-l)N_{l}\\
&\Rightarrow& n m^n-\sum_{x,y}Q(y|x)d(x,y)<e^{\epsilon}n m^n-e^{\epsilon}n m^{n}(1-\frac{1}{m})\\
&\Rightarrow& \sum_{x,y}Q(y|x)d(x,y)>n m^{n} (1-\frac{e^{\epsilon}}{m})=\frac{D m^n}{\theta^{*}}.
\eeqs
Contradiction! Thus $\tilde{\epsilon}^{*}_{MI}(\mathcal{P},D)\geq \log m(1-\frac{D}{\theta^{*}n}).$

2.4) For $\tilde{\epsilon}_{ML}$,  assume there exists a  $Q\in\mathcal{Q}(\mathcal{P},D)$ such that $\tilde{\epsilon}_{ML}(Q)<\epsilon:=\log m(1-\frac{D}{n\theta^{*}})$, then
\beqs
&&\sum_{y}\max_{x}Q(y|x)<e^{\epsilon}\\
&\Rightarrow& \sum_{y}\max_{x\in\mathcal{N}_{l}(y)}Q(y|x)<e^{\epsilon}\text{for }0\leq l\leq n\\
&\Rightarrow& \sum_{x,y}Q(y|x)d(x,y)>n m^{n} (1-\frac{e^{\epsilon}}{m})=\frac{D m^n}{\theta^{*}}.
\eeqs
Contradiction! Thus $\tilde{\epsilon}^{*}_{ML}(\mathcal{P},D)\geq \log m(1-\frac{D}{n\theta^{*}}).$

2.5) For $\tilde{\epsilon}_{\alpha,DP}$,  assume there exists a $Q\in\mathcal{Q}(\mathcal{P},D)$ such that $\tilde{\epsilon}_{\alpha,DP}(Q)<\epsilon:=\log\frac{(m-1)(n-D/\theta^{*})^{\frac{\alpha}{\alpha-1}}}{D/\theta^{*}}-\frac{1}{\alpha-1}\log n m^{n-1}$. For any vector $y\in\mathcal{X}$, we can always find  $x\in\mathcal{N}_{l-1}(y)$ and $x^{\prime}\in\mathcal{N}_l(y)$ for any positive integer $l$ no larger than $n$. Since $x$ and $x^{\prime}$ are two neighbors, then by the probability preservation property of R\'{e}nyi divergence (see Lemma \ref{lem:renyi} in Appendix \ref{app:bpp}), we have
\beqs
&&Q(y|x)<\left(e^{\epsilon}Q(y|x^{\prime})\right)^{\frac{\alpha-1}{\alpha}}\\
&\Rightarrow& (n-l+1)(m-1)Q(y|x)\\&&<\sum_{x^{\prime}\in\mathcal{N}_{1}(x)\bigcap\mathcal{N}_{l}(y)}\left(e^{\epsilon}Q(y|x^{\prime})\right)^{\frac{\alpha-1}{\alpha}}\\
&\Rightarrow& (n-l+1)(m-1)\sum_{x\in\mathcal{N}_{l-1}(y)}Q(y|x)\\&&<l\sum_{x\in\mathcal{N}_{l}(y)}\left(e^{\epsilon}Q(y|x)\right)^{\frac{\alpha-1}{\alpha}}\\
&\Rightarrow& n(m-1)\sum_{x\in\mathcal{N}_{l-1}(y)}Q(y|x)\\&&<l\sum_{x\in\mathcal{N}_{l}(y)}\left(e^{\epsilon}Q(y|x)\right)^{\frac{\alpha-1}{\alpha}}\\&&+(m-1)(l-1)\sum_{x\in\mathcal{N}_{l-1}(y)}Q(y|x)\\
&\Rightarrow& n(m-1)\sum_{l=1}^{n}\sum_{x\in\mathcal{N}_{l-1}(y)}Q(y|x)\\&&<\sum_{l=1}^{n}l\sum_{x\in\mathcal{N}_{l}(y)}\left(e^{\epsilon}Q(y|x)\right)^{\frac{\alpha-1}{\alpha}}\\&&+(m-1)\sum_{l=1}^{n}(l-1)\sum_{x\in\mathcal{N}_{l-1}(y)}Q(y|x)\\
&\Rightarrow& n(m-1)\sum_{x}Q(y|x)\\&&<\sum_{x}\left(e^{\epsilon}Q(y|x)\right)^{\frac{\alpha-1}{\alpha}}d(x,y)\\&&+(m-1)\sum_{x}Q(y|x)d(x,y)\\
&\Rightarrow& n(m-1)m^n\\&&<\sum_{x,y}\left(e^{\epsilon}Q(y|x)\right)^{\frac{\alpha-1}{\alpha}}d(x,y)\\&&+(m-1)\sum_{x,y}Q(y|x)d(x,y).
\eeqs
By H\"{o}lder inequality, 
\beqs
n(m-1)m^n<\left(e^{\epsilon}\sum_{x,y}Q(y|x)d(x,y)\right)^{\frac{\alpha-1}{\alpha}}\left(\sum_{x,y}d(x,y)\right)^{\frac{1}{\alpha}}\\+(m-1)\sum_{x,y}Q(y|x)d(x,y)\\
=\left(n(m-1)m^{2n-1}\right)^{\frac{1}{\alpha}}\left(e^{\epsilon}\sum_{x,y}Q(y|x)d(x,y)\right)^{\frac{\alpha-1}{\alpha}}\\+(m-1)\sum_{x,y}Q(y|x)d(x,y).
\eeqs
However, the inequality \eqref{ineq} implies that 
\beqs
&&\left(n(m-1)m^{2n-1}\right)^{\frac{1}{\alpha}}\left(e^{\epsilon}\sum_{x,y}Q(y|x)d(x,y)\right)^{\frac{\alpha-1}{\alpha}}\\&&+(m-1)\sum_{x,y}Q(y|x)d(x,y)\\
&&\leq \left(n(m-1)m^{2n-1}\right)^{\frac{1}{\alpha}}\left(\frac{e^{\epsilon}D m^n}{\theta^{*}}\right)^{\frac{\alpha-1}{\alpha}}+(m-1)\frac{D m^n}{\theta^{*}}\\
&&=n(m-1)m^n.
\eeqs
Contradiction! Thus $$\tilde{\epsilon}^{*}_{\alpha,DP}(\mathcal{P},D)\geq \log\frac{(m-1)(n-D)^{\frac{\alpha}{\alpha-1}}}{D}-\frac{1}{\alpha-1}\log n m^{n-1}.$$

2.6) For $\tilde{\epsilon}_{\alpha,M}$,  assume there exists a $Q\in\mathcal{Q}(\mathcal{P},D)$ such that $\tilde{\epsilon}_{\alpha,M}(Q)<\epsilon:=\frac{\alpha-n}{\alpha-1}\log m+\frac{\alpha}{\alpha-1}\log\frac{1-D/n}{\frac{1}{2}(1-\theta^{*})(n+1)m+\theta^{*}}$. By the probability preservation property of Sibson mutual information (for fixed $x\in\mathcal{X}$, take $E^{(l)}=\{(x,y)|d(x,y)=l\}$ into Lemma \ref{lem:sibson} in Appendix \ref{app:bpp}), we have for any $P\in\mathcal{P}$,
\beqs
&& \sum_{d(x,y)=l}P(x)Q(y|x)\\
&\leq& \max_{y}\left(P(\mathcal{N}_{l}(y))e^{I_{\alpha}(X;Q(X))}\right)^{\frac{\alpha-1}{\alpha}}\\
&<& \max_{y}\left(P(\mathcal{N}_{l}(y))e^{\epsilon}\right)^{\frac{\alpha-1}{\alpha}}\\
&\leq& \left(e^{\epsilon}\left(1-\frac{\theta^{*}(m^n-N_{l})}{m^n}\right)\right)^{\frac{\alpha-1}{\alpha}}.
\eeqs
Thus, 
\beqs
&&\sum_{l=0}^{n}(n-l)\sum_{\mathclap{d(x,y)=l}}P(x)Q(y|x)\\&&<\left(\frac{e^{\epsilon}}{m^n}\right)^{\frac{\alpha-1}{\alpha}}\sum_{l=0}^{n}(n-l)\left((1-\theta^{*})m^n+\theta^{*}N_l\right)^{\frac{\alpha-1}{\alpha}}\\
&\Rightarrow& n-\sum_{x,y}P(x)Q(y|x)d(x,y)\\&&<\left(\frac{e^{\epsilon}}{m^n}\right)^{\frac{\alpha-1}{\alpha}}\left((1-\theta^{*})m^n\binom{n+1}{2}+\theta^{*}n m^{n-1}\right)\\
&\Rightarrow& \sum_{x,y}P(x)Q(y|x)d(x,y)>D.
\eeqs
Contradiction! Thus $$\tilde{\epsilon}^{*}_{\alpha,M}(\mathcal{P},D)\geq \frac{\alpha-n}{\alpha-1}\log m+\frac{\alpha}{\alpha-1}\log\frac{1-D/n}{\frac{1}{2}(1-\theta^{*})(n+1)m+\theta^{*}}.$$

2.7) We restrict the input distribution to $$P^{\prime}=\arg\max_{P\in\mathcal{P}}\min_{x\in\mathcal{X}}P(x).$$ By Lemma \ref{lem:general case bound}, $\tilde{\epsilon}^{*}(\mathcal{P},D)\geq\tilde{\epsilon}^{*}(P^{\prime},D)$. To evaluate $\tilde{\epsilon}^{*}(P^{\prime},D)$, let us consider the following convex optimization problem.
\begin{eqnarray}
\nonumber\lefteqn{\min_{Q(y|x)}\sum_{x,y}P^{\prime}(x)Q(y|x)\log\frac{Q(y|x)}{\sum_{x}P_U(x)Q(y|x)}}\\
\label{op:mi-lower}&&\text{subject to}\\
\nonumber&&(c1)\text{ }\sum_{x,y}P^{\prime}(x)Q(y|x)d(x,y)\leq D,\\
\nonumber&&(c2)\text{ }\sum_{y}Q(y|x)=1,\\
\nonumber&&(c3)\text{ }Q(y|x)\geq0.
\end{eqnarray}
All the conditions in \eqref{op:mi-lower} imply feasible $Q$ is $(P^{\prime},D)$-valid and thus the optimal value is $\tilde{\epsilon}^{*}(P^{\prime},D)$. It is difficult to solve \eqref{op:mi-lower} directly. Instead, let us consider the following relaxed optimization problem.
\begin{eqnarray}
\nonumber\lefteqn{\min_{Q(y|x)}m^{-n}\theta^{*}\sum_{x,y}Q(y|x)\log\frac{Q(y|x)}{\sum_{x}\eta Q(y|x)}}\\
\label{op:mi-lower-relax}&&\text{subject to}\\
\nonumber&&(c1)\text{ }\sum_{x,y}Q(y|x)d(x,y)\leq \frac{D m^n}{\theta^{*}},\\
\nonumber&&(c2)\text{ }\sum_{y}Q(y|x)=1,\\
\nonumber&&(c3)\text{ }Q(y|x)\geq0
\end{eqnarray}
where $\eta=\max_{x\in\mathcal{X}}P^{\prime}(x)$. Note that $(c1)$ in \eqref{op:mi-lower} implies $(c1)$ in \eqref{op:mi-lower-relax}. Thus, the optimal value of \eqref{op:mi-lower-relax} is not greater than $\tilde{\epsilon}^{*}(P^{\prime},D)$. By the convexity of mutual information (see Lemma \ref{lem:con_of_pri} in Appendix \ref{app:bpp}), optimization problem \eqref{op:mi-lower-relax} is convex. To find the optimal value, we utilize the  KKT conditions again. The KKT conditions for convex problem \eqref{op:mi-lower-relax} are as follows.

a) For any $x$, $y\in\mathcal{X}$, $$\nabla_{Q} L(Q,\lambda,\alpha_x,\beta_{xy})=0$$ where 
$L(Q,\lambda,\alpha_x,\beta_{xy})$ is the Lagrangian of \eqref{op:mi-lower-relax} defined as 
\beqs
L(Q,\lambda,\alpha_x,\beta_{xy}):=m^{-n}\theta^{*}\sum_{x,y}Q(y|x)\log\frac{Q(y|x)}{\sum_{x}\eta Q(y|x)}\\+\lambda(\sum_{x,y}Q(y|x)d(x,y)-\frac{D m^n}{\theta^{*}})\\+\sum_{x}\alpha_x\left(\sum_{y}Q(y|x)-1\right)-\sum_{x,y}\beta_{xy}Q(y|x).
\eeqs

b) $\lambda(\sum_{x,y}Q(y|x)d(x,y)-\frac{D m^n}{\theta^{*}})=0$, $\lambda\geq0$ and $$\sum_{x,y}Q(y|x)d(x,y)\leq \frac{D m^n}{\theta^{*}}.$$

c) For $x\in\mathcal{X}$, $\sum_{y}Q(y|x)=1.$

d) For any $x$, $y\in\mathcal{X}$, $\beta_{xy}Q(y|x)=0$, $\beta_{xy}\geq0$, $Q(y|x)\geq0.$

It is easy to verify that
$(Q=Q_{D/\theta^{*}}, \lambda=\frac{\theta^{*}}{m^n}\log\frac{(m-1)(n\theta^{*}-D)}{D}, \alpha_x= \frac{n\theta^{*}}{m^n}\log\frac{\eta^{1/n}n}{m(n-D/\theta^{*})}, \beta_{xy}=0)$
satisfies the KKT conditions.
Thus 
\beqs
\tilde{\epsilon}_{M}^{*}(\mathcal{P},D)&\geq&\tilde{\epsilon}_{M}^{*}(P^{\prime},D)\\
&\geq& m^{-n}\theta^{*}\sum_{x,y}Q_{D/\theta^{*}}(y|x)\log\frac{Q_{D/\theta^{*}}(y|x)}{\sum_{x}\eta Q_{D/\theta^{*}}(y|x)}\\
&=& \theta^{*}n\log \eta^{-\frac{1}{n}}(1-\frac{D}{n\theta^{*}})\\&&+\theta^{*}D\log\left(\frac{D}{(m-1)(n\theta^{*}-D)}\right).
\eeqs
\end{proof}

It is seen in Theorem \ref{thm:tn,c2} that the difference between lower- and upper-bounds is decreasing in $\theta^{*}$. In fact, the difference vanishes for some special cases. Assume that the source set $\mathcal{P}$ in Theorem \ref{thm:tn,c2} contains the uniform distribution $P_U$. Then $\arg\max_{P\in\mathcal{P}}\min_{x\in\mathcal{X}}P(x)=P_U$, which implies $\theta^{*}=1$ and $\eta=m^{-n}$. Hence, we have the following results on privacy-distortion functions.

\begin{thm}\label{thm:tn,c1}
Let $\mathcal{P}$ be a source set over $\mathcal{X}=\{1,...,m\}^n$ in Class I, then we have the following results 
on the privacy-distortion function with respect to the privacy notions in Definition \ref{def:prav},

(1) (Differential privacy)
$$\tilde{\epsilon}^{*}_{DP}(\mathcal{P},D)=\max\left\{0,\log\frac{(m-1)(n-D)}{D}\right\}.
$$

(2) (Approximate differential privacy)
\begin{eqnarray*}
&&\max\left\{0,\log(m-1)\frac{n(1-\delta m^{n-1})-D}{D}\right\}\\
&\leq&\tilde{\epsilon}^{*}_{\delta}(\mathcal{P},D)\\
&\leq&\max\left\{0,\log(m-1)\frac{n-D}{D}(1-\delta(1-\frac{D}{n})^{-n})\right\}.
\end{eqnarray*}

(3) (Maximal information)
$$\max\left\{0,\log m(1-\frac{D}{n})\right\}\leq\tilde{\epsilon}^{*}_{MI}(\mathcal{P},D)\leq\max\left\{0,\tilde{\epsilon}_{MI}(Q_D)\right\},$$
where $$\tilde{\epsilon}_{MI}(Q_D)=-\log\min_{P_X\in\mathcal{P},y\in\mathcal{Y}}\sum_{l=0}^{m}\left((m-1)\frac{n-D}{D}\right)^{-l}P_{X}(\mathcal{N}_{l}(y)).$$

(4) (Maximal leakage)
\begin{eqnarray*}
&&\max\left\{0,\log m\left(1-\frac{D}{n}\right)\right\}\\
&\leq&\tilde{\epsilon}^{*}_{ML}(\mathcal{P},D)\\
&\leq& \max\left\{0,n\log m\left(1-\frac{D}{n}\right)\right\}.
\end{eqnarray*}

(5) (R\'{e}nyi differential privacy)
\beqs
&&\max\left\{0,\log\frac{(m-1)(n-D)^{\alpha/(\alpha-1)}}{D}-\frac{1}{\alpha-1}\log nm^{n-1}\right\}\\
&\leq& \tilde{\epsilon}^{*}_{\alpha,DP}(\mathcal{P},D) \\
&\leq&
\max\Big\{0,\frac{1}{\alpha-1}\log\frac{D}{n(m-1)}\Big(m-2+\\
&&\left(\frac{(n-D)(m-1)}{D}\right)^{\alpha}+\left(\frac{(n-D)(m-1)}{D}\right)^{1-\alpha}\Big)\Big\}.
\eeqs

(6) (Sibson mutual information)
\beqs
&&\max\left\{0,\tilde{\epsilon}^{*}_{M}(\mathcal{P},D),\frac{1}{\alpha-1}\log m^{\alpha-n}(1-\frac{D}{n})^{\alpha}\right\}\\
&\leq&\tilde{\epsilon}^{*}_{\alpha,M}(\mathcal{P},D)\\
&\leq&\max\left\{0,\tilde{\epsilon}_{\alpha,M}(Q_D)\right\}
\eeqs

where $\tilde{\epsilon}_{\alpha,M}(Q_D)=n\log m((n-D)^{\alpha}+D^{\alpha}(m-1)^{1-\alpha})^{1/(\alpha-1)}-\frac{\alpha}{\alpha-1}n\log n.$

(7) (Mutual information)
$$\tilde{\epsilon}^{*}_{M}(\mathcal{P},D)=\max\left\{0,n\log m(1-\frac{D}{n})+D\log\left(\frac{D}{(m-1)(n-D)}\right)\right\}.$$

In particular, $\tilde{\epsilon}^{*}(\mathcal{P},D)=0$ if $\frac{n(m-1)}{m}\leq D\leq 1$ for all notions aforementioned.
\end{thm}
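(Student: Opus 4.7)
The plan is to reduce Theorem \ref{thm:tn,c1} directly to Theorem \ref{thm:tn,c2} by specializing the two parameters $\theta^{*}$ and $\eta$ that appear in the general bounds, and then to close the gap between the upper and lower bounds in the two cases where it happens to vanish.

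First, I would evaluate $\theta^{*}=m^{n}\cdot\max_{P\in\mathcal{P}}\min_{x\in\mathcal{X}}P(x)$ and $\eta=\max_{x}P'(x)$ with $P'=\arg\max_{P\in\mathcal{P}}\min_{x}P(x)$ under the Class I assumption. Since any distribution $P$ on $\mathcal{X}$ satisfies $\min_{x}P(x)\leq 1/|\mathcal{X}|=m^{-n}$, with equality exactly at $P_{U}$, and since the convex hull of $\mathcal{P}$ contains $P_{U}$ (so $P_{U}\in\mathcal{P}$ by the earlier convexity reduction), the maximum is attained at $P_{U}$. Hence $\theta^{*}=1$, $P'=P_{U}$, and $\eta=m^{-n}$. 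Substituting these values into each of the seven bounds of Theorem \ref{thm:tn,c2} produces exactly the stated bounds of Theorem \ref{thm:tn,c1} for the five relaxed notions (2)--(6) and ingredients for the two tight cases.

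Next, I would verify the two coincidences that upgrade inequalities into equalities. For differential privacy the lower bound $\log(m-1)(\theta^{*}n-D)/D$ collapses to $\log(m-1)(n-D)/D$, which is precisely $\tilde{\epsilon}(Q_{D})$, so (1) becomes an equality. For mutual information the lower bound $\theta^{*}n\log\eta^{-1/n}(1-D/(n\theta^{*}))+\theta^{*}D\log(D/((m-1)(n\theta^{*}-D)))$ simplifies to $n\log m(1-D/n)+D\log(D/((m-1)(n-D)))$, matching $\tilde{\epsilon}_{M}(Q_{D})$; hence (7) is also an equality. These two simplifications require only routine algebra.

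Finally, to justify the extra lower bound $\tilde{\epsilon}^{*}_{M}(\mathcal{P},D)$ appearing in (6), I would invoke monotonicity of Sibson's $\alpha$-mutual information in $\alpha$, which gives $I_{\alpha}(X;Q(X))\geq I_{1}(X;Q(X))=I(X;Q(X))$ for $\alpha>1$. Pointwise in $Q$ this yields $\epsilon_{\alpha,M}(Q)\geq\epsilon_{M}(Q)$, and therefore $\tilde{\epsilon}^{*}_{\alpha,M}(\mathcal{P},D)\geq\tilde{\epsilon}^{*}_{M}(\mathcal{P},D)$. Combined with the Theorem \ref{thm:tn,c2} lower bound evaluated at $\theta^{*}=1$, this produces the maximum of three terms displayed in (6). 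The main obstacle is essentially absent, since all the heavy lifting was done in Theorem \ref{thm:tn,c2}; the only delicate points are the algebraic verification of the two collapses and the correct appeal to monotonicity in $\alpha$ for the Sibson case.
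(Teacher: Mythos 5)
Your proposal is correct and is essentially the paper's own argument: the authors likewise obtain Theorem \ref{thm:tn,c1} by noting that for a Class I source set (convexified so that $P_U\in\mathcal{P}$) one has $\arg\max_{P\in\mathcal{P}}\min_{x}P(x)=P_U$, hence $\theta^{*}=1$ and $\eta=m^{-n}$, and then substituting into Theorem \ref{thm:tn,c2}, with the differential-privacy and mutual-information bounds collapsing to equalities. Your additional appeal to monotonicity of Sibson's $\alpha$-mutual information to justify the $\tilde{\epsilon}^{*}_{M}(\mathcal{P},D)$ term in part (6) is consistent with Lemma \ref{remark:implication}(9) and fills in a detail the paper leaves implicit.
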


\section{Local Privacy Framework}\label{sec:4}
In this section, we assume $|\mathcal{X}|=m>0$ and consider the local privacy framework, i.e., the privacy mechanism is applied before data uploading (see Fig. \ref{fig:local}). 
The distance metric $d(x,y)$ is actually the discrete distance, i.e., $$d(x,y)=
\begin{cases}
0,&\text{ }x=y,\\
1,&\text{ }x\neq y.
\end{cases}
$$
Hence, for any $x\neq x'$, $x$ and $x'$ are neighborhood.

Given  $0<D\leq 1$,   $Q$ belongs to $\mathcal{Q}(P,D)$ if and only if $\sum_{i=1}^{m}P_{i}Q(i|i)\geq1-D.$ We claim that a good choice of output space is the synthetic one i.e., $\mathcal{Y}=\mathcal{X}$ (See Lemma \ref{lem:size_of_out} in Appendix \ref{app:st}).

\begin{figure}
    \centering
    \includegraphics[width=0.5\textwidth]{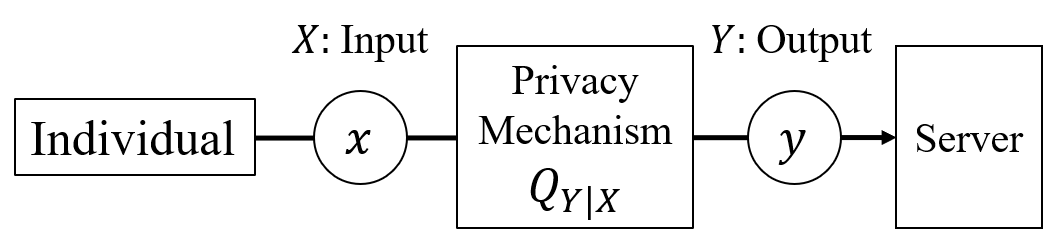}
    \caption{Local Privacy Framework: the individuals do not trust the data server, so the privacy mechanism is applied before data uploading}
    \label{fig:local}
\end{figure}

\subsection{Source Set of Class I}
Let us start with the source set of Class I, for which we take $n=1$ into Theorem \ref{thm:tn,c1} and get the following results directly.

\begin{prop}\label{thm:t1,c1}
Let $\mathcal{P}$ be a source set over $\mathcal{X}=\{1,...,m\}$ in Class I, then we have the following results 
on the privacy-distortion function with respect to the privacy notions in Definition \ref{def:prav},

(1) (Differential privacy)
$$\epsilon^{*}_{DP}(\mathcal{P},D)=
\begin{cases}
\log\frac{(m-1)(1-D)}{D}, & 0<D <\frac{m-1}{m},\\
0, & \frac{m-1}{m} \leq D \leq 1;
\end{cases}$$

(2) (Approximate differential privacy)
$$\epsilon^{*}_{\delta}(\mathcal{P},D)=
\begin{cases}
\log\frac{(m-1)(1-D-\delta)}{D}, & 0<D <(1-\delta)\frac{m-1}{m},\\
0, & (1-\delta)\frac{m-1}{m} \leq D \leq 1;
\end{cases}$$

(3) (Maximal information)
$$
\begin{cases}
\log m(1-D)\leq\epsilon^{*}_{MI}(\mathcal{P},D)\leq\epsilon_{MI}(Q_D), & 0<D <\frac{m-1}{m},\\
\epsilon^{*}_{MI}(\mathcal{P},D)=0, & \frac{m-1}{m} \leq D \leq 1
\end{cases}$$
where $\epsilon_{MI}(Q_D)=\log\frac{1-D}{P^{*}_{\mathcal{P}}(1-D-\frac{D}{m-1})+\frac{D}{m-1}}$ and $P^{*}_{\mathcal{P}}=\min_{1\leq i\leq m}$ $\{P_{i}|P\in\mathcal{P}\};$

(4)  (Maximal leakage)
$$\epsilon^{*}_{ML}(\mathcal{P},D)=
\begin{cases}
\log m(1-D), & 0<D <\frac{m-1}{m},\\
0, & \frac{m-1}{m} \leq D \leq 1;
\end{cases}$$

(5) (R\'{e}nyi differential privacy)
\beqs
&&\max\{0,\log \frac{(m-1)(1-D)^{\alpha/(\alpha-1)}}{D}\}\\
&\leq& \epsilon^{*}_{\alpha,DP}(\mathcal{P},D)\\
&\leq& \max\{0,\epsilon_{\alpha,DP}(Q_D)\},
\eeqs
where \beqs
\epsilon_{\alpha,DP}(Q_D)=\frac{1}{\alpha-1}\log\frac{D}{m-1}\Big(m-2+\left(\frac{(1-D)(m-1)}{D}\right)^{\alpha}&\\
+\left(\frac{(1-D)(m-1)}{D}\right)^{1-\alpha}\Big)&;
\eeqs

(6) (Sibson mutual information)
\beqs
&&\max\left\{\epsilon^{*}_{M}(\mathcal{P},D),\log m(1-D)^{\frac{\alpha}{\alpha-1}}\right\}\\
&\leq&\epsilon^{*}_{\alpha,M}(\mathcal{P},D)\\&\leq&\max\left\{0,\epsilon_{\alpha,M}(Q_D)\right\},
\eeqs
where $\epsilon_{\alpha,M}(Q_D)=\log m((1-D)^{\alpha}+D^{\alpha}(m-1)^{1-\alpha})^{1/(\alpha-1)};$

(7) (Mutual information)
$$\epsilon^{*}_{M}(\mathcal{P},D)=
\begin{cases}
\log m(1-D)\left(\frac{D}{(m-1)(1-D)}\right)^{D}, & 0<D <\frac{m-1}{m},\\
0, & \frac{m-1}{m} \leq D \leq 1.
\end{cases}$$

\end{prop}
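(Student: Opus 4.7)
The plan is to specialize Theorem~\ref{thm:tn,c1} to $n=1$, since the local privacy framework in this subsection treats a single-individual database ($n=1$, arbitrary $m$). The first step is to check that under $n=1$ all relevant quantities reduce cleanly: the Hamming distance coincides with the discrete distance, the neighborhood sets become $\mathcal{N}_0(y)=\{y\}$ and $\mathcal{N}_1(y)=\mathcal{X}\setminus\{y\}$, and the critical threshold $\frac{n(m-1)}{m}$ collapses to $\frac{m-1}{m}$. With these substitutions the bounds of Theorem~\ref{thm:tn,c1} reduce to the proposition's statements after routine algebra; in particular, the summand $-\frac{\alpha}{\alpha-1}n\log n$ in the Sibson bound vanishes, and the approximate-DP factor $m^{n-1}$ equals $1$.

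Several parts actually produce exact closed-form expressions in the local regime even though the general-$n$ theorem only gave two-sided bounds. For differential privacy~(1), the upper and lower bounds of Theorem~\ref{thm:tn,c1}(1) become identical at $n=1$, yielding $\log\frac{(m-1)(1-D)}{D}$; the case split on $D$ just records where this quantity is positive, i.e.\ $D<\frac{m-1}{m}$. For approximate DP~(2), setting $n=1$ kills the $m^{n-1}$ factor in the lower bound and turns $(1-D/n)^{-n}$ into $(1-D)^{-1}$ in the upper bound, so both sides merge into $\log\frac{(m-1)(1-D-\delta)}{D}$. For maximal leakage~(4), the lower bound $\log m(1-D/n)$ and the upper bound $n\log m(1-D/n)$ likewise coincide at $n=1$. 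For mutual information~(7), substituting $n=1$ into the closed-form $n\log m(1-D/n)+D\log\frac{D}{(m-1)(n-D)}$ gives $\log m(1-D)+D\log\frac{D}{(m-1)(1-D)}$, which is exactly the log of the claimed product. Parts~(5) and~(6) are obtained by the same direct substitution, with no further simplification required.

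The main obstacle is part~(3), maximal information, where Theorem~\ref{thm:tn,c1}(3) leaves the upper bound in the implicit form $-\log\min_{P,y}\sum_{l}A^{-l}P(\mathcal{N}_l(y))$ with $A=(m-1)(n-D)/D$. At $n=1$ the inner sum collapses to $P_y+(1-P_y)\frac{D}{(m-1)(1-D)}$, an affine function of $P_y$ whose slope $1-\frac{D}{(m-1)(1-D)}$ is non-negative precisely in the regime $D\leq\frac{m-1}{m}$. Hence the minimum over $y\in\mathcal{X}$ and $P\in\mathcal{P}$ is attained at the smallest admissible marginal value $P^{*}_{\mathcal{P}}=\min\{P_i: 1\leq i\leq m,\, P\in\mathcal{P}\}$; substituting and multiplying numerator and denominator by a common factor yields the claimed $\log\frac{1-D}{P^{*}_{\mathcal{P}}(1-D-D/(m-1))+D/(m-1)}$. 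A consistency check is that $P^{*}_{\mathcal{P}}\leq 1/m$, which forces the argument of the logarithm to be at least $m(1-D)$, so the closed-form upper bound dominates the generic lower bound $\log m(1-D)$ from Theorem~\ref{thm:tn,c1}(3), exactly as required by the stated inequality chain.
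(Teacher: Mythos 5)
Your proposal is correct and follows exactly the paper's own route: the paper proves this proposition simply by substituting $n=1$ into Theorem~\ref{thm:tn,c1} (stating that the results follow "directly"), and your specialization—including the observation that the upper and lower bounds merge for parts (1), (2), (4), (7) and the explicit evaluation of the maximal-information sum $P_y+\tfrac{D}{(m-1)(1-D)}(1-P_y)$ at its minimizer $P^{*}_{\mathcal{P}}$ in part (3)—matches the intended argument while supplying the routine algebra the paper omits.
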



\subsection{Source Set of Class II}
By taking $n=1$ into Theorem \ref{thm:tn,c2}, one can obtain the properties of privacy-distortion functions for any source set $\mathcal{P}$. However, if $\mathcal{P}$ does not contain the uniform distribution $P_U$ then Theorem \ref{thm:tn,c2} fails to give the lower bounds sometimes. Thus, in this subsection, we introduce a different method to address the PUT problem for the source set of Class II, and we mainly study the differential privacy-, approximate differential privacy- and maximal leakage-distortion trade-off.
First, we consider the scenario in which the distribution on $\mathcal{X}$ is given. 
\begin{thm}\label{thm:DP2}
Let $P$ be a distribution with full support on $\mathcal{X}$ such that $0\leq P_{m}\leq P_{m-1}\leq\cdots\leq P_{1}\leq1.$
Let $D^{(k)}=\sum_{j=m-k+1}^{m}P_{j}$ for $0\leq k\leq m.$ Then
$$\epsilon_{\delta}^{*}(P,D)=\max\left\{0,\min_{D>(1-\delta)D^{(k-1)}}\log\frac{(m-k)(1-D-\delta)}{D-(1-\delta)D^{(k-1)}}\right\}.$$
\end{thm}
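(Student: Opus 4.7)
The plan is to establish the formula as matching upper and lower bounds, with the index $k$ in the minimum parameterising how many of the least-likely atoms of $P$ the optimal channel effectively ``retires'' from the output alphabet.

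For the upper bound, for each $k$ with $D>(1-\delta)D^{(k-1)}$ I will exhibit a $(P,D)$-valid $(\epsilon,\delta)$-DP mechanism $Q_k$ attaining $\epsilon=\log\frac{(m-k)(1-D-\delta)}{D-(1-\delta)D^{(k-1)}}$. Write $S=\{1,\dots,m-k+1\}$ and $\bar S=\{m-k+2,\dots,m\}$, and decompose $Q_k=\delta\,I+(1-\delta)\,Q'_k$, where $I$ is the identity channel and $Q'_k$ is a pure-DP aggregator: it performs $(m-k+1)$-ary randomised response with self-loop probability $p$ on inputs in $S$, and sends each input in $\bar S$ uniformly into $S$. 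Setting $\mathbb{E}_{P,Q_k}[d(X,Y)]=D$ picks $p=(1-D-\delta)/[(1-\delta)(1-D^{(k-1)})]$. The binding $(\epsilon,\delta)$-DP inequality is then $Q_k(j|j)\le e^\epsilon Q_k(j|i)+\delta$ for $i\ne j$ in $S$, yielding the stated $\epsilon$ after simplification; all remaining constraints (in particular those with $j\in\bar S$) are absorbed by the $\delta$-slack contributed by $I$. Minimising over $k$ gives the claimed upper bound.

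For the lower bound, let $Q$ be $(P,D)$-valid and $(\epsilon,\delta)$-DP and set $d_j=Q(j|j)$. Summing the DP inequality $d_j\le e^\epsilon Q(j|x)+\delta$ over $x\ne j$ gives $(m-1)d_j\le e^\epsilon\sum_{x\ne j}Q(j|x)+(m-1)\delta$. The strategy is to partition $\{1,\dots,m\}$ into a top block $S$ of size $m-k+1$ and a bottom block $\bar S$ of size $k-1$; apply the DP-averaged inequality only on $S$; and on $\bar S$ invoke the bound that each $d_j$ can contribute at most $(1-\delta)P_j$ of unused distortion, the factor $(1-\delta)$ capturing the fact that under approximate DP every atom may be self-looped with probability $\delta$ ``for free.'' Combining with $\sum_j P_jd_j\ge1-D$ and the ordering $P_1\ge\cdots\ge P_m$, one derives $e^\epsilon\,[D-(1-\delta)D^{(k-1)}]\ge(m-k)(1-D-\delta)$ for every admissible $k$; optimising over $k$ matches the upper bound.

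I expect the main obstacle to be the bookkeeping in the lower bound, specifically replacing the naive $d_j\le1$ bound on $\bar S$ by an accounting that yields the precise denominator $D-(1-\delta)D^{(k-1)}$ rather than $D-D^{(k-1)}$. The $(1-\delta)$ factor must be extracted symmetrically with how it appeared in the construction, which requires a careful analysis of how the $\delta$-slack interacts with the bottom atoms' outputs. The remaining verifications --- checking $p\in(0,1)$ and the slack DP constraints in $Q_k$ involving $\bar S$, identifying the argmin $k^{\ast}$ as a function of $(P,D,\delta)$, and reducing to the uniform-on-$S$ mechanism in the regime where the formula is nonpositive so that the outer $\max\{0,\cdot\}$ takes over --- are routine algebra.
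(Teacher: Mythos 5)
Your upper bound is sound and in fact reconstructs exactly the paper's optimal mechanism: your $Q_k=\delta I+(1-\delta)Q_k'$ coincides entry-by-entry with the channel the paper builds from the optimal diagonal $(\alpha_1^{*},\dots,\alpha_m^{*})$ (top block $\delta+(1-\delta)p$, bottom block $\delta$, off-diagonals $\tfrac{(1-\delta)(1-p)}{m-k}$ within $S$ and $\tfrac{1-\delta}{m-k+1}$ from $\bar S$ into $S$), and your computation of $p$ and of the binding constraint is correct. The problem is the lower bound, and it is not mere bookkeeping. Your claim that $e^{\epsilon}\bigl[D-(1-\delta)D^{(k-1)}\bigr]\ge(m-k)(1-D-\delta)$ holds \emph{for every} admissible $k$ is false, and structurally it must be: if it held for every $k$ you would conclude $e^{\epsilon}\ge\max_k(\cdot)$, whereas your matching upper bound is $\min_k(\cdot)$. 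Concretely, take $m=3$, $P=(0.5,0.3,0.2)$, $\delta=0$, $D=0.3$; then $k=2$ is admissible and your inequality demands $e^{\epsilon}\ge 0.7/0.1=7$, yet ordinary ternary randomized response with self-loop $0.7$ is $(P,0.3)$-valid and has $e^{\epsilon}=14/3<7$. The gap sits in the $\bar S$ step: to obtain the denominator $D-(1-\delta)D^{(k-1)}$ you need $\sum_{j\in\bar S}P_j(1-d_j)\ge(1-\delta)D^{(k-1)}$, i.e.\ $d_j\le\delta$ on the bottom block, which is a property of the \emph{optimal} mechanism, not of an arbitrary valid one.

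What is actually required is that for each valid $Q$ \emph{some} $k$, depending on $Q$, satisfies the inequality, so that only the minimum over $k$ survives as a universal lower bound; identifying that $k$ is the whole difficulty. The paper does this in two stages: first a structural reduction (Lemma \ref{lem:eq2}, proved via the coloring argument in the appendices) showing one may assume the optimizer has ordered diagonal and critical pairs $Q(j|j)=e^{\epsilon^{*}}Q(j|1)+\delta$ against the first row, which collapses the problem to an optimization over the diagonal alone; then an exact linear-programming analysis whose dual corner points $Z_{k-1,k}$ select, for each value of $\alpha_1$ in the appropriate subinterval, which $k$ is binding, after which minimizing over $\alpha_1$ yields precisely the $\min_k$. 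If you wish to avoid the LP machinery you still need some surrogate for this adaptive choice of $k$ (and for the structural reduction that justifies $d_j\le\delta$ on $\bar S$); without it the lower bound does not close.
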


Before proving Theorem \ref{thm:DP2}, we need introduce the following useful lemmas.

\begin{lem}\label{lem:02}
For $0<\delta<1$, $\epsilon^{*}_{\delta}(P,D)=0$ if and only if $D\geq(1-\delta)D^{(m-1)}.$
\end{lem}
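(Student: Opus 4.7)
The plan is to exploit the local-framework assumption $n=1$ in two ways: every pair $x\neq x'$ is a neighbor, so the $(0,\delta)$-DP condition reduces to $|Q(y|x)-Q(y|x')|\leq \delta$ for all $y,x,x'$, and the expected distortion is simply $1-\sum_x P_x Q(x|x)$. Since $D^{(m-1)}=\sum_{j=2}^{m}P_j=1-P_1$, the equivalence $\epsilon_{\delta}^{*}(P,D)=0 \iff D\geq (1-\delta)D^{(m-1)}$ reduces to showing that the minimum distortion over $(0,\delta)$-DP mechanisms is exactly $(1-\delta)(1-P_1)$.

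For the sufficient direction, I will exhibit an explicit mechanism attaining that threshold. Put $Q(1\mid 1)=1$, and for each $x\neq 1$ set $Q(1\mid x)=1-\delta$ and $Q(x\mid x)=\delta$ with all other entries zero. Checking the output column $y=1$ gives values in $\{1,1-\delta\}$ with range $\delta$; checking any $y\neq 1$ gives values in $\{0,\delta\}$ with range $\delta$. Hence $\epsilon_{\delta}(Q)=0$, and its expected distortion equals
$$1-P_1\cdot 1-\sum_{x\neq 1}P_x\,\delta \;=\; (1-\delta)(1-P_1) \;=\; (1-\delta)D^{(m-1)},$$
so every $D\geq (1-\delta)D^{(m-1)}$ is achievable with $\epsilon_\delta^*(P,D)=0$.

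For the necessary direction, I will show that any $(0,\delta)$-DP $Q$ obeys $\sum_x P_x Q(x\mid x)\leq P_1+\delta(1-P_1)$, which rearranges to distortion $\geq (1-\delta)(1-P_1)$. The workhorse is the DP inequality $Q(x\mid x)\leq Q(x\mid 1)+\delta$ applied with reference row $x'=1$. Weighting by $P_x$ and summing over $x\neq 1$, then using $P_x\leq P_2$ and $\sum_{x\neq 1}Q(x\mid 1)=1-Q(1\mid 1)$, yields
$$\sum_{x\neq 1}P_x Q(x\mid x)\;\leq\;P_2\bigl(1-Q(1\mid 1)\bigr)+\delta(1-P_1).$$
Adding $P_1 Q(1\mid 1)$ gives $(P_1-P_2)Q(1\mid 1)+P_2+\delta(1-P_1)$, which is at most $P_1+\delta(1-P_1)$ since $Q(1\mid 1)\leq 1$ and $P_1\geq P_2$.

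The main obstacle is extracting the tight additive term $\delta(1-P_1)$ rather than the slack $\delta$ that a naive row-wise bound would give; this forces the comparison to be made specifically against the most-likely row $x=1$, and the ordering $P_1\geq P_2\geq\cdots\geq P_m$ must be invoked precisely where the off-diagonal sum is converted into $P_2(1-Q(1\mid 1))$ and then absorbed by $P_1\geq P_2$. Once this sharp inequality is in hand, both implications follow immediately from the calculation of the distortion of the proposed mechanism.
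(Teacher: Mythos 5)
Your proof is correct and follows essentially the same route as the paper: the sufficiency direction uses exactly the paper's mechanism $Q_\delta$ (first row $(1,0,\dots,0)$, row $x\neq 1$ equal to $(1-\delta,0,\dots,\delta,\dots,0)$), and the necessity direction is the same comparison of each diagonal entry against row $1$ via $Q(x\mid x)\le Q(x\mid 1)+\delta$ followed by the ordering $P_1\ge P_j$. Your intermediate bound $\sum_{x\neq 1}P_xQ(x\mid 1)\le P_2(1-Q(1\mid 1))$ is only a cosmetic variant of the paper's direct estimate $\sum_j P_jQ(j\mid 1)\le P_1$; both yield the same threshold $(1-\delta)(1-P_1)=(1-\delta)D^{(m-1)}$.
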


\begin{proof}[Proof of Lemma \ref{lem:02}]
On one hand, if $\epsilon^{*}_{\delta}(P,D)=0$, then there exists a mechanism $Q$ such that $\epsilon_{\delta}(Q)=0$ and $\mathbb{E}_{P,Q}[d(x,y)]\leq D$. Thus, $|Q(j|i)-Q(j|k)|\leq\delta$ for any $i$, $j$ and $k$ in $\mathcal{X}$. Hence,
\beqs
\lefteqn{\mathbb{E}_{P,Q}[d(x,y)]=\sum_{j=1}^{m}P_{j}(1-Q(j|j))}\\
&&=P_{1}(1-Q(1|1))+\sum_{j=2}^{m}P_{j}(1-Q(j|j))\\
&&\geq P_{1}(1-Q(1|1))+\sum_{j=2}^{m}P_{j}(1-Q(j|1)-\delta)\\
&&=\sum_{j=1}^{m}P_{j}(1-Q(j|1)-\delta)+\delta P_{1}\\
&&=1-\delta+\delta P_{1}-\sum_{j=1}^{m}P_{j}Q(j|1)\\
&&\geq 1-\delta+\delta P_{1}-P_{1}=(1-\delta)D^{(m-1)}.
\eeqs
Thus $(1-\delta)D^{(m-1)}\leq\mathbb{E}_{P,Q}[d(x,y)]\leq D.$

On the other hand, if  $D\geq (1-\delta)D^{(m-1)}$, we define mechanism $Q_{\delta}$ as follows.
\beqs
Q_{\delta} 
&=&\begin{pmatrix}  
Q_{\delta}(1|1) & Q_{\delta}(2|1) & \cdots & Q_{\delta}(m|1)\\  
Q_{\delta}(1|2) & Q_{\delta}(2|2) & \cdots & Q_{\delta}(m|2)\\
\vdots & \vdots & \ddots & \vdots\\
Q_{\delta}(1|m) & Q_{\delta}(2|m) & \cdots & Q_{\delta}(m|m)
\end{pmatrix} \\
&=&\begin{pmatrix}  
1 & 0 & 0 & \cdots & 0\\  
1-\delta & \delta & 0 & \cdots & 0\\
1-\delta & 0 & \delta & \cdots & 0\\
\vdots & \vdots & \ddots & \vdots\\
1-\delta & 0 & 0 & \cdots & \delta
\end{pmatrix}.
\eeqs
Note that $\epsilon_{\delta}(Q_{\delta})=0$ and $\mathbb{E}_{P,Q_{\delta}}[d(x,y)]=(1-\delta)D^{(m-1)}\leq D.$ Thus, $\epsilon^{*}_{\delta}(P,D)=0$.
\end{proof}

\begin{lem}\label{lem:eq2}
For $0<\delta<1$, if $\epsilon^{*}_{\delta}(P,D)>0$, then there exists a mechanism $Q\in\mathcal{Q}_{\delta}^{*}(P,D)$ such that 

(1) $\delta\leq Q(m|m)\leq Q(m-1|m-1)\leq\cdots\leq Q(1|1)\leq1;$

(2) $Q(j|j)=e^{\epsilon^{*}_{\delta}(P,D)}Q(j|1)+\delta$ for $2\leq j\leq m.$
\end{lem}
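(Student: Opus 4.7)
The plan is to start from an arbitrary optimal mechanism $Q^\circ\in\mathcal{Q}^*_\delta(P,D)$ and produce a $Q$ satisfying $(1)$–$(2)$ via two successive modifications: first a discrete rearrangement that sorts the diagonal, then a local perturbation that saturates the row-$1$ differential privacy constraints. Throughout, write $\epsilon:=\epsilon^*_\delta(P,D)$, which is positive by hypothesis.

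For the sorting step I would use a permutation-conjugation swap. For indices $i<j$, define $Q^\sigma(y|x):=Q^\circ(\sigma(y)|\sigma(x))$ with $\sigma=(i\ j)$. Since $(\epsilon,\delta)$-DP is invariant under a joint relabelling of inputs and outputs, $\epsilon_\delta(Q^\sigma)=\epsilon_\delta(Q^\circ)=\epsilon$. A direct computation (using that the Hamming distance is symmetric and that only the diagonal terms are affected) gives
\begin{equation*}
\mathbb{E}_{P,Q^\sigma}[d(X,Y)] - \mathbb{E}_{P,Q^\circ}[d(X,Y)] = (P_i-P_j)\bigl(Q^\circ(i|i) - Q^\circ(j|j)\bigr),
\end{equation*}
so whenever $i<j$ with $Q^\circ(i|i)<Q^\circ(j|j)$ the swap weakly decreases distortion and $Q^\sigma$ remains in $\mathcal{Q}^*_\delta(P,D)$. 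A bubble-sort over all inverted pairs terminates in finitely many steps at an optimal mechanism with $Q(1|1)\geq Q(2|2)\geq\cdots\geq Q(m|m)$.

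For the tightness step, suppose for contradiction that the sorted optimal $Q$ violates $(2)$ at some $j_0\geq 2$, i.e.\ $Q(j_0|j_0)<e^\epsilon Q(j_0|1)+\delta$. Consider the row-preserving perturbation that subtracts $\eta$ from $Q(j_0|1)$ and adds $\eta$ to $Q(1|1)$: this strictly reduces the expected distortion by $P_1\eta$, while for sufficiently small $\eta>0$ every $(\epsilon,\delta)$-DP constraint remains valid — the target constraint itself only moves toward equality as $Q(j_0|1)$ shrinks, and any row-$1$ upper bound that would otherwise become binding can be slackened by an auxiliary micro-adjustment within row $1$. The resulting mechanism is $(\epsilon,\delta)$-DP and strictly $(P,D)$-feasible, so by convex-combining with a slightly less private mechanism one obtains an $(\epsilon',\delta)$-DP and $(P,D)$-valid mechanism with $\epsilon'<\epsilon$, contradicting the definition of $\epsilon^*_\delta(P,D)$. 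Once $(2)$ is established, $Q(j|j)=e^\epsilon Q(j|1)+\delta\geq\delta$ for every $j\geq 2$, and combined with the sorted diagonal this yields the full chain $\delta\leq Q(m|m)\leq\cdots\leq Q(1|1)\leq 1$ required by $(1)$.

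The main obstacle is the perturbation argument of the tightness step: one must exhibit a feasible direction in the full $(\epsilon,\delta)$-DP polytope from any purportedly non-tight optimum. This requires a small case analysis — in particular, when $Q(j_0|1)=0$ the perturbation has to be executed inside row $j_0$ (increase $Q(j_0|j_0)$ and decrease some $Q(k|j_0)$ with $k\neq j_0$) and one must check that the decreased entry does not cause a downstream DP violation — and one needs a compactness/continuity argument to convert the strict $(P,D)$-slack into a genuine reduction of $\epsilon$.
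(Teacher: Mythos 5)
Your sorting step is sound and is essentially the paper's transformer $\mathcal{T}_1$ (Lemma~\ref{lem:trans1}): a joint relabelling of inputs and outputs preserves $\epsilon_{\delta}$, and the rearrangement computation you give shows the Hamming distortion weakly decreases when the diagonal is ordered like $P$. The genuine gap is in the tightness step. The perturbation "decrease $Q(j_0|1)$ by $\eta$, increase $Q(1|1)$ by $\eta$" is infeasible whenever either (a) some other entry of column $j_0$ saturates its constraint against row $1$, i.e.\ $Q(j_0|k)=e^{\epsilon}Q(j_0|1)+\delta$ for some $k\neq 1$, so that shrinking $Q(j_0|1)$ immediately violates $(\epsilon,\delta)$-DP, or (b) $Q(1|1)=e^{\epsilon}Q(1|k)+\delta$ for some $k$, so that growing $Q(1|1)$ does the same. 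Both situations can occur at an optimum (the paper's Property~\ref{prop:3} even allows $Q(1|1)$ to be the top of a saturated pair), and neither can be repaired by a "micro-adjustment within row $1$": the violated inequalities compare entries of the \emph{same column across different rows}, so any repair must modify rows $k\neq 1$, which disturbs other columns and cascades. Tracking which entries sit at the top or bottom of a saturated pair and showing that mass can be rerouted without ever creating a new violation is precisely the content of the paper's coloring scheme (critical pairs colored black/red/white, row-uniformization of colors via $\mathcal{T}_2$, Lemma~\ref{lem:trans3}, and Properties~\ref{prop:1}--\ref{prop:4}); your proposal assumes this bookkeeping away. Note also that your argument, if it worked, would prove the stronger claim that \emph{every} sorted optimal mechanism satisfies $(2)$, whereas the lemma only asserts existence after normalization.

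A second gap is the closing step, "convex-combining with a slightly less private mechanism" to convert strict distortion slack into $\epsilon'<\epsilon$. The natural candidate, the zero-loss mechanism $Q_{\delta}$ of Lemma~\ref{lem:02}, does not work: it has $Q_{\delta}(j|k)=0$ for $k\neq j$, $j\geq 2$, so for those columns the critical ratio $\bigl(Q_{\lambda}(j|j)-\delta\bigr)/Q_{\lambda}(j|k)$ is invariant under mixing and $\epsilon_{\delta}$ does not drop. (Mixing with the uniform mechanism $Q_U$ does strictly reduce every ratio below $e^{\epsilon}$ when $\epsilon>0$, but you would need to supply that calculation.) The paper sidesteps the issue entirely by working inside $\tilde{\mathcal{Q}}^{*}(P,D)$, the distortion-minimal mechanisms \emph{within} the optimal set, and deriving contradictions with distortion-minimality rather than with $\epsilon$-optimality; adopting that frame would remove the need for this conversion altogether.
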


To prove Lemma \ref{lem:eq2}, we improve the coloring method introduced by Kalantari et al.\cite{Kalantari2018}.  The  details of the proof  are provided in Appendix \ref{app:cs}, \ref{app:st} and \ref{app:sm}.

\begin{proof}[Proof of Theorem \ref{thm:DP2}]
For $D\geq(1-\delta)D^{(m-1)}$, Lemma \ref{lem:02} proves the statement.

For $0<D<(1-\delta)D^{(m-1)}$, let $Q^{*}$ be a mechanism satisfying the conditions in Lemma \ref{lem:eq2}. Then $Q^{*}(j|1)$ $=$ $e^{-\epsilon_{\delta}^{*}(P,D)}(Q^{*}(j|j)-\delta)$ for $2\leq j\leq m$, which implies $$\epsilon_{\delta}^{*}(P,D)=\log\frac{\sum_{j=2}^{m}(Q^{*}(j|j)-\delta)}{1-Q^{*}(1|1)}.$$
Let  us consider the following optimization problem,
\begin{eqnarray}
\nonumber\lefteqn{d_1:=\min_{\alpha_1,\alpha_2,\cdots,\alpha_m}\log\frac{\sum_{j=2}^{m}(\alpha_j-\delta)}{1-\alpha_1}}\\
\nonumber&&\text{subject to}\\
\label{op:adp1}&&(c1)\text{ }\sum_{j=1}^{m}P_j\alpha_j\geq1-D,\\
\nonumber&&(c2)\text{ }\sum_{j=1}^{m}(\alpha_j-\delta)\geq1-\delta,\\
\nonumber&&(c3)\text{ }\alpha_{j}\leq\alpha_{j-1}\text{ for }1\leq j\leq m+1
\end{eqnarray}
where $\alpha_0=1$ and $\alpha_{m+1}=\delta$. Note that $\{Q^{*}(j|j)\}_{1\leq j\leq m}$ is a feasible solution of \eqref{op:adp1}, and hence $\epsilon_{\delta}^{*}(P,D)\geq d_1.$ Conversely, let $\{\alpha_{1}^{*},\alpha_{2}^{*},\cdots,\alpha_{m}^{*}\}$ be the optimal solution of \eqref{op:adp1}. We define the following mechanism.
$$
Q(j|i)=
\begin{cases}
\alpha_{j}^{*}&\text{ if }i=j,\\
\frac{(\alpha_{j}^{*}-\delta)(1-\alpha_{i}^{*})}{\sum_{j\neq i}(\alpha_{j}^{*}-\delta)}&\text{ if }i\neq j.
\end{cases}
$$ 
For any $j\in[1,m]$, $\sum_{i=1}^{m}Q(j|i)=1$, that is,  $Q$ is a conditional probability. The first condition in \eqref{op:adp1} implies that $Q$ is within $\mathcal{Q}(P,D)$. The other two imply that $$\epsilon_2(Q)=\max_{1\leq i\leq m}\frac{\sum_{j\neq i}(\alpha_{j}^{*}-\delta)}{1-\alpha_{i}^{*}}=\frac{\sum_{j\neq 1}(\alpha_{j}^{*}-\delta)}{1-\alpha_{1}^{*}}=d_1.$$ Thus, we have $d_1=\epsilon_{\delta}^{*}(P,D).$ Moreover, condition $(c1)$ in \eqref{op:adp1} combined with the assumption $P_j$ is non-increasing in $j$ yields that $$\alpha_{1}=\sum_{j=1}^{m}P_j\alpha_1\geq\sum_{j=1}^{m}P_j\alpha_j\geq1-D.$$ Thus, $$d_1=\mathop{\min}_{1-D\leq\alpha_{1}\leq1}\mathop{\min}_{\alpha_2,\alpha_3,\cdots,\alpha_m}\log\frac{\sum_{j=2}^{m}(\alpha_j-\delta)}{1-\alpha_1},$$ where $d_1:=0$ for $\alpha_1=1.$ For fixed $\alpha_{1}\in[1-D,1]$, consider the following optimization problem,
\begin{eqnarray}
\nonumber\lefteqn{d_2(\alpha_1):=\min_{\alpha_2,\alpha_3,\cdots,\alpha_m}\sum_{j=2}^{m}(\alpha_j-\delta)}\\
\nonumber&&\text{subject to}\\
\label{op:adp2}&&(1)\text{ }\sum_{j=2}^{m}P_j\alpha_j\geq1-D-P_{1}\alpha_{1},\\
\nonumber&&(2)\text{ }\sum_{j=2}^{m}(\alpha_j-\delta)\geq1-\alpha_{1},\\
\nonumber&&(3)\text{ }\alpha_{j}\leq\alpha_{j-1}\text{ for }2\leq j\leq m+1.
\end{eqnarray}
The optimal values of optimization problem \eqref{op:adp1} and \eqref{op:adp2} satisfy $d_1=\min_{1-D\leq\alpha_{1}\leq1}\log\frac{d_2(\alpha_1)}{1-\alpha_1}$. 
Denote $\beta_{j}:=\alpha_{j}-\alpha_{j+1}$ for $1\leq j\leq m-1$. Then problem \eqref{op:adp2} can be rewritten as the following optimization problem.
\begin{eqnarray}
\nonumber\lefteqn{d_3(\alpha_1):=\max_{\beta_1,\beta_2,\cdots,\beta_{m-1}}\sum_{j=1}^{m-1}(m-j)\beta_j}\\
\nonumber&&\text{subject to}\\
\label{op:adp3}&&(1)\text{ }\sum_{j=1}^{m-1}D^{(m-j)}\beta_j\leq\alpha_{1}-1+D,\\
\nonumber&&(2)\text{ }\sum_{j=1}^{m-1}(m-j)\beta_j\leq m(\alpha_{1}-\delta)-1+\delta,\\
\nonumber&&(3)\text{ }\sum_{j=1}^{m-1}\beta_{j}\leq\alpha_1-\delta,\\
\nonumber&&(4)\text{ }\beta_{j}\geq0\text{ for }1\leq j\leq m-1.
\end{eqnarray}
Moreover, $d_2(\alpha_1)=(m-1)(\alpha_{1}-\delta)-d_{3}(\alpha_1).$ The dual of linear program \eqref{op:adp3} is the minimization linear program as below.
\begin{eqnarray}
\nonumber\lefteqn{\tilde{d}_3(\alpha_1):=\min_{\gamma_1,\gamma_2,\gamma_{3}}(\alpha_{1}-1+D)\gamma_1+(m(\alpha_{1}-\delta)-1+\delta)\gamma_2}\\\nonumber&&\ \ \ \ \ \ \ \ \ \ \ \ \ \  +(\alpha_{1}-\delta)\gamma_3\\
\label{op:adp33}&&\text{subject to}\\
\nonumber&&(1)\text{ }D^{(j)}\gamma_1+j\gamma_2+\gamma_3\geq j\text{ for }1\leq j\leq m-1\\
\nonumber&&(2)\text{ }\gamma_1,\gamma_2,\gamma_{3}\geq0.
\end{eqnarray}
By strong duality theorem, we have ${d}_3(\alpha_1)=\tilde{d}_3(\alpha_1)$. Fix $\gamma_2\geq0$, we define the following linear program.
\begin{eqnarray}
\nonumber\lefteqn{d_4(\alpha_1,\gamma_2):=\min_{\gamma_1,\gamma_{3}}(\alpha_{1}-1+D)\gamma_1+(\alpha_1-\delta)\gamma_3}\\
\label{op:adp4}&&\text{subject to}\\
\nonumber&&(1)\text{ }D^{(j)}\gamma_1+\gamma_3\geq j(1-\gamma_2)\text{ for }1\leq j\leq m-1,\\
\nonumber&&(2)\text{ }\gamma_1,\gamma_{3}\geq0.
\end{eqnarray}
Hence $$\tilde{d}_3(\alpha_1)=\min_{\gamma_2\geq0}\left\{d_4(\alpha_1,\gamma_2)+(m(\alpha_1-\delta)-1+\delta)\gamma_2\right\}.$$

Case I. $\gamma_2\geq1$. Let us consider the optimization problem \eqref{op:adp4}, where $d_4(\alpha_1,\gamma_2)=\min_{\gamma_1,\gamma_3}(\alpha_{1}-1+D)\gamma_1+(\alpha_1-\delta)\gamma_3$ with $\gamma_1$, $\gamma_{3}\geq0$. Thus, $d_4(\alpha_1,\gamma_2)=0.$

Case II. $0\leq\gamma_2<1$. Let $l_j$ denote the line $\gamma_3=-D^{(j)}\gamma_1+j(1-\gamma_2)$ for $1\leq j\leq m-1$. Assume that $l_j$ intersects with $l_i$ at $Z_{ij}(u_{ij},v_{ij})$ for $i\neq j$. It is easy to calculate the coordinate of the intersection point, i.e.,  
$$
\begin{cases}
u_{ij}=\frac{(1-\gamma_2)(i-j)}{D^{(i)}-D^{(j)}},\\
v_{ij}=\frac{(1-\gamma_2)(jD^{(i)}-iD^{(j)})}{D^{(i)}-D^{(j)}}.
\end{cases}
$$ 
We claim that for any $j\in [1,m-1]$,
\begin{equation}\label{eq:claim}
\begin{cases}
u_{1j}>u_{2j}>\cdots>u_{j-1,j}>u_{j+1,j}>\cdots>u_{m-1,j}>0,\\
0<v_{1j}<v_{2j}<\cdots<v_{j-1,j}<v_{j+1,j}<\cdots<v_{m-1,j}.
\end{cases}
\end{equation}
Therefore, the feasible region of \eqref{op:adp4} is unbounded, and corner points are located from top left to bottom right as follows.
$$
(0,(m-1)(1-\gamma_2))\to Z_{m-1,m}\to \cdots Z_{2,3}\to  Z_{1,2}\to (\frac{1-\gamma_2}{P_{m}},0).
$$

We now prove (\ref{eq:claim}). By the definition of $D^{(k)}$, $D^{(k)}$ is increasing in $k$, and hence $u_{ij}$ is positive. If $j>i$ then $jD^{(i)}-iD^{(j)}=(j-i)D^{(i)}-i(D^{(j)}-D^{(i)})=\sum_{a=m-i+1}^{m}\sum_{b=m-j+1}^{m-i}(P_{a}-P_{b})<0$. Thus $v_{ij}$ is positive.

For $k\neq j$ and $1\leq i<k\leq m$, 
\beqs
\lefteqn{\frac{u_{kj}-u_{ij}}{1-\gamma_2}=\frac{k-j}{D^{(k)}-D^{(j)}}-\frac{i-j}{D^{(i)}-D^{(j)}}}\\
&&=\frac{(i-k)D^{(j)}+(k-i)D^{(i)}+(i-j)(D^{(i)}-D^{(k)})}{(D^{(k)}-D^{(j)})(D^{(i)}-D^{(j)})}\\
&&=\frac{(k-i)(D^{(i)}-D^{(j)})-(i-j)(D^{(k)}-D^{(i)})}{(D^{(k)}-D^{(j)})(D^{(i)}-D^{(j)})}\\
&&=
\begin{cases}
\frac{\sum_{a=m-i+1}^{m-j}\sum_{b=m-k+1}^{m-i}(P_{a}-P_{b})}{(D^{(k)}-D^{(j)})(D^{(i)}-D^{(j)})}&\text{ if }k>i>j,\\
\frac{\sum_{a=m-j+1}^{m-k}\sum_{b=m-k+1}^{m-i}(P_{b}-P_{a})}{(D^{(k)}-D^{(j)})(D^{(i)}-D^{(j)})}&\text{ if }j>k>i,\\
\frac{\sum_{a=m-j+1}^{m-i}\sum_{b=m-k+1}^{m-j}(P_{b}-P_{a})}{(D^{(k)}-D^{(j)})(D^{(i)}-D^{(j)})}&\text{ if }k>j>i.
\end{cases}
\\
&&<0.
\eeqs
Thus, $u_{ij}$ is decreasing in $i$. On the other hand, line $l_{j}$ with negative slope $-D^{(j)}$ passes point $Z_{ij}$ for any $1\leq i\leq m$, which implies $v_{ij}$ is increasing in $i$.

The slope of the objective function in \eqref{op:adp4} determines which corner point will be reached last. Hence, we divide $[1-D,\infty]$ into $m$ closed intervals. For $\frac{1-D-\delta D^{(k-1)}}{1-D^{(k-1)}}\leq\alpha_{1}\leq\frac{1-D-\delta D^{(k)}}{1-D^{(k)}}$, $k=1$, $2$, $\cdots$, $m$, we have $-D^{(k)}\leq-(\alpha_{1}-1+D)/{(\alpha_{1}-\delta)}\leq-D^{(k-1)}$. Thus, $Z_{k-1,k}$ is optimal solution of \eqref{op:adp4} and 
\beqs
\lefteqn{d_4(\alpha_1,\gamma_2)=}\\
&&\frac{1-\gamma_2}{P_{m-k+1}}\left((\alpha_1-1+D)+(\alpha_{1}-\delta)((k-1)P_{m-k+1}-D^{(k-1)})\right),
\eeqs
for $0\leq\gamma_2<1$. 
Let 
\beqs
\lefteqn{c_{k}(\alpha_1):=}\\
&&\frac{1}{P_{m-k+1}}\left((\alpha_1-1+D)+(\alpha_{1}-\delta)((k-1)P_{m-k+1}-D^{(k-1)})\right)
\eeqs
Then for $\frac{1-D-\delta D^{(k-1)}}{1-D^{(k-1)}}\leq\alpha_{1}\leq\frac{1-D-\delta D^{(k)}}{1-D^{(k)}}$ with $k$ satisfying $D>(1-\delta)D^{(k-1)}$,  we have
\beqs
\lefteqn{\tilde{d}_3(\alpha_1)=\min_{\gamma_2\geq0}\left\{d_4(\alpha_1,\gamma_2)+(m(\alpha_{1}-\delta)-1+\delta)\gamma_2\right\}}\\
&&=\min_{0\leq\gamma_2\leq1}\left\{c_{k}(\alpha_1)(1-\gamma_2)+(m(\alpha_{1}-\delta)-1+\delta)\gamma_2\right\}\\
&&=\min\left\{c_{k}(\alpha_1),m(\alpha_{1}-\delta)-1+\delta\right\}.\\
\eeqs
Thus, 
\beqs
& &{d}_3(\alpha_1)=\tilde{d}_3(\alpha_1)=\min\left\{c_{k}(\alpha_1),m(\alpha_{1}-\delta)-1+\delta\right\}\\
&\Rightarrow& d_2(\alpha_1)=\max\left\{1-\alpha_1,(m-1)(\alpha_{1}-\delta)-c_{k}(\alpha_1)\right\}\\
&\Rightarrow& d_1=\min_{\alpha_1,k}\log\max\left\{1,\frac{(m-1)(\alpha_{1}-\delta)-c_{k}(\alpha_1)}{1-\alpha_1}\right\}\\
&\Rightarrow& \epsilon_{\delta}^{*}(P,D)=\min_{D>(1-\delta)D^{(k-1)}}\log\frac{(m-k)(1-D-\delta)}{D-(1-\delta)D^{(k-1)}}.
\eeqs
In a summary, we have that
if $0 <D < (1-\delta)D^{(1)}$, then $$\epsilon^{*}_{\delta}(P,D)=\log\frac{(m-1)(1-D-\delta)}{D};$$
if $(1-\delta)D^{(k-1)} \leq D < (1-\delta)D^{(k)}$ for $2\leq k\leq m-1$, then
$$\epsilon^{*}_{\delta}(P,D)=\min_{1\leq j\leq k-1}\log\frac{(m-j)(1-D-\delta)}{D-(1-\delta)D^{(j-1)}};$$
if $(1-\delta)D^{(m-1)} \leq D \leq 1$, then
$$\epsilon^{*}_{\delta}(P,D)=0.$$
\end{proof}

Next, let us consider the case when the input distribution belongs to some set $\mathcal{P}$ of Class II. 

\begin{prop}
The approximate differential privacy-distortion with $\mathcal{P}$ of Class II is 
$$\epsilon^{*}_{\delta}(\mathcal{P},D)=\min_{1-D\leq\alpha_1\leq1}\log\frac{(m-1)(\alpha_1-\delta)-\min_{\gamma_2\geq0}d(\alpha_1,\gamma_2)}{1-\alpha_1}$$
where $d(\alpha_1,\gamma_2)$ is the optimal value of PV.
\end{prop}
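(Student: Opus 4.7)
The plan is to mimic the argument used in Theorem \ref{thm:DP2}, replacing the single validity constraint by the family of constraints coming from $\mathcal{P}$, and to show that the same structural characterization of the optimal mechanism carries over. The essential leverage is that, by the Class II assumption, every $P \in \mathcal{P}$ shares the common ordering $P_1 \ge P_2 \ge \cdots \ge P_m$, so the coloring/symmetrization arguments behind Lemma \ref{lem:eq2} are unaffected by passing from one distribution to a set. Combined with the earlier reduction $\mathcal{Q}(\mathcal{P},D) = \mathcal{Q}(\mathrm{Conv}(\mathcal{P}),D)$, we may assume $\mathcal{P}$ is convex throughout.

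First I would establish the set-valued analogue of Lemma \ref{lem:eq2}: there exists $Q^{*}\in\mathcal{Q}^{*}_{\delta}(\mathcal{P},D)$ whose diagonal entries $\alpha_j := Q^{*}(j|j)$ satisfy the monotonicity $\delta \le \alpha_m \le \cdots \le \alpha_1 \le 1$ and the tightness relation $\alpha_j = e^{\epsilon^{*}_{\delta}(\mathcal{P},D)}Q^{*}(j|1) + \delta$ for $2 \le j \le m$. The monotonicity is proved by the same swap-and-symmetrize argument as in Appendix \ref{app:cs}: any violation can be repaired by a permutation that can only decrease the expected distortion (because the coordinates of every $P \in \mathcal{P}$ are identically ordered) while preserving the $(\epsilon,\delta)$-guarantee. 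The tightness is a standard KKT-type observation: if some tightness equation were slack, one could shrink $\alpha_j$ without breaking $(\mathcal{P},D)$-validity and strictly improve the privacy loss.

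With this structural lemma in hand, the problem reduces to
\begin{equation*}
\epsilon^{*}_{\delta}(\mathcal{P},D)=\min_{\alpha_1,\dots,\alpha_m}\log\frac{\sum_{j=2}^{m}(\alpha_j-\delta)}{1-\alpha_1},
\end{equation*}
subject to $\sum_{j=1}^{m}P_j\alpha_j \ge 1-D$ for every $P\in\mathcal{P}$, together with the monotonicity and normalization constraints used in the proof of Theorem \ref{thm:DP2}. Fixing $\alpha_1 \in [1-D,1]$ and passing to the differences $\beta_j = \alpha_j-\alpha_{j+1}$, the inner minimization becomes a linear program in the $\beta_j$'s identical in form to \eqref{op:adp3} except that the single constraint $\sum_j D^{(m-j)}\beta_j \le \alpha_1 - 1 + D$ is replaced by the family of constraints indexed by $P \in \mathcal{P}$. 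Taking the Lagrangian dual, isolating the multiplier $\gamma_2$ attached to the constraint $\sum_{j=1}^{m-1}(m-j)\beta_j \le m(\alpha_1-\delta)-1+\delta$, and grouping the remaining dual variables into the quantity $d(\alpha_1,\gamma_2)$ yields, by strong LP duality, the claimed expression after substituting $d_2(\alpha_1) = (m-1)(\alpha_1-\delta) - d_3(\alpha_1)$ and minimizing over $\alpha_1$.

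The main obstacle is controlling the set-valued validity constraint: the inner LP now has infinitely many constraints parametrized by $P\in\mathcal{P}$, and its dual consequently acquires a new family of multipliers. The trick I would use is to observe that, because the objective and the remaining constraints depend on $P$ only through the monotone vector $(P_1,\dots,P_m)$, one can restrict attention to extreme points of $\mathcal{P}$; on each such extreme point the inner program reduces to the one solved in Theorem \ref{thm:DP2}, and the dual multiplier appearing in $d(\alpha_1,\gamma_2)$ then plays the role of selecting the worst-case $P$. Verifying that strong duality still applies and that the resulting dual optimum is precisely $d(\alpha_1,\gamma_2)$ (matching the inner program referenced as PV) is the delicate step, but the convexity of $\mathcal{P}$ and the linearity of all other constraints ensure that Slater's condition is met and the duality gap vanishes.
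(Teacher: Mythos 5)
Your proposal follows essentially the same route as the paper: reduce via the (set-valued extension of the) coloring/structural lemma to a linear program whose only change from Theorem \ref{thm:DP2} is that the single distortion constraint becomes a family indexed by $P\in\mathcal{P}$, then dualize so that the lone multiplier $\gamma_1$ is replaced by a family $\{\gamma_P\}_{P\in\mathcal{P}}$, with strong LP duality giving the stated formula. If anything you supply more justification than the paper does (the rearrangement argument for why the Class II common ordering lets the symmetrization steps preserve $(\mathcal{P},D)$-validity simultaneously, and the reduction to extreme points), whereas the paper simply writes down the modified dual and asserts the result.
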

\begin{proof}
Recall the optimization problem \eqref{op:adp4}, we can also obtain the approximate differential-privacy function as the solution to the following optimization problem.
\beqs
&&\min_{\gamma_{\mathcal{P}},\gamma_{3}}(\alpha_{1}-1+D)\sum_{P\in\mathcal{P}}\gamma_P+(\alpha_1-\delta)\gamma_3\\
&&\text{subject to}\\
&&(1)\text{ }\sum_{P\in\mathcal{P}}D_{P}^{(j)}\gamma_{P}+\gamma_3\geq j(1-\gamma_2)\text{, }1\leq j\leq m-1\\
&&(2)\text{ }\gamma_P,\gamma_{3}\geq0\text{, }\forall P\in\mathcal{P}
\eeqs
where $D^{(k)}_{P}=\sum_{j=m-k+1}^{m}P_{j}$ for the corresponding $P$ in $\mathcal{P}$. Thus, we have the result.

\end{proof}

By taking $\delta=0$, we get the following result of 
 differential privacy-distortion function for given input distribution, which 
 is stronger than that in  \cite{Kalantari2018}, where it only provides the optimization problem instead of the solution.  
\begin{thm}\label{thm:DP1}
Given distribution $P$ with $P_1\geq P_2 \geq \cdots \geq P_{m}$, we have
$$\epsilon^{*}_{DP}(P,D)=\max\left\{0,\min_{D>D^{(k-1)}}\log\frac{(m-k)(1-D)}{D-D^{(k-1)}}\right\}.$$ 
\end{thm}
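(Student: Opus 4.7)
The plan is to obtain Theorem \ref{thm:DP1} as the $\delta = 0$ specialization of Theorem \ref{thm:DP2}. By part (1) of Definition \ref{def:prav}, a mechanism $Q$ is $\epsilon$-differentially private if and only if it is $(\epsilon, 0)$-differentially private; consequently $\epsilon_{DP}(Q) = \epsilon_{0}(Q)$ for every $Q$, which immediately gives $\epsilon^{*}_{DP}(P, D) = \epsilon^{*}_{0}(P, D)$. Substituting $\delta = 0$ into the closed-form expression of Theorem \ref{thm:DP2} then yields the claimed formula: the threshold $(1-\delta)D^{(k-1)}$ collapses to $D^{(k-1)}$ and the factor $(1 - D - \delta)$ in the numerator becomes $(1 - D)$.

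Before invoking the reduction I would verify that none of the ingredients in the proof of Theorem \ref{thm:DP2} silently used $\delta > 0$. The checks are routine: (i) Lemma \ref{lem:02} gives the boundary $\epsilon_\delta^* = 0$ exactly on $D \geq (1-\delta) D^{(m-1)}$, which continuously specializes to $D \geq D^{(m-1)}$; (ii) in Lemma \ref{lem:eq2} the optimal mechanism $Q(j|i) = (\alpha_j^* - \delta)(1 - \alpha_i^*)/\sum_{j' \neq i}(\alpha_{j'}^* - \delta)$ remains well defined at $\delta = 0$ in the nondegenerate regime $\alpha_i^* < 1$; (iii) the constraint $\sum_j (\alpha_j - \delta) \geq 1 - \delta$ in the LP \eqref{op:adp1} degenerates at $\delta = 0$ to the automatic inequality $\sum_j \alpha_j \geq 1$, so it drops out without affecting the optimum; and (iv) the corner-point enumeration and dual analyses of \eqref{op:adp3}, \eqref{op:adp33} and \eqref{op:adp4} depend continuously on $\delta$ at $\delta = 0$, so breakpoints and optimal objective carry over without discontinuity.

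The only conceptual point that requires care is that at $\delta = 0$ the slack-type constraint (c2) of \eqref{op:adp1} is no longer informative, so one should confirm that the binding constraints in the proof of Theorem \ref{thm:DP2} are precisely the ones surviving the $\delta \to 0$ limit. A quick inspection shows that (c2) was only relevant in the regime where $\alpha_1$ approaches $1$, which corresponds exactly to the branch $\epsilon^*_{DP} = 0$ already handled by Lemma \ref{lem:02}. Hence no hidden subtlety appears and the substitution goes through cleanly. The real work of the argument therefore lies entirely in Theorem \ref{thm:DP2}, and Theorem \ref{thm:DP1} itself is effectively a one-line corollary; its novelty relative to \cite{Kalantari2018} is not a new proof technique but the fact that it upgrades their optimization program to an explicit minimum over the index $k$.
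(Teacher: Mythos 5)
Your proposal is correct and matches the paper exactly: Theorem \ref{thm:DP1} is stated there with no separate proof, introduced only by the remark ``By taking $\delta=0$'' applied to Theorem \ref{thm:DP2}. Your additional verification that Lemmas \ref{lem:02} and \ref{lem:eq2} and the LP analysis survive the specialization to $\delta=0$ (which the paper's statements technically restrict to $0<\delta<1$) is a small but genuine improvement in rigor over the paper's one-line reduction, not a different approach.
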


A recent paper \cite{Saeidian2021} studied the privacy-utility trade-off by taking maximal leakage as the privacy measure. We regain the results by coloring argument which we put in Appendix \ref{app:ml}. 








\begin{thm}\label{thm:DP4}
Let $D^{(k)}=\sum_{j=m-k+1}^{m}P_{j}$ for $0\leq k\leq m$. Then 
$$\epsilon^{*}_{ML}(P,D)=\max\left\{0,\log\left(m-k-\frac{D-D^{(k)}}{D^{(k)}-D^{(k-1)}}\right)\right\}$$
if $D^{(k-1)}<D\leq D^{(k)}$ and $1\leq k\leq m$.
\end{thm}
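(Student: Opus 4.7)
The plan is to bound $S(Q) := \sum_{y\in\mathcal{X}}\max_{x\in\mathcal{X}}Q(y|x) = \exp(\epsilon_{ML}(Q))$ and to match explicit upper and lower bounds on each interval $D^{(k-1)} < D \leq D^{(k)}$. Writing $\alpha_j := Q(j|j)$, the distortion constraint becomes $\sum_i P_i \alpha_i \geq 1-D$, and for any mechanism the two elementary inequalities $\max_x Q(j|x) \geq \alpha_j$ and $S(Q) \geq \sum_y Q(y|x_0) = 1$ jointly give
$$S(Q) \;\geq\; \max\Big\{1,\ \sum_{j=1}^{m}\alpha_j\Big\}.$$

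For the lower bound, I would then minimize $\sum_j \alpha_j$ over $\alpha \in [0,1]^m$ subject to $\sum_i P_i\alpha_i \geq 1-D$. This is a fractional knapsack, and because $P_1 \geq \cdots \geq P_m$, the greedy optimum for $D^{(k-1)} < D \leq D^{(k)}$ is $\alpha_i^{*} = 1$ for $i = 1,\ldots,m-k$, $\alpha_{m-k+1}^{*} = (D^{(k)}-D)/P_{m-k+1}$, and $\alpha_i^{*} = 0$ otherwise, with value $m - k + (D^{(k)}-D)/(D^{(k)}-D^{(k-1)})$ after using $P_{m-k+1} = D^{(k)}-D^{(k-1)}$. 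Hence $\epsilon_{ML}^{*}(P,D) \geq \max\{0,\log(m-k-(D-D^{(k)})/(D^{(k)}-D^{(k-1)}))\}$.

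For the matching upper bound I would exhibit an explicit $(P,D)$-valid mechanism $Q^{*}$. With $\lambda = (D-D^{(k-1)})/P_{m-k+1} \in (0,1]$, set $Q^{*}(i|i) = 1$ for $i = 1,\ldots,m-k$; $Q^{*}(m-k+1|m-k+1) = 1-\lambda$ and $Q^{*}(1|m-k+1) = \lambda$; and $Q^{*}(1|i) = 1$ for $i = m-k+2,\ldots,m$, with all remaining entries $0$. A direct check gives: rows sum to $1$; the expected distortion is $\lambda P_{m-k+1} + D^{(k-1)} = D$; and the column maxima are $1$ for $y = 1,\ldots,m-k$, $1-\lambda$ for $y = m-k+1$, and $0$ otherwise, so $S(Q^{*}) = m-k+1-\lambda$, coinciding with $\sum_j \alpha_j^{*}$. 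The residual regime $k=m$ (i.e.\ $D > D^{(m-1)}$, where $\sum_j\alpha_j^{*} < 1$) is handled separately by the mechanism $Q(1|i) \equiv 1$, which has $S=1$ and distortion $D^{(m-1)} \leq D$, matching the $\epsilon_{ML}^{*} = 0$ predicted by the formula in that regime.

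The main obstacle is the realizability step: while $S(Q) \geq \sum_j \alpha_j$ is immediate, one must ensure the construction does not inadvertently create an off-diagonal entry larger than the intended column maximum. In $Q^{*}$ all off-diagonal mass is routed into column $1$, whose max is already $1$, so no column maximum is inflated. Verifying that this routing is always possible precisely when $\sum_j \alpha_j^{*} \geq 1$, and switching to the uniform-into-column-$1$ mechanism otherwise, is the content of the coloring-style realizability argument invoked in Appendix \ref{app:ml}.
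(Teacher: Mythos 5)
Your proposal is correct, and it takes a genuinely different and more elementary route than the paper. The paper's proof first invokes the coloring machinery of Appendices \ref{app:cs}--\ref{app:sm} to produce Lemma \ref{lem:eq4} (an optimal mechanism whose column maxima sit on the diagonal), which is what licenses identifying $\epsilon^{*}_{ML}(P,D)$ with the value of the optimization problem \eqref{op:ml1}; it then solves that problem by passing to a linear program in the increments $\beta_j$, dualizing, and enumerating corner points. You bypass both halves: for the lower bound you use only the two inequalities $\sum_{y}\max_{x}Q(y|x)\geq\sum_{j}Q(j|j)$ and $\sum_{y}\max_{x}Q(y|x)\geq 1$, which hold for \emph{every} mechanism with no structural lemma, and a one-step fractional-knapsack exchange argument replaces the LP duality computation (explicitly, with $t=m-k$, feasibility gives $\sum_j\alpha_j\geq \frac{1-D}{P_{t+1}}+\sum_{j\leq t}\alpha_j\bigl(1-\frac{P_j}{P_{t+1}}\bigr)\geq t+\frac{D^{(k)}-D}{P_{t+1}}$, using $P_j\geq P_{t+1}$ for $j\leq t$ and $\alpha_j\leq 1$). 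For the upper bound your explicit $Q^{*}$ checks out: rows sum to one, the distortion is exactly $D^{(k-1)}+\lambda P_{m-k+1}=D$, and since all off-diagonal mass is routed into column $1$ whose maximum is already $1$, $S(Q^{*})=m-k+1-\lambda=m-k+\frac{D^{(k)}-D}{D^{(k)}-D^{(k-1)}}$, matching the lower bound; the degenerate case $k=m$ is correctly delegated to the constant mechanism of Lemma \ref{lem:04}, and for $k\leq m-1$ the value is at least $1$ so the outer $\max\{0,\cdot\}$ is the logarithmic term, as required. What the paper's route buys is reusability: the coloring and LP apparatus is shared with the approximate-differential-privacy result (Theorem \ref{thm:DP2}), where no analogue of your direct pointwise lower bound is available. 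What your route buys is brevity and transparency for the maximal-leakage case specifically, since it never needs to establish the existence of a structured optimal mechanism.
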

An equal expression of Theorem \ref{thm:DP4} showing smallest distortion with fixed leakage was given by Saeidian et al.\cite{Saeidian2021}. Assume that the prior belongs to some set $\mathcal{P}$ of Class II. Recall the definition of $D^{*}(\mathcal{P},\epsilon)$ in Definition \ref{def:pri-dis}, we have a straightforward result as follows.

\begin{cor}[\cite{Saeidian2021}]
Given $\epsilon\geq0$, 
$$D_{ML}^{*}(\mathcal{P},\epsilon)=\max_{P\in\mathcal{P}}D_{ML}^{*}(P,\epsilon)$$ where $D_{ML}^{*}(P,\epsilon)=D^{(m-k)}+P_{k+1}(k-e^{\epsilon})$ if $\log k\leq \epsilon<\log (k+1)$ and $1\leq k\leq m.$ 
\end{cor}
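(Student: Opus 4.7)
The plan is to first invert Theorem \ref{thm:DP4} to obtain $D_{ML}^{*}(P,\epsilon)$ for a single prior $P$, and then reduce the source-set version to this case by a minimax exchange that exploits the fact that $\epsilon_{ML}(Q)$ does not depend on the prior at all.

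First I would fix a single $P$ and simply solve the equation $\epsilon_{ML}^{*}(P,D)=\epsilon$ for $D$. By Theorem \ref{thm:DP4}, on the piece $D^{(k-1)}<D\leq D^{(k)}$ one has
$$\epsilon=\log\Big(m-k-\frac{D-D^{(k)}}{P_{m-k+1}}\Big),$$
using $D^{(k)}-D^{(k-1)}=P_{m-k+1}$. As $D$ traverses $(D^{(k-1)},D^{(k)}]$, $\epsilon$ sweeps out the interval $[\log(m-k),\log(m-k+1))$; inverting gives $D=D^{(k)}+P_{m-k+1}(m-k-e^{\epsilon})$. The substitution $j:=m-k$ (so $P_{m-k+1}=P_{j+1}$ and $D^{(k)}=D^{(m-j)}$) converts the $\epsilon$-range to $[\log j,\log(j+1))$ and the formula to $D=D^{(m-j)}+P_{j+1}(j-e^{\epsilon})$, which after relabeling $j\mapsto k$ is exactly the claimed expression. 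A short consistency check at the breakpoints $\epsilon=\log k$ (using $D^{(m-k)}-D^{(m-k-1)}=P_{k+1}$) confirms that adjacent pieces agree, so the piecewise definition is well posed on $[0,\log m)$; for $\epsilon\geq\log m$ the identity mechanism is feasible and $D_{ML}^{*}(P,\epsilon)=0$.

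To handle an arbitrary source set, I would observe that $\epsilon_{ML}(Q)=\log\sum_{y}\max_{x}Q(y|x)$ depends only on $Q$, so the feasible set $\mathcal{Q}_{\epsilon}:=\{Q:\epsilon_{ML}(Q)\leq\epsilon\}$ is a convex, closed, bounded subset of the finite-dimensional conditional-probability polytope, hence compact. The objective $(P,Q)\mapsto\mathbb{E}_{P,Q}[d(X,Y)]$ is bilinear, and $\mathcal{P}$ has already been assumed convex. Sion's minimax theorem then gives
$$D_{ML}^{*}(\mathcal{P},\epsilon)=\min_{Q\in\mathcal{Q}_{\epsilon}}\max_{P\in\mathcal{P}}\mathbb{E}_{P,Q}[d(X,Y)]=\max_{P\in\mathcal{P}}\min_{Q\in\mathcal{Q}_{\epsilon}}\mathbb{E}_{P,Q}[d(X,Y)]=\max_{P\in\mathcal{P}}D_{ML}^{*}(P,\epsilon),$$
which is precisely the desired identity.

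The main obstacle is simply verifying the minimax exchange: convexity of $\mathcal{Q}_{\epsilon}$ follows because $Q\mapsto\sum_{y}\max_{x}Q(y|x)$ is a sum of pointwise maxima of linear functionals and is therefore convex; bilinearity of the distortion supplies all quasi-convexity/quasi-concavity conditions automatically; and compactness of $\mathcal{Q}_{\epsilon}$ is automatic in finite dimensions. The only residual subtlety is compactness of $\mathcal{P}$, which is implicit in the Class II setup (a closed subset of a finite simplex), and absent which $\max_{P\in\mathcal{P}}$ should be read as a supremum. Once these regularity points are noted, no further calculation is needed and the corollary follows.
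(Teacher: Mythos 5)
Your proof is correct, and it is worth noting that the paper itself offers no argument here beyond calling the corollary a ``straightforward result'' of Theorem \ref{thm:DP4} and Definition \ref{def:pri-dis}, so yours is the more complete treatment. The single-prior inversion is exactly right: on $(D^{(k-1)},D^{(k)}]$ the map $D\mapsto \log\bigl(m-k-\tfrac{D-D^{(k)}}{P_{m-k+1}}\bigr)$ is a continuous, strictly decreasing bijection onto $[\log(m-k),\log(m-k+1))$, the substitution $j=m-k$ recovers the stated formula, and your breakpoint check confirms the pieces glue (you also sensibly patch the $k=m$ edge case, where the paper's formula would invoke the undefined $P_{m+1}$). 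For the source-set step, your appeal to Sion's minimax theorem is valid --- $\mathcal{Q}_{\epsilon}$ is a compact convex sublevel set of the convex function $Q\mapsto\sum_{y}\max_{x}Q(y|x)$, the distortion is bilinear, and $\mathcal{P}$ is convex --- but it is heavier machinery than the situation requires. The intended ``straightforward'' route exploits the Class~II structure: every $P\in\mathcal{P}$ shares the ordering $P_1\geq\cdots\geq P_m$, and the distortion-optimal mechanism under the constraint $\sum_j\alpha_j\leq e^{\epsilon}$ (with $\alpha_j=Q(j|j)$ as in Lemma \ref{lem:eq4}) is the greedy choice $\alpha_1=\cdots=\alpha_k=1$, $\alpha_{k+1}=e^{\epsilon}-k$, $\alpha_j=0$ otherwise, which depends only on that common ordering and not on the particular $P$. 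A single mechanism therefore attains $D^{*}_{ML}(P,\epsilon)$ for all $P\in\mathcal{P}$ simultaneously, giving $\min_{Q}\max_{P}\leq\max_{P}\min_{Q}$ directly; combined with the trivial inequality $\min_{Q}\max_{P}\geq\max_{P}\min_{Q}$ this yields the identity with no minimax theorem. What your route buys in exchange is generality --- it would survive even if the optimizer were prior-dependent --- at the cost of the compactness and convexity bookkeeping you correctly flag.
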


\subsection{Illustration of Local Results}
In this subsection, we demonstrate the results in local privacy framework by figures. Our results are first applied to binary datasets. In particular, each attribute associated with individuals is the answer of some binary classification problem, that is a yes/no question or a setting with $0$-$1$ outcome. In general, $\mathcal{X}=\{0,1\}$ and $m=2$. For local privacy mechanism, we compare the privacy-distortion functions with input distribution unknown by taking $m=2$ into Theorem \ref{thm:t1,c1}.
\begin{figure}
    \centering
    \includegraphics[width=0.5\textwidth]{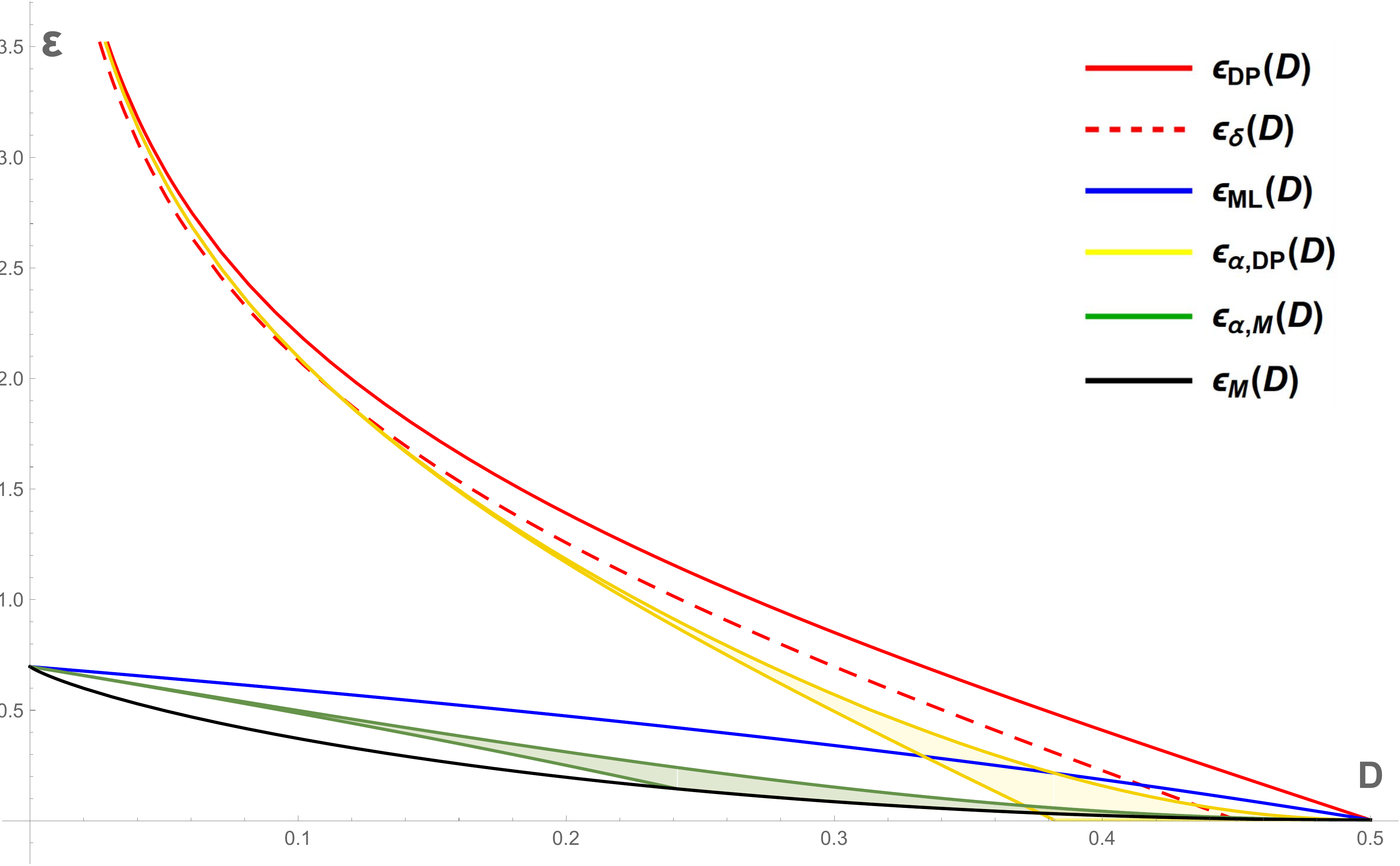}
    \caption{Privacy-Distortion trade-off curves for differential privacy ($\epsilon_{DP}(D)$), $0.1$-approximate differential privacy ($\epsilon_{0.1}(D)$), maximal leakage ($\epsilon_{ML}(D)$),  and mutual information ($\epsilon_{M}(D)$); Privacy-Distortion trade-off ranges for R\'{e}nyi differential privacy ($\epsilon_{2,DP}(D)$) and Sibson mutual information ($\epsilon_{2,M}(D)$) with $\alpha=2$.}
    \label{fig:1c1}
\end{figure}
It is shown in Fig. \ref{fig:1c1} that high accuracy requirement (i.e., $D<0.5$) leads to the phenomena-the privacy costs for divergence-based measures (i.e., $\epsilon_{DP}$, $\epsilon_{\delta}$ and $\epsilon_{\alpha,DP}$) are numerically higher than the ones for mutual information (i.e., $\epsilon_{ML}$, $\epsilon_{\alpha,M}$ and $\epsilon_{M}$). 
This is reasonable since divergence-based privacy pays more attention to individuals than the collectives. Notably, the  randomized response mechanism $Q_D$ is always an optimal choice.

We now turn to the privacy-distortion functions with the knowledge of prior distribution. To demonstrate the power of Theorem \ref{thm:DP2}, we set a certain example with $m=4$ and prior distribution $P=(P_1,P_2,P_3,P_4)=(0.4,0.3,0.2,0.1)$.  
\begin{figure}
    \centering
    \includegraphics[width=0.5\textwidth]{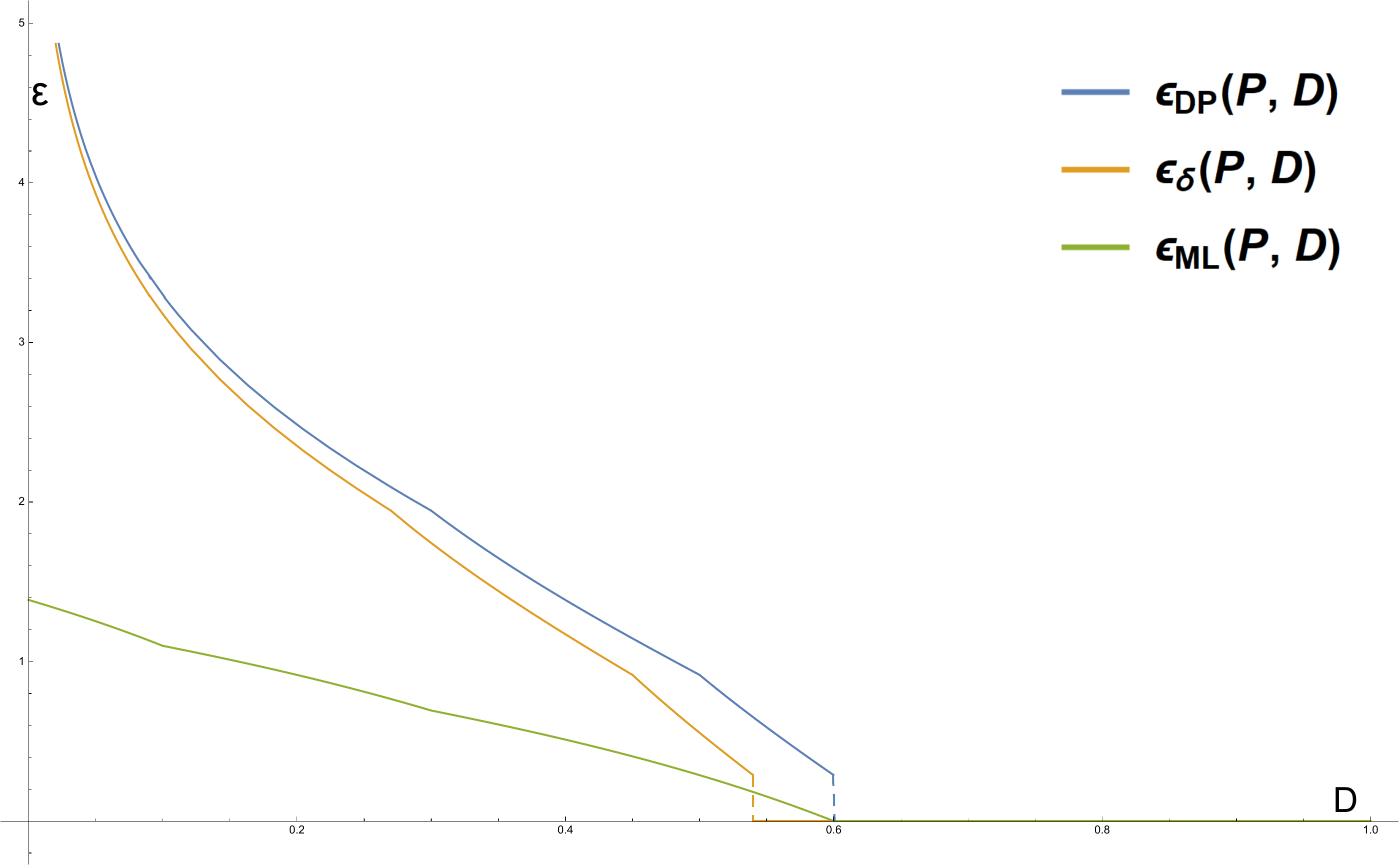}
    \caption{Privacy-Distortion trade-off curves for differential privacy ($\epsilon_{DP}(P,D)$), $0.1$-approximate differential privacy ($\epsilon_{0.1}(P,D)$), and maximal leakage ($\epsilon_{ML}(P,D)$).}
    \label{fig:1c2}
\end{figure}
As shown in Fig. \ref{fig:1c2}, maximal leakage is the most relaxed privacy notion among these three popular measures. Notably, the curves for differential privacy and approximate differential privacy jump to zero when $D=0.6$ and $0.54$, respectively. This is because the special mechanism $Q_{\delta}$ mentioned in the proof of Lemma \ref{lem:02} is $(P,D)$-valid if and only if $D\geq (1-\delta)(1-P_1).$


\subsection{Parallel Composition}\label{sec:5}

To create a database consisting of data with local privacy guarantees, the data server usually collects the i.i.d. samples from all possible individual uploads (see Fig. \ref{fig:composition}). In other words, the input space $\mathcal{X}^n$ contains the elements which are independently and identically distributed samples from $\mathcal{X}=\{1,2,\cdots,m\}$ with full-support distribution $P_X$, i.e., $$P_X(\boldsymbol{x})=P_X(x_1,\cdots,x_n)=\prod_{j=1}^{n}P_{x_j}.$$ 
Then the global privacy mechanisms on $\mathcal{X}^n$ by $Q$ on $\mathcal{X}$ is given as follows. 
$$Q_{Y|X}(\boldsymbol{y}|\boldsymbol{x})=Q_{Y|X}(y_1,\cdots,y_n|x_1,\cdots,x_n)=\prod_{j=1}^{n}Q_{Y|X}(y_j|x_j).$$
Due to the independence of sampling, we have $$\mathbb{E}_{P,Q}[d(\boldsymbol{x},\boldsymbol{y})]=n\mathbb{E}_{P,Q}[d(x,y)].$$

\begin{figure}
    \centering
    \includegraphics[width=0.5\textwidth]{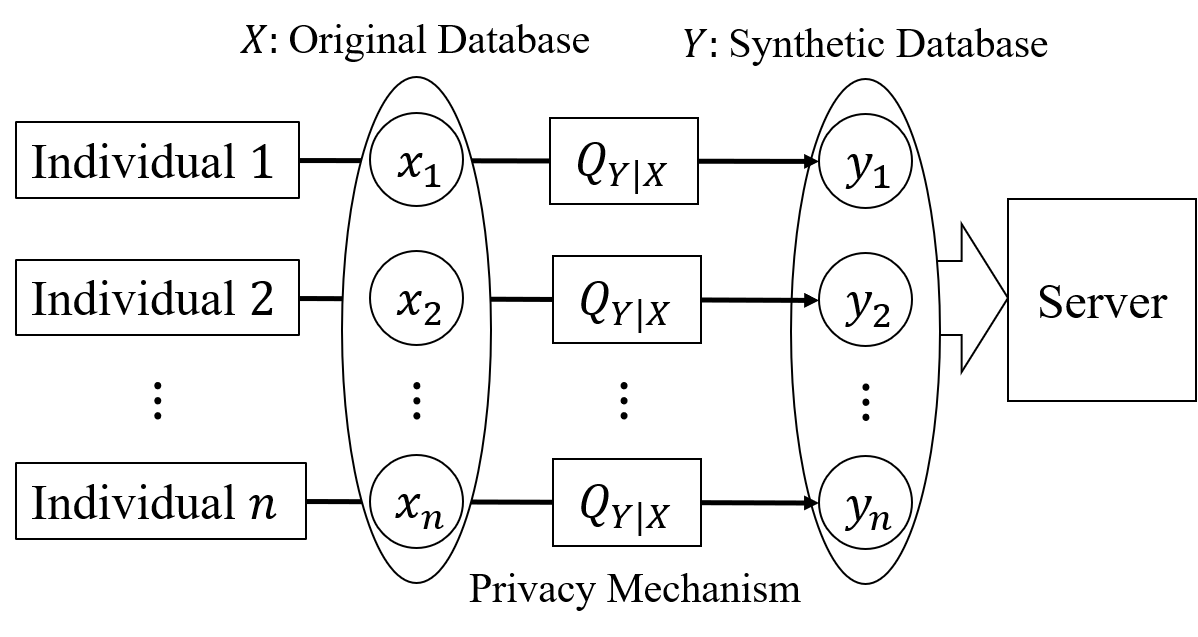}
    \caption{Parallel Composition: the data server collects the i.i.d. samples from all possible individual uploads with local privacy guarantees.}
    \label{fig:composition}
\end{figure}

\begin{lem}\label{lem:basiccomp}
Let $\hat\epsilon^{*}(\mathcal{P},D)$ be the privacy-distortion function over $\mathcal{X}^n$. Then

(1) $\hat\epsilon^{*}(\mathcal{P},D)=\epsilon^{*}(\mathcal{P},D/n)$ for differential privacy and R\'{e}nyi differential privacy.

(2) $\hat\epsilon^{*}(\mathcal{P},D)=n\epsilon^{*}(\mathcal{P},D/n)$ for maximal information, maximal leakage, Sibson mutual information and mutual information.
\end{lem}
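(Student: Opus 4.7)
The plan is to exploit the one-to-one correspondence between single-sample mechanisms $Q$ on $\mathcal{X}$ and the product mechanisms $Q^{\otimes n}$ on $\mathcal{X}^n$ that are actually being considered here. The distortion identity $\mathbb{E}_{P,Q}[d(\boldsymbol{x},\boldsymbol{y})]=n\,\mathbb{E}_{P,Q}[d(x,y)]$ stated just before the lemma shows that $Q^{\otimes n}$ is $(\mathcal{P},D)$-valid on $\mathcal{X}^n$ exactly when $Q$ is $(\mathcal{P},D/n)$-valid on $\mathcal{X}$, and in particular this identification preserves feasibility in both directions. Both inequalities in each equality will therefore follow as soon as one verifies how the privacy loss of $Q^{\otimes n}$ compares with that of $Q$ for each of the six notions mentioned.

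For part (1), I would observe that any pair of neighbors $\boldsymbol{x}\sim\boldsymbol{x}'$ in $\mathcal{X}^n$ differs in exactly one coordinate, say the $k$-th, so $Q^{\otimes n}(\boldsymbol{y}\vert\boldsymbol{x})/Q^{\otimes n}(\boldsymbol{y}\vert\boldsymbol{x}')=Q(y_k\vert x_k)/Q(y_k\vert x_k')$. Taking the worst case over $(\boldsymbol{x},\boldsymbol{x}',\boldsymbol{y})$ reduces to the worst case over $(x,x',y)$, so $\epsilon_{DP}(Q^{\otimes n})=\epsilon_{DP}(Q)$. The same one-coordinate reduction applies to the R\'enyi divergence between two product measures that agree in all but one factor, giving $\epsilon_{\alpha,DP}(Q^{\otimes n})=\epsilon_{\alpha,DP}(Q)$. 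Combined with the distortion correspondence above, this yields $\hat\epsilon^{*}(\mathcal{P},D)=\epsilon^{*}(\mathcal{P},D/n)$ for both DP and R\'enyi DP.

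For part (2), the four remaining measures are each additive under an i.i.d. product channel. Explicitly, $\sum_{\boldsymbol{y}}\max_{\boldsymbol{x}}\prod_{j}Q(y_j\vert x_j)=\prod_{j}\bigl(\sum_{y}\max_{x}Q(y\vert x)\bigr)$ gives $\epsilon_{ML}(Q^{\otimes n})=n\,\epsilon_{ML}(Q)$; for Sibson mutual information an analogous factorisation of $\sum_{\boldsymbol{y}}\bigl(\sum_{\boldsymbol{x}} P^{\otimes n}(\boldsymbol{x})Q^{\otimes n}(\boldsymbol{y}\vert\boldsymbol{x})^{\alpha}\bigr)^{1/\alpha}$ into $\prod_{j}\sum_{y_j}\bigl(\sum_{x_j}P(x_j)Q(y_j\vert x_j)^{\alpha}\bigr)^{1/\alpha}$ (exploiting the i.i.d. structure of $P^{\otimes n}$) yields $\epsilon_{\alpha,M}(Q^{\otimes n})=n\,\epsilon_{\alpha,M}(Q)$; the standard chain-rule identity for mutual information over memoryless channels with i.i.d. input gives $\epsilon_{M}(Q^{\otimes n})=n\,\epsilon_{M}(Q)$; and the same product-of-marginals argument as for Sibson MI handles maximal information. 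Combined with the distortion correspondence, this yields $\hat\epsilon^{*}(\mathcal{P},D)=n\,\epsilon^{*}(\mathcal{P},D/n)$.

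The only slightly delicate point is the exchange of sums, maxima and products in the Sibson and maximal-information cases, and this is precisely where the i.i.d. hypothesis on $P$ is needed: without it neither the numerator nor the denominator of the relevant ratios factorises across coordinates, and the additivity of the privacy loss over products would fail.
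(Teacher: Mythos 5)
Your proposal is correct and follows essentially the same route as the paper: translate the distortion constraint via $\mathbb{E}_{P,Q}[d(\boldsymbol{x},\boldsymbol{y})]=n\,\mathbb{E}_{P,Q}[d(x,y)]$ and then show that the privacy loss of the product mechanism is unchanged (for DP and R\'enyi DP, by the one-coordinate reduction for neighbors) or scales by $n$ (for the mutual-information-type notions, by factorisation over coordinates). The only cosmetic difference is that the paper verifies just maximal information and the two $\alpha$-type notions directly and obtains DP, maximal leakage and mutual information as limits in $\alpha$, whereas you check each notion directly; both arguments rest on the same factorisations.
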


\begin{proof}[Proof of Lemma \ref{lem:basiccomp}]
By the relations among these privacy notions (See Lemma \ref{remark:implication} in Appendix \ref{app:bpp}), it suffices to prove the cases for maximal information and the $\alpha$-type notions. Let $Q$ be a mechanism from $\mathcal{X}^n$ to $\mathcal{Y}^n$.

1)(Maximal information)
\beqs
\lefteqn{\hat\epsilon_{MI}(Q)=\max_{\boldsymbol{x},\boldsymbol{y}}\log\frac{Q(\boldsymbol{y}|\boldsymbol{x})}{\sum_{\boldsymbol{x}}P(\boldsymbol{x})Q(\boldsymbol{y}|\boldsymbol{x})}}\\
&&=\max_{\substack{x_1,\cdots,x_n\\y_1,\cdots,y_n}}\log\frac{\prod_{j=1}^{n}Q(y_j|x_j)}{\sum_{x_1,\cdots,x_n}\prod_{j=1}^{n}P(x_j)Q(y_j|x_j)}\\
&&=\max_{\substack{x_1,\cdots,x_n\\y_1,\cdots,y_n}}\log\frac{\prod_{j=1}^{n}Q(y_j|x_j)}{\prod_{j=1}^{n}\sum_{x_j}P(x_j)Q(y_j|x_j)}\\
&&=\sum_{j=1}^{n}\max_{x_j,y_j}\log\frac{Q(y_j|x_j)}{\sum_{x_j}P(x_j)Q(y_j|x_j)}\\
&&=n\epsilon_{MI}(Q).
\eeqs

2)(R\'{e}nyi differential privacy)
\beqs
&&\exp((\alpha-1)\hat\epsilon_{\alpha,DP}(Q))\\
&&=\max_{d(\boldsymbol{x},\boldsymbol{x}^{\prime})=1}\sum_{\boldsymbol{y}}Q^{\alpha}(\boldsymbol{y}|\boldsymbol{x})Q^{1-\alpha}(\boldsymbol{y}|\boldsymbol{x}^{\prime})\\
&&=\max_{\substack{x_1,\cdots,x_{n-1}\\x\neq x^{\prime}}}\sum_{y_1,\cdots,y_n}\prod_{j=1}^{n}Q^{\alpha}(y_{n}|x)Q^{1-\alpha}(y_n|x^{\prime})\\
&&=\max_{\substack{x_1,\cdots,x_{n-1}\\x\neq x^{\prime}}}\prod_{j=1}^{n-1}\sum_{y_j}Q(y_j|x_j)\sum_{y}Q^{\alpha}(y|x)Q^{1-\alpha}(y|x^{\prime})\\
&&=\exp((\alpha-1)\epsilon_{\alpha,DP}(Q)).
\eeqs

3)(Sibson mutual information)
\beqs
\lefteqn{\hat\epsilon_{\alpha,M}(Q)=\frac{\alpha}{\alpha-1}\log\sum_{\boldsymbol{y}}\left(\sum_{\boldsymbol{x}}P(\boldsymbol{x})Q^{\alpha}(\boldsymbol{y}|\boldsymbol{x})\right)^{1/\alpha}}\\
&&=\frac{\alpha}{\alpha-1}\log\sum_{y_1,\cdots,y_n}\left(\sum_{x_1,\cdots,x_n}\prod_{j=1}^{n}P(x_{j})Q^{\alpha}(y_j|x_j)\right)^{1/\alpha}\\
&&=\frac{\alpha}{\alpha-1}\log\prod_{j=1}^{n}\sum_{y_j}\left( \sum_{x_j}P(x_j)Q^{\alpha}(y_j|x_j)\right)^{1/\alpha}\\
&&=n\epsilon_{\alpha,M}(Q).
\eeqs

Since $\mathbb{E}_{P,Q}[d(\boldsymbol{x},\boldsymbol{y})]=n\mathbb{E}_{P,Q}[d(x,y)]$, the results above complete the proof.
\end{proof}

Based on
the additivity of composition and Theorem \ref{thm:t1,c1} , we get the following results  directly.

\begin{thm}\label{thm:bn,c1}
For any $\mathcal{P}$ of Class I, we have

(1) (Differential privacy)
$$\hat\epsilon^{*}_{DP}(\mathcal{P},D)=
\begin{cases}
\log\frac{(m-1)(n-D)}{D}, & 0<D <\frac{n(m-1)}{m},\\
0, & \frac{n(m-1)}{m} \leq D \leq 1;
\end{cases}$$

(2) (Approximate differential privacy)
\beqs
&&\max\left\{0,\log(m-1)\frac{(n(1-\delta m^{n-1})-D)}{D}\right\}\\&\leq&\hat\epsilon^{*}_{\delta}(\mathcal{P},D)\\
&\leq&\max\left\{0,\log(m-1)\frac{n-D}{D}(1-\delta(1-\frac{D}{n})^{-n})\right\};
\eeqs

(3) (Maximal information)
$$
\begin{cases}
n\log m(1-\frac{D}{n})\leq\hat\epsilon^{*}_{MI}(\mathcal{P},D)\leq \hat\epsilon_{MI}(Q_D), & 0<D <\frac{n(m-1)}{m},\\
\hat\epsilon^{*}_{MI}(\mathcal{P},D)=0, & \frac{n(m-1)}{m} \leq D \leq 1
\end{cases}$$
where $\hat\epsilon_{MI}(Q_D)=n\log\frac{n-D}{P^{*}_{\mathcal{P}}(n-D-\frac{D}{m-1})+\frac{D}{m-1}}$ and $P^{*}_{\mathcal{P}}=\min_{1\leq i\leq m}$ $\{P_{i}|P\in\mathcal{P}\};$

(4) (Maximal leakage)
$$\hat\epsilon^{*}_{ML}(\mathcal{P},D)=
\begin{cases}
n \log m(1-\frac{D}{n}), & 0<D <\frac{n(m-1)}{m},\\
0, & \frac{n(m-1)}{m} \leq D \leq 1;
\end{cases}$$

(5) (R\'{e}nyi differential privacy)
\beqs
&&\max\left\{0,\frac{n(m-1)(1-\frac{D}{n})^{\alpha/(\alpha-1)}}{D}\right\}\\&\leq&\hat\epsilon^{*}_{\alpha,DP}(\mathcal{P},D)\\&\leq&\max\left\{0,\hat\epsilon_{\alpha,DP}(Q_D)\right\}
\eeqs
where \beqs
\hat\epsilon_{\alpha,DP}(Q_D)=\frac{1}{\alpha-1}\log\frac{D}{n(m-1)}\Big(m-2+\left(\frac{(n-D)(m-1)}{D}\right)^{\alpha}&&\\
+\left(\frac{(n-D)(m-1)}{D}\right)^{1-\alpha}\Big);&&
\eeqs

(6) (Sibson mutual information)
\beqs
&&\max\left\{\hat\epsilon^{*}_{M}(\mathcal{P},D),n\log m(1-\frac{D}{n})^{\frac{\alpha}{\alpha-1}}\right\}\\&\leq&\hat\epsilon^{*}_{\alpha,M}(\mathcal{P},D)\\&\leq&\max\left\{0,\hat\epsilon_{\alpha,M}(Q_D)\right\}
\eeqs
where $$\hat\epsilon_{\alpha,M}(Q_D)=n\log m \left((1-\frac{D}{n})^{\alpha}+(m-1)\left(\frac{D}{n(m-1)}\right)^{\alpha}\right)^{\frac{1}{\alpha-1}};$$

(7) (Mutual information)
$$\tilde{\epsilon}^{*}_{M}(\mathcal{P},D)=\max\left\{0,n\log m(1-\frac{D}{n})+D\log\left(\frac{D}{(m-1)(n-D)}\right)\right\}.$$

\end{thm}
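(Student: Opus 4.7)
The plan is to reduce everything to the single-letter case by combining Lemma \ref{lem:basiccomp} with Proposition \ref{thm:t1,c1}. Lemma \ref{lem:basiccomp} establishes the scaling identities $\hat\epsilon^{*}(\mathcal{P},D)=\epsilon^{*}(\mathcal{P},D/n)$ for differential privacy and R\'enyi differential privacy, and $\hat\epsilon^{*}(\mathcal{P},D)=n\,\epsilon^{*}(\mathcal{P},D/n)$ for maximal information, maximal leakage, Sibson mutual information, and mutual information. Since Section \ref{sec:5} assumes the input distribution is of product form $P^{\otimes n}$ on $\mathcal{X}^n$ and the mechanisms under consideration are tensor products $Q^{\otimes n}$, any $(P^{\otimes n},D)$-valid product mechanism corresponds exactly to a $(P,D/n)$-valid single-letter mechanism via $\mathbb{E}_{P,Q}[d(\boldsymbol{x},\boldsymbol{y})]=n\mathbb{E}_{P,Q}[d(x,y)]$. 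For sets $\mathcal{P}$ of Class I, $\mathcal{P}^{\otimes n}$ is again of Class I, so Proposition \ref{thm:t1,c1} applies coordinatewise.

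With the reduction in hand, each of the seven items is obtained by substituting $D\mapsto D/n$ into the corresponding item of Proposition \ref{thm:t1,c1} and multiplying by the correct scaling factor. For example, item (1) uses $\log\frac{(m-1)(1-D/n)}{D/n}=\log\frac{(m-1)(n-D)}{D}$; item (4) multiplies $\log m(1-D/n)$ by $n$; and the threshold $(m-1)/m$ that separates the trivial regime in Proposition \ref{thm:t1,c1} becomes $n(m-1)/m$ after the change of variables, which matches the statement. Items (3), (5), (6) and (7) follow in precisely the same way, with the upper-bound expressions $\hat\epsilon_{MI}(Q_D)$, $\hat\epsilon_{\alpha,DP}(Q_D)$, and $\hat\epsilon_{\alpha,M}(Q_D)$ inheriting their form from the corresponding single-letter expressions $\epsilon_\star(Q_{D/n})$ raised to the $n$th power inside the logarithm.

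The main obstacle is item (2) for approximate differential privacy, since $(\epsilon,\delta)$-DP does not enjoy a clean multiplicative composition formula of the type used in Lemma \ref{lem:basiccomp} (the $\delta$ parameter interacts nontrivially under product composition). Luckily the theorem only asks for matching upper and lower bounds, not an exact closed form, and both bounds are already produced in Theorem \ref{thm:tn,c2} for an arbitrary source set. Specializing Theorem \ref{thm:tn,c2}(2) to Class I, where $\arg\max_{P\in\mathcal{P}}\min_x P(x)=P_U$ forces $\theta^{*}=1$, yields exactly the claimed bounds for $\hat\epsilon_\delta^{*}(\mathcal{P},D)$. This completes the proof by routing (2) through Theorem \ref{thm:tn,c2} and the remaining six items through the composition lemma together with Proposition \ref{thm:t1,c1}.
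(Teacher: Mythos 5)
Your reduction for items (1) and (3)--(7) is exactly the paper's argument: the paper proves the theorem by citing the additivity/scaling identities of Lemma \ref{lem:basiccomp} together with the single-letter Proposition \ref{thm:t1,c1}, just as you do, so nothing to add there beyond the caveat that ``$\mathcal{P}^{\otimes n}$ is again of Class I'' really uses the paper's standing convention that $\mathcal{P}$ is convex (so $P_U\in\mathcal{P}$ itself, hence $P_U^{\otimes n}$ lies in the product source set). Where you genuinely diverge is item (2). The paper does \emph{not} route it through Theorem \ref{thm:tn,c2}; it computes directly that for a product mechanism $\hat{\epsilon}_{\delta}(Q^{\otimes n})=\epsilon_{\delta Q_{*}^{1-n}}(Q)$ with $Q_{*}=\max_{i,j}Q(j|i)$, which yields the sandwich $\epsilon^{*}_{\delta m^{n-1}}(\mathcal{P},D/n)\leq\hat{\epsilon}^{*}_{\delta}(\mathcal{P},D)\leq\epsilon^{*}_{\delta}(\mathcal{P},D/n)$ and then feeds into the closed form of Proposition \ref{thm:t1,c1}(2). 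Your alternative --- specializing Theorem \ref{thm:tn,c2}(2) to $\theta^{*}=1$ --- reproduces the stated bounds verbatim, and in fact more faithfully than the paper's own sandwich does: the paper's upper bound $\epsilon^{*}_{\delta}(\mathcal{P},D/n)=\log(m-1)\frac{n-D-n\delta}{D}$ does not coincide with the displayed $\log(m-1)\frac{n-D}{D}\bigl(1-\delta(1-\frac{D}{n})^{-n}\bigr)$ for $n>1$, whereas evaluating $\epsilon_{\delta}(Q_D)$ on $\mathcal{X}^n$ as in Theorem \ref{thm:tn,c2} does. The one step you should make explicit is why the global-framework bounds transfer to $\hat\epsilon^{*}_{\delta}$, which is by definition a minimum over the \emph{restricted} class of product mechanisms: the lower bound transfers trivially because shrinking the feasible set can only increase the minimum, and the upper bound transfers because the witness $Q_D$ on $\mathcal{X}^n$ factorizes as $Q_{D/n}^{\otimes n}$ and is therefore itself an admissible product mechanism. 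With that sentence added, your argument for (2) is complete and arguably cleaner than the paper's.
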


\begin{proof}[Proof of (2) in Theorem \ref{thm:bn,c1}]
For given $\delta\in(0,1)$,
\beqs
\lefteqn{\hat{\epsilon}_{\delta}(Q)=\max_{\boldsymbol{y},d(\boldsymbol{x},\boldsymbol{x}^{\prime})=1}\log\frac{Q(\boldsymbol{y}|\boldsymbol{x})}{Q(\boldsymbol{y}|\boldsymbol{x}^{\prime})}}\\
&&=\max_{y,x\neq x^{\prime}}\log\frac{Q(y|x)-\delta Q_{*}^{1-n}}{Q(y|x^{\prime})}=\epsilon_{\delta Q_{*}^{1-n}}(Q)
\eeqs
where $Q_{*}=\max_{i,j}Q(j|i).$ Thus, $$\epsilon^{*}_{\delta m^{n-1}}(\mathcal{P},\frac{D}{n})\leq \hat{\epsilon}^{*}_{\delta }(\mathcal{P},D)\leq \epsilon^{*}_{\delta}(\mathcal{P},\frac{D}{n}).$$
\end{proof}


\begin{thm}let $P$ be a given prior distribution such that $P_1\geq P_2\geq \cdots \geq P_m$. 
Let $D^{(k)}=\sum_{j=m-k+1}^{m}P_{j}$ for $0\leq k\leq m.$ Then

(1) (Approximate differential privacy) $$\epsilon_{\delta}^{*}(P,nD)=\max\left\{0,\min_{D>(1-\delta)D^{(k-1)}}\log\frac{(m-k)(1-D-\delta)}{D-(1-\delta)D^{(k-1)}}\right\}.$$

(2) (Maximal leakage)
$$\epsilon^{*}_{ML}(P,nD)=\max\left\{0,n\log\left(m-k-\frac{D-D^{(k)}}{D^{(k)}-D^{(k-1)}}\right)\right\}$$
if $D^{(k-1)}<D\leq D^{(k)}$ and $1\leq k\leq m$.
\end{thm}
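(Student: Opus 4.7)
Both parts reduce to the single-shot results already proved in the paper (Theorem~\ref{thm:DP2} for part~(1) and Theorem~\ref{thm:DP4} for part~(2)), exploiting that on $\mathcal{X}^n$ the Hamming distortion decomposes additively as $d(\boldsymbol{x},\boldsymbol{y})=\sum_{i=1}^{n}[x_i\neq y_i]$ and the prior factors as $P^{\otimes n}$. Consequently a single-shot optimizer tensorizes into an $(P^{\otimes n},nD)$-valid candidate on the product space, and the goal is to show that this tensorized mechanism is in fact optimal.

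\textbf{Part~(2), maximal leakage.} This is essentially immediate from the composition identity. Lemma~\ref{lem:basiccomp}(2) gives $\hat{\epsilon}_{ML}^{*}(P,D')=n\,\epsilon_{ML}^{*}(P,D'/n)$ for every feasible $D'$. Setting $D'=nD$ and substituting the single-shot closed form of Theorem~\ref{thm:DP4} for $\epsilon_{ML}^{*}(P,D)$, with $k$ chosen so that $D^{(k-1)}<D\le D^{(k)}$, produces exactly the stated expression.

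\textbf{Part~(1), approximate differential privacy; upper bound.} Lemma~\ref{lem:basiccomp} does not cover $\epsilon_{\delta}^{*}$ because the pointwise $\delta$-slack of Definition~\ref{def:prav}(2) does not compose cleanly, so I would prove $\hat{\epsilon}_{\delta}^{*}(P,nD)=\epsilon_{\delta}^{*}(P,D)$ directly and then invoke Theorem~\ref{thm:DP2} for the closed form. For the $\leq$ direction, let $Q^{*}\in\mathcal{Q}_{\delta}^{*}(P,D)$ be the single-shot optimizer produced by Theorem~\ref{thm:DP2} and set $\hat{Q}=(Q^{*})^{\otimes n}$. Its expected Hamming distortion on $\mathcal{X}^n$ equals $n\,\mathbb{E}_{P,Q^{*}}[d(x,y)]\leq nD$, so $\hat Q$ is $(P^{\otimes n},nD)$-valid; and for any neighbors $\boldsymbol{x},\boldsymbol{x}'$ differing only in coordinate $i$ and any output $\boldsymbol{y}$, factoring out $A:=\prod_{j\neq i}Q^{*}(y_j|x_j)\in[0,1]$ gives
\begin{equation*}
\hat Q(\boldsymbol{y}|\boldsymbol{x})-e^{\epsilon_{\delta}^{*}(P,D)}\hat Q(\boldsymbol{y}|\boldsymbol{x}')=A\bigl(Q^{*}(y_i|x_i)-e^{\epsilon_{\delta}^{*}(P,D)}Q^{*}(y_i|x_i')\bigr)\leq A\delta\leq\delta,
\end{equation*}
so $\hat Q$ is $(\epsilon_{\delta}^{*}(P,D),\delta)$-DP, i.e.\ $\hat{\epsilon}_{\delta}^{*}(P,nD)\leq\epsilon_{\delta}^{*}(P,D)$.

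\textbf{Main obstacle: the matching lower bound.} Ruling out a non-product improvement is the technical heart of part~(1). A direct coordinate-wise marginalization is doomed because the pointwise $\delta$-inequality inflates by a factor of $|\mathcal{Y}|^{n-1}$ when summed over the $n-1$ discarded coordinates. My plan is to adapt the linear-programming reduction used to prove Theorem~\ref{thm:DP2}: coordinate permutations leave both $P^{\otimes n}$ and the feasibility constraints invariant, and convexity of $\epsilon_{\delta}$ (Lemma~\ref{lem:con_of_pri} in the appendix) lets us symmetrize any candidate to a permutation-invariant mechanism at no cost in either distortion or privacy. A second symmetrization within each Hamming shell $\mathcal{N}_l(\boldsymbol{x})$ forces the optimizer to depend only on Hamming distance, collapsing the problem on $\mathcal{X}^n$ to the one-dimensional program~\eqref{op:adp3} with $D$ in place of $nD$; the dual analysis of Theorem~\ref{thm:DP2} then transfers verbatim and delivers the matching lower bound, closing the equality in part~(1).
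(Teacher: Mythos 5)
Your part (2) is correct and is exactly the route the paper (implicitly) takes: Lemma \ref{lem:basiccomp}(2) gives $\hat\epsilon^{*}_{ML}(P,nD)=n\,\epsilon^{*}_{ML}(P,D)$ and Theorem \ref{thm:DP4} supplies the closed form; the paper prints no proof of this theorem at all, so there is nothing further to compare there. Your upper bound in part (1) is also correct under the paper's pointwise definition of $(\epsilon,\delta)$-DP: tensorizing the single-shot optimizer yields $\hat\epsilon^{*}_{\delta}(P^{\otimes n},nD)\le\epsilon^{*}_{\delta}(P,D)$.

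The gap is the matching lower bound, and it is not merely technical: the equality claimed in part (1) appears to be unprovable because it is false for $n\ge 2$ and $\delta>0$. Your own display shows why. For a product mechanism the binding constraint is $Q^{*}(y_i|x_i)-e^{\epsilon}Q^{*}(y_i|x_i')\le\delta/A$ with $A\le Q_{*}^{\,n-1}<1$ (where $Q_{*}=\max_{i,j}Q^{*}(j|i)$), so the effective per-coordinate slack is $\delta Q_{*}^{1-n}>\delta$; the constraint $A\delta\le\delta$ is satisfied with room to spare. Tensorizing instead the single-shot optimizer for the inflated parameter $\delta''=\delta Q_{*}^{1-n}$ produces a $(P^{\otimes n},nD)$-valid mechanism that is $(\epsilon^{*}_{\delta''}(P,D),\delta)$-DP on $\mathcal{X}^{n}$, and since $\log\frac{(m-k)(1-D-\delta)}{D-(1-\delta)D^{(k-1)}}$ is strictly decreasing in $\delta$, this strictly beats the claimed value $\epsilon^{*}_{\delta}(P,D)$. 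This is consistent with the paper's own composition analysis in the proof of part (2) of Theorem \ref{thm:bn,c1}, which only establishes the sandwich $\epsilon^{*}_{\delta m^{n-1}}(\mathcal{P},D/n)\le\hat\epsilon^{*}_{\delta}(\mathcal{P},D)\le\epsilon^{*}_{\delta}(\mathcal{P},D/n)$ and never closes it. Independently, your proposed symmetrization would fail on its own terms: coordinate permutations preserve $P^{\otimes n}$, but the within-shell averaging needed to make $Q$ a function of Hamming distance alone does not, because $P^{\otimes n}$ is not uniform; indeed the single-shot optimizer of Theorem \ref{thm:DP2} is itself asymmetric in $(i,j)$, so its tensor power is not a Hamming-shell mechanism and the problem does not collapse to \eqref{op:adp3}. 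The defensible conclusion for part (1) is the one-sided bound $\hat\epsilon^{*}_{\delta}(P^{\otimes n},nD)\le\epsilon^{*}_{\delta}(P,D)$ together with the weaker lower bound $\epsilon^{*}_{\delta m^{n-1}}(P,D)$.
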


\section{Conclusion}\label{sec:6}

This paper investigates the privacy-utility trade-off problem for seven popular privacy measures in two different scenarios: global and local privacy. For both global and local privacy, our results 
provide some upper and lower bounds on privacy-distortion function, which 
reveals the relationships between privacy and distortion
in a quantitative  way. 
In particular, with known prior distributions, we obtain the analytical closed form of privacy-distortion function for approximate differential privacy. To the best of our knowledge, this is the first result on the privacy-uitility tradeoff for
approximate differential privacy.



In addition to the privacy measures we used here, it is still possible to
consider other privacy measures, such as the privacy measure defined by Arimoto $\alpha$-mutual information and
$f$-mutual information. 
Moreover, the application privacy-utility trade-offs in 
machine learning is an important problem, which we leave it for further study.


 \section{Acknowledgments}\
K. Bu thanks the support from ARO Grant W911NF-19-1-0302 and the ARO MURI Grant W911NF-20-1-0082.

\clearpage

\begin{appendix}

\subsection{Basic Properties of Privacy}\label{app:bpp}
To clearly show the properties of different notions for any given mechanism $Q$, we turn to the following definition equivalent to Definition \ref{def:prav}.

\begin{Def}
Let $P_X$ be a fixed distribution with full support on $\mathcal{X}$. 

(1) (Maximal divergence) The differential privacy loss of $Q$ is defined as $$\epsilon_{DP}(Q):=\max_{\substack{y\in\mathcal{Y}, d(x,x^{\prime})=1\\Q_{Y|X}(y|x^{\prime})>0}}\log\frac{Q_{Y|X}(y|x)}{Q_{Y|X}(y|x^{\prime})}.$$

(2) (Approximate maximal divergence) Given a postive real number $\delta<1$, the approximate differential privacy loss of $Q$ is defined as $$\epsilon_{\delta}(Q):=\max\left\{0,{\max_{\substack{y\in\mathcal{Y}, d(x,x^{\prime})=1\\Q_{Y|X}(y|x^{\prime})>0}}\log\frac{Q_{Y|X}(y|x)-\delta}{Q_{Y|X}(y|x^{\prime})}}\right\}.$$

(3) (Maximal information) The maximal information between $X$ and $Q(X)$ is denoted by $$\epsilon_{MI}(Q):=I_{\infty}(X;Q(X))=\max_{x\in\mathcal{X},y\in\mathcal{Y}}\log\frac{Q_{Y|X}(y|x)}{\sum_{x\in\mathcal{X}}P_{X}(x)Q_{Y|X}(y|x)}.$$

(4) (Maximal leakage) The maximal leakage from $X$ to $Q(X)$ is denoted by $$\epsilon_{ML}(Q):=\mathcal{L}(X\to Q(X))=\log\sum_{y\in\mathcal{Y}}\max_{x\in\mathcal{X}}Q_{Y|X}(y|x).$$

(5) (R\'{e}nyi divergence) Given a postive real number $\alpha>1$, the R\'{e}nyi differential privacy loss of $Q$ is defined as 
\beqs
\epsilon_{\alpha,DP}(Q)&:=&\max_{d(x,x^{\prime})=1}D_{\alpha}\left({Q_{Y\vert X}(\cdot\vert x)\Vert Q_{Y\vert X}(\cdot\vert x^{\prime})}\right)\\
&=&\max_{d(x,x^{\prime})=1}\frac{1}{\alpha-1}\log\sum_{y\in\mathcal{Y}}Q_{Y|X}^{\alpha}(y|x)Q_{Y|X}^{1-\alpha}(y|x^{\prime}).
\eeqs

(6) (Sibson's mutual information) Given a postive real number $\alpha>1$, the $\alpha$-mutual informationa of $Q$ is denoted as 
\beqs
\epsilon_{\alpha,M}(Q)&:=&I_{\alpha}\left({X;Q(X)}\right)\\
&=&\frac{\alpha}{\alpha-1}\log\sum_{y\in\mathcal{Y}}\left({\sum_{x\in\mathcal{X}}P_X(x)Q_{Y|X}^{\alpha}(y|x)}\right)^{1/\alpha}.
\eeqs

(7) (Mutual information) The mutual information of $Q$ is denoted as
\beqs
\epsilon_{M}(Q)&:=&I\left({X;Q(X)}\right)\\
&=&\sum_{x,y}P_X(x)Q_{Y|X}(y|x)\log\frac{Q_{Y|X}(y|x)}{\sum_{x}P_{X}(x)Q_{Y|X}(y|x)}.
\eeqs

\end{Def}

Then we have the following well-known results.

\begin{lem}\label{remark:implication}
Relations and implications.

(1) The privacy losses $\epsilon_{DP}(Q)$, $\epsilon_{\delta}(Q)$, $\epsilon_{ML}(Q)$ and $\epsilon_{\alpha,DP}(Q)$ do not depend on $P$. 

(2) The privacy losses $\epsilon_{DP}(Q)$, $\epsilon_{\delta}(Q)$ and $\epsilon_{\alpha,DP}(Q)$ provide worst-case privacy guarantees.

(3) The privacy losses $\epsilon_{MI}(Q)$, $\epsilon_{ML}(Q)$, $\epsilon_{\alpha,M}(Q)$ and $\epsilon_{M}(Q)$ provide global privacy guarantees.

(4) For $0\leq\delta_1\leq\delta_2$, $\epsilon_{\delta_2}(Q)\leq\epsilon_{\delta_1}(Q).$ In particular, $\epsilon_{\delta=0}(Q)=\epsilon_{DP}(Q).$

(5) $\lim_{\alpha\to \infty}\epsilon_{\alpha,DP}(Q)=\epsilon_{DP}(Q).$

(6) $\lim_{\alpha\to 1}\epsilon_{\alpha,M}(Q)=\epsilon_{M}(Q).$

(7) $\lim_{\alpha\to \infty}\epsilon_{\alpha,M}(Q)=\epsilon_{ML}(Q).$

(8) $\epsilon_{\alpha,DP}(Q)$ is non-decreasing in $\alpha$ and upper-bounded by $\epsilon_{DP}(Q).$

(9) $\epsilon_{\alpha,M}(Q)$ is non-decreasing in $\alpha$, lower-bounded by $\epsilon_{M}(Q)$ and upper-bounded by $\epsilon_{DP}(Q).$
\end{lem}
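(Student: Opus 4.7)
The plan is to dispatch the nine claims in four groups: read-off-the-definition observations for (1)--(3), a pointwise monotonicity in $\delta$ for (4), three limit computations for (5)--(7), and monotonicity/dominance statements in $\alpha$ for (8)--(9).

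First, claims (1)--(3) will follow by inspection of the formulas restated at the top of this appendix. The expressions for $\epsilon_{DP}(Q)$, $\epsilon_{\delta}(Q)$, $\epsilon_{ML}(Q)$ and $\epsilon_{\alpha,DP}(Q)$ involve only $Q_{Y|X}$ with no appearance of $P_X$, giving (1); and because each of $\epsilon_{DP}$, $\epsilon_{\delta}$, $\epsilon_{\alpha,DP}$ is a maximum over \emph{every} neighboring pair $(x,x')$, they upper-bound the privacy loss uniformly on individuals, which is the worst-case interpretation in (2). By contrast $\epsilon_{MI}$, $\epsilon_{ML}$, $\epsilon_{\alpha,M}$, $\epsilon_{M}$ are either averaged against $P_X$ or maximize only over $\mathcal{Y}$ before summing, giving (3). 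For (4), observe that for $\delta_1\leq\delta_2$ the quantity $(Q(y|x)-\delta)/Q(y|x')$ is nonincreasing in $\delta$, so after taking logarithms, maximizing over $(y,x,x')$, and applying the outer $\max\{0,\cdot\}$ one obtains $\epsilon_{\delta_2}(Q)\leq\epsilon_{\delta_1}(Q)$; substituting $\delta_1=0$ reproduces the formula for $\epsilon_{DP}(Q)$.

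Next, I handle the three limit identities (5)--(7). For (5), fix a neighboring pair $(x,x')$ and set $r(y):=Q(y|x)/Q(y|x')$; then $\sum_y Q^{\alpha}(y|x)Q^{1-\alpha}(y|x')=\sum_y Q(y|x')\,r(y)^{\alpha}$. As $\alpha\to\infty$ this is asymptotic to $r_\star^{\alpha}$ with $r_\star:=\max_y r(y)$, so $(\alpha-1)^{-1}$ times its logarithm tends to $\log r_\star$; maximizing over neighboring pairs recovers $\epsilon_{DP}(Q)$. The same $\ell^{\alpha}$-to-$\ell^{\infty}$ principle gives (7): $(\sum_x P_X(x)Q^{\alpha}(y|x))^{1/\alpha}\to\max_x Q(y|x)$ (full support of $P_X$ turns the essential supremum into a maximum), and the prefactor $\alpha/(\alpha-1)\to 1$ yields $\log\sum_y\max_x Q(y|x)=\epsilon_{ML}(Q)$. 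For (6), one L'H\^opital step at $\alpha=1$ turns $(\alpha-1)^{-1}\log\sum_y Q^{\alpha}(y|x)Q^{1-\alpha}(y|x')$ into $D(Q(\cdot|x)\,\|\,Q(\cdot|x'))$; averaging the Sibson expression against $P_X$ and collapsing the outer $\alpha/(\alpha-1)\to 1$ produces $I(X;Q(X))=\epsilon_M(Q)$.

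Finally, (8)--(9) will rest on the classical monotonicity of $\alpha\mapsto D_\alpha$. Applied pointwise in $(x,x')$ and maximized over neighbors this gives the monotonicity part of (8); combining with $D_\alpha\leq D_\infty$ and (5) supplies $\epsilon_{\alpha,DP}(Q)\leq\epsilon_{DP}(Q)$. For (9), the monotonicity of the $\ell^{\alpha}(P_X)$-norm in $\alpha$ (a Jensen inequality) translates into monotonicity of the Sibson expression, and the lower bound follows from (6) at the left endpoint together with this monotonicity. I expect the upper bound $\epsilon_{\alpha,M}(Q)\leq\epsilon_{DP}(Q)$ to be the main obstacle: in the local-privacy regime where all pairs are neighbors, $\epsilon_{DP}(Q)$-DP gives $Q(y|x)\leq e^{\epsilon_{DP}(Q)}\sum_{x'}P_X(x')Q(y|x')$ for every $x,y$, hence $D_\infty(P_{XY}\,\|\,P_X\otimes P_Y)\leq\epsilon_{DP}(Q)$, and then monotonicity in $\alpha$ together with (7) gives the claim. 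In the general Hamming setting this argument incurs a diameter factor, so I would either read the statement as pertaining to the local regime or insert that factor explicitly.
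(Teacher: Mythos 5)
The paper offers no proof of this lemma at all --- it is declared ``well-known'' and deferred to the cited references (Mironov for the R\'enyi-divergence facts, Verd\'u and Ho--Verd\'u for the Sibson facts) --- so there is no in-paper argument to compare against; your proposal supplies the standard arguments, and items (1)--(8) plus the limit and lower-bound parts of (9) are handled correctly. Your sandwich bound for (5), the power-mean limit for (7), and the L'H\^opital computation for (6) are exactly the usual proofs, and your caveat on the upper bound in (9) is not pedantry but a correct observation about the statement itself: for a product mechanism $Q=q^{\otimes n}$ on $\{0,1\}^n$ one has $\epsilon_{DP}(Q)=\epsilon_{DP}(q)$ while $\epsilon_{ML}(Q)=n\,\epsilon_{ML}(q)$, so for fixed $q$ and $n$ large enough $\epsilon_{ML}(Q)>\epsilon_{DP}(Q)$, and since $\epsilon_{\alpha,M}\to\epsilon_{ML}$ the bound $\epsilon_{\alpha,M}(Q)\le\epsilon_{DP}(Q)$ fails for large $\alpha$ in the global Hamming setting. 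It holds verbatim only in the local regime (all pairs neighboring), which is where the paper actually invokes it.

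One step in your sketch does not carry its weight: you justify the monotonicity of $\epsilon_{\alpha,M}$ in $\alpha$ by the monotonicity of the weighted power mean $\bigl(\sum_x P_X(x)Q^{\alpha}(y|x)\bigr)^{1/\alpha}$. That shows the inner quantity $\log\sum_y\|Q(y|\cdot)\|_{\alpha,P}$ is non-decreasing, but the prefactor $\tfrac{\alpha}{\alpha-1}$ is \emph{decreasing} on $(1,\infty)$, so the product is not obviously monotone and this argument is incomplete. The standard repair is to use the variational representation $I_{\alpha}(X;Y)=\min_{Q_Y}D_{\alpha}\bigl(P_{XY}\,\Vert\,P_X\times Q_Y\bigr)$ and then invoke the monotonicity of $D_{\alpha}$ in $\alpha$ (the same fact you already use for item (8)); this is the route taken in the Ho--Verd\'u reference the paper cites. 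With that substitution, and with (9)'s upper bound restricted to the local regime or corrected by the diameter factor you identify, the proof is complete.
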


The convexity of privacy metrics has long been studied. As we see in Section \ref{sec:3}, the following results spark the ideas in development of optimal mechanisms.

\begin{lem}[Convexity of privacy]\label{lem:con_of_pri}
For fixed $P_X$, in $Q_{Y|X}$,

(1) $\epsilon_{DP}(Q)$, $\epsilon_{\delta}(Q)$ and $\epsilon_{MI}(Q)$ are quasi-convex;

(2) (\cite{Issa2020}) $e^{\epsilon_{ML}(Q)}$ is convex;

(3) (\cite{Ho2015}) $\epsilon_{\alpha,DP}(Q)$ is convex;

(4) (\cite{Ho2015, Verdu2015}) $\frac{1}{\alpha-1}\exp\left({\frac{\alpha-1}{\alpha}\epsilon_{\alpha,M}(Q)}\right)$ is convex and $\epsilon_{\alpha,M}(Q)$ is quasi-convex;

(5) (\cite{Cover2012}) $\epsilon_{M}(Q)$ is convex.
\end{lem}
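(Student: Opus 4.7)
The plan is to dispatch the five parts by separate arguments; parts (2)--(5) reduce to standard results from the literature cited in the statement, so the most hands-on work is the quasi-convexity of the three divergence-type losses in~(1). Throughout, fix $P_X$, two mechanisms $Q_1,Q_2$, a mixing weight $\lambda\in[0,1]$, and write $Q_\lambda=\lambda Q_1+(1-\lambda)Q_2$ and $\epsilon^{\star}=\max\{\epsilon(Q_1),\epsilon(Q_2)\}$ for whichever notion is under discussion.

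For~(1), the proof for $\epsilon_{DP}$ is a one-line convex combination: at a neighboring pair $x,x'$ and an output $y$, combining $Q_i(y|x)\leq e^{\epsilon_i}Q_i(y|x')\leq e^{\epsilon^{\star}}Q_i(y|x')$ linearly in $i$ gives $Q_\lambda(y|x)\leq e^{\epsilon^{\star}}Q_\lambda(y|x')$, and taking the maximum over $(y,x,x')$ yields $\epsilon_{DP}(Q_\lambda)\leq\epsilon^{\star}$, which is quasi-convexity. The same template handles $\epsilon_\delta$ because the additive shift is preserved under convex combination, $\lambda\delta+(1-\lambda)\delta=\delta$, and handles $\epsilon_{MI}$ because the ``denominator'' $\sum_{x}P_X(x)Q(y|x)$ is linear in $Q$, so the identical argument applies verbatim. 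In each case the maximum over the defining indices (neighbors, or pairs $(x,y)$) is taken only at the end, so quasi-convexity is preserved.

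For~(2), I would simply note that $e^{\epsilon_{ML}(Q)}=\sum_y\max_x Q(y|x)$ is a nonnegative sum of pointwise maxima of linear functionals of $Q$ and hence convex. For~(3), the convexity of the R\'enyi divergence $D_\alpha(\cdot\|\cdot)$ in its first argument at fixed second argument is classical; taking the maximum over neighboring input pairs preserves convexity. For~(4), observe that $\exp\bigl(\tfrac{\alpha-1}{\alpha}\epsilon_{\alpha,M}(Q)\bigr)=\sum_y\bigl(\sum_x P_X(x)Q^{\alpha}(y|x)\bigr)^{1/\alpha}$; for each fixed $y$ this summand is the weighted $L^{\alpha}(P_X)$-norm of the row $Q(y|\cdot)$ and is therefore convex in $Q$ for $\alpha>1$, and multiplication by $1/(\alpha-1)>0$ preserves convexity. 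Quasi-convexity of $\epsilon_{\alpha,M}$ itself then follows because $t\mapsto\tfrac{1}{\alpha-1}\exp\bigl(\tfrac{\alpha-1}{\alpha}t\bigr)$ is strictly increasing, so the sublevel sets of $\epsilon_{\alpha,M}$ coincide with those of the convex quantity. Finally, (5) is the standard Cover--Thomas fact that $I(X;Y)$ is convex in the channel at fixed input distribution, proved by the log-sum inequality applied to each $y$.

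The step I expect to be the main obstacle is~(4): one has to spot the reformulation as a sum of weighted $L^{\alpha}$-norms instead of attempting to work directly with the R\'enyi entropy of the joint or marginal, and then verify that the monotone transformation linking the exponential form to $\epsilon_{\alpha,M}$ only transmits quasi-convexity (not convexity) to the raw leakage quantity. Once this reformulation is in place, every remaining step reduces to a textbook fact (norm convexity, stability of convexity under nonnegative sum and pointwise maximum, and monotonicity of the linking transform).
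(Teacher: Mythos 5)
Your proposal is correct and follows essentially the same route as the paper: the only part the paper actually proves is (1), and it does so by exactly your convex-combination argument (combine $Q_i(y|x)\le e^{\epsilon_i}Q_i(y|x')+\delta$ linearly, bound $e^{\epsilon_i}$ by $e^{\max\{\epsilon_1,\epsilon_2\}}$, and use linearity of $\sum_x P_X(x)Q(y|x)$ for the maximal-information case), while (2)--(5) are simply deferred to the cited references, just as you do. One small caution on your supplementary sketch for (3): in $\epsilon_{\alpha,DP}(Q)=\max_{d(x,x')=1}D_{\alpha}\bigl(Q(\cdot|x)\,\Vert\, Q(\cdot|x')\bigr)$ \emph{both} arguments of the divergence vary with $Q$, so ``convexity in the first argument with the second fixed'' is not the right fact to invoke (the relevant statement is the joint convexity of $(a,b)\mapsto a^{\alpha}b^{1-\alpha}$ for $\alpha>1$, which is what the paper itself uses elsewhere in its appendix); since both you and the paper ultimately rest (3) on the citation, this does not affect correctness.
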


\begin{proof}[Proof of Lemma \ref{lem:con_of_pri}]
The results of $(2)$-$(5)$ are well-known and one may refer to the publications mentioned above. We now prove $(1)$. Let $Q_1$ and $Q_2$ be $(\epsilon_1,\delta)$ and $(\epsilon_2,\delta)$-differential private mechanisms respectively for some $\delta\in[0,1).$ Then for any $y\in\mathcal{Y}$ and neighbors $x$, $x^{\prime}$ in $\mathcal{X}$, $Q_i(y|x)\leq e^{\epsilon_i}Q_{i}(y|x')+\delta$, $i=1$, $2$. For $0\leq\lambda\leq1$, we have
\beqs
\lefteqn{Q(y|x):=\lambda Q_1(y|x)+(1-\lambda)Q_2(y|x)}\\
&&\leq \lambda \left(e^{\epsilon_1}Q_{1}(y|x')+\delta\right)+(1-\lambda)\left(e^{\epsilon_2}Q_{2}(y|x')+\delta\right)\\
&&\leq e^{\max\{\epsilon_1,\epsilon_2\}}\left(\lambda Q_1(y|x')+(1-\lambda)Q_2(y|x')\right)+\delta\\
&&= e^{\max\{\epsilon_1,\epsilon_2\}}Q(y|x')+\delta.
\eeqs
Thus, $\epsilon_{\delta}(Q)\leq\max\{\epsilon_{\delta}(Q_1),\epsilon_{\delta}(Q_2)\}$.

Next, let $\epsilon_{MI}(Q_i)\leq \epsilon_i$ for $i=1$ and $2$. Then for any $y\in\mathcal{Y}$ and $x\in\mathcal{X}$, we have
\beqs
\lefteqn{Q(y|x):=\lambda Q_1(y|x)+(1-\lambda)Q_2(y|x)}\\
&&\leq \lambda e^{\epsilon_1}\mathbb{E}_{x}[Q_1(y|x)]+(1-\lambda)e^{\epsilon_2}\mathbb{E}_{x}[Q_2(y|x)]\\
&&\leq e^{\max\{\epsilon_1,\epsilon_2\}}\left(\lambda\mathbb{E}_{x}[Q_1(y|x)]+(1-\lambda)\mathbb{E}_{x}[Q_2(y|x)]\right)\\
&&= e^{\max\{\epsilon_1,\epsilon_2\}}\mathbb{E}_{x}[Q(y|x)].
\eeqs
Thus, $\epsilon_{MI}(Q)\leq\max\{\epsilon_{MI}(Q_1),\epsilon_{MI}(Q_2)\}$. This completes the proof.
\end{proof}

The following two lemmas reveal the probability preservation properties of the $\alpha$-type privacy measurements.

\begin{lem}[\cite{Mironov2017}]\label{lem:renyi}
For $\alpha>1$, let $D_{\alpha}(P\Vert Q)$ denote the R\'{e}nyi divergence between $P$ and $Q$ over $\mathcal{R}$. Then we have

(1) Monotonicity. $D_{\alpha}(P\Vert Q)$ is non-decreasing in $\alpha$.

(2) Probability preservation. For any arbitrary event $A\subset\mathcal{R}$, 
$$P(A)\leq \left(e^{D_{\alpha}(P\Vert Q)}Q(A)\right)^{\frac{\alpha-1}{\alpha}}.$$
\end{lem}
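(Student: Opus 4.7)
The plan for part (1) is to rewrite the R\'enyi divergence as the logarithm of an $L^{\alpha-1}$-norm and then invoke monotonicity of $L^p$-norms with respect to a probability measure. Setting $Z(x):=P(x)/Q(x)$, a direct algebraic rearrangement gives
\[
D_\alpha(P\Vert Q) \;=\; \frac{1}{\alpha-1}\log \mathbb{E}_P\!\left[Z^{\alpha-1}\right],
\]
so $\exp(D_\alpha(P\Vert Q)) = \|Z\|_{L^{\alpha-1}(P)}$. Since $P$ is a probability measure, the map $p\mapsto \|Z\|_{L^p(P)}$ is non-decreasing on $(0,\infty)$, by the one-line application of Jensen's inequality to $t\mapsto t^{q/p}$ with $q>p$; monotonicity of $D_\alpha$ in $\alpha>1$ follows immediately.

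For part (2), my plan is to apply H\"older's inequality with conjugate exponents $\alpha$ and $\alpha/(\alpha-1)$ to the decomposition
\[
P(A)\;=\;\sum_{x\in A}\frac{P(x)}{Q(x)^{(\alpha-1)/\alpha}}\cdot Q(x)^{(\alpha-1)/\alpha}.
\]
H\"older then yields
\[
P(A)\;\leq\;\left(\sum_{x\in A}\frac{P(x)^\alpha}{Q(x)^{\alpha-1}}\right)^{1/\alpha}\left(\sum_{x\in A}Q(x)\right)^{(\alpha-1)/\alpha}.
\]
Extending the first sum from $A$ to all of $\mathcal{R}$ only enlarges the bound, and the resulting sum is precisely $\exp\bigl((\alpha-1)\,D_\alpha(P\Vert Q)\bigr)$. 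Raising to the power $1/\alpha$ and combining with the second factor $Q(A)^{(\alpha-1)/\alpha}$ gives exactly $P(A)\leq \bigl(e^{D_\alpha(P\Vert Q)}Q(A)\bigr)^{(\alpha-1)/\alpha}$, as claimed.

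Neither step is a genuine obstacle; the entire lemma is routine H\"older/Jensen bookkeeping. The only technicality to flag up front is absolute continuity: at points where $Q(x)=0<P(x)$ one adopts the convention $P(x)^\alpha Q(x)^{1-\alpha}=+\infty$, so $D_\alpha(P\Vert Q)=+\infty$ and both statements become vacuous; otherwise every sum may be restricted to $\mathrm{supp}(Q)$ and the arguments above go through verbatim. If a shorter route to (1) is preferred, it can alternatively be deduced from (2) combined with the data-processing inequality for R\'enyi divergence, but the $L^p$-monotonicity proof is cleaner and more self-contained.
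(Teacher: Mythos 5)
Your proof is correct: the identity $D_\alpha(P\Vert Q)=\frac{1}{\alpha-1}\log\mathbb{E}_P[Z^{\alpha-1}]$ together with monotonicity of $L^p$-norms under a probability measure gives (1), and the H\"older split of $P(A)$ with exponents $\alpha$ and $\alpha/(\alpha-1)$ gives (2) exactly as claimed. The paper itself does not prove this lemma but simply cites \cite{Mironov2017}, and your argument is precisely the standard one from that reference, including the correct handling of the absolute-continuity degenerate case.
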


\begin{lem}[\cite{Esposito2020a}]\label{lem:sibson}
Let $(\mathcal{X}\times\mathcal{Y},\mathcal{F},P_{XY})$ and $(\mathcal{X}\times\mathcal{Y},\mathcal{F},P_{X}P_{Y})$ be two probability spaces, and assume $P_{XY}\ll P_{X}P_{Y}.$ Given $E\in\mathcal{F}$, let $E_{y}:=\{x:(x,y)\in E\}$. For $\alpha>1$, let $I_{\alpha}(X;Y)$ denote Sibson mutual information. Then
$$P_{XY}(E)\leq\max_{y}\left(P_{X}(E_y)\exp(I_{\alpha}(X;Y))\right)^{\frac{\alpha-1}{\alpha}}.$$
\end{lem}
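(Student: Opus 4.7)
The plan is to reduce the desired bound to the probability–preservation property for Rényi divergence (Lemma \ref{lem:renyi}) by choosing a clever auxiliary $Q_Y$ that makes the Rényi divergence collapse to the Sibson mutual information. The key identity to invoke is the variational/tilted representation of $I_\alpha(X;Y)$: there exists a probability distribution $Q_Y^\star$ on $\mathcal{Y}$ such that
\[
D_\alpha\bigl(P_{XY}\,\Vert\,P_X\otimes Q_Y^\star\bigr)\;=\;I_\alpha(X;Y).
\]

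First, I would construct $Q_Y^\star$ explicitly. Set $f(y):=\bigl(\sum_x P_X(x)P_{Y|X}^{\alpha}(y|x)\bigr)^{1/\alpha}$ and $Z:=\sum_y f(y)$, so that $Q_Y^\star(y):=f(y)/Z$ is a probability distribution. A direct calculation using the definition of Rényi divergence,
\[
D_\alpha\bigl(P_{XY}\Vert P_X\otimes Q_Y^\star\bigr)=\frac{1}{\alpha-1}\log\sum_{x,y}P_X(x)P_{Y|X}^{\alpha}(y|x)\,Q_Y^\star(y)^{1-\alpha},
\]
plugging in $Q_Y^\star(y)=f(y)/Z$, yields $\tfrac{\alpha}{\alpha-1}\log Z$, which is exactly $I_\alpha(X;Y)$ by the Sibson definition. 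This step is routine but is the heart of the construction.

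Next, I apply Lemma \ref{lem:renyi} to the pair $(P,Q)=(P_{XY},\,P_X\otimes Q_Y^\star)$ and the event $E\in\mathcal{F}$, obtaining
\[
P_{XY}(E)\;\leq\;\bigl(e^{I_\alpha(X;Y)}\,(P_X\otimes Q_Y^\star)(E)\bigr)^{(\alpha-1)/\alpha}.
\]
Slicing the product measure by $y$ gives $(P_X\otimes Q_Y^\star)(E)=\sum_y Q_Y^\star(y)\,P_X(E_y)\leq \max_y P_X(E_y)$, since $Q_Y^\star$ is a probability measure. Finally, using that $t\mapsto t^{(\alpha-1)/\alpha}$ is monotone and $e^{I_\alpha}>0$, I can pull the $\max_y$ outside:
\[
\bigl(e^{I_\alpha(X;Y)}\max_y P_X(E_y)\bigr)^{(\alpha-1)/\alpha}=\max_y\bigl(P_X(E_y)\,e^{I_\alpha(X;Y)}\bigr)^{(\alpha-1)/\alpha},
\]
which is the stated bound.

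The only nontrivial ingredient is the tilted-distribution identity in the first step; once that is in hand, the remainder is a one-line application of Lemma \ref{lem:renyi} and a trivial averaging inequality. I do not anticipate real obstacles: the main care is bookkeeping the exponents $\alpha$, $\alpha-1$, and $1-\alpha$, and verifying the absolute-continuity hypothesis $P_{XY}\ll P_X\otimes Q_Y^\star$, which follows from $P_{XY}\ll P_X\otimes P_Y$ together with $Q_Y^\star(y)>0$ whenever $P_{Y|X}(y|x)>0$ for some $x$ in the support of $P_X$.
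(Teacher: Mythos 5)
Your proof is correct. Note that the paper itself does not prove this lemma --- it is imported verbatim from \cite{Esposito2020a} --- so there is no in-paper argument to compare against; what you have written would serve as a self-contained proof. Each step checks out: the tilted distribution $Q_Y^\star(y)\propto\bigl(\sum_x P_X(x)P_{Y|X}^{\alpha}(y|x)\bigr)^{1/\alpha}$ does satisfy $D_\alpha\bigl(P_{XY}\Vert P_X\otimes Q_Y^\star\bigr)=\frac{1}{\alpha-1}\log Z^{\alpha}=I_\alpha(X;Y)$, the probability-preservation step is a legitimate application of Lemma \ref{lem:renyi} since both measures live on $\mathcal{X}\times\mathcal{Y}$, and the absolute-continuity point you flag is handled correctly (and is moot anyway, since $D_\alpha=+\infty$ makes the bound vacuous). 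It is worth knowing that the argument in the cited reference is essentially a one-step H\"older inequality applied directly: writing $P_{XY}(E)=\sum_y\sum_x P_X(x)P_{Y|X}(y|x)\mathbb{1}[x\in E_y]$ and applying H\"older in $x$ with exponents $\alpha$ and $\alpha/(\alpha-1)$ gives $P_{XY}(E)\leq\sum_y f(y)\,P_X(E_y)^{(\alpha-1)/\alpha}\leq Z\cdot\max_y P_X(E_y)^{(\alpha-1)/\alpha}$, which is the same bound without introducing $Q_Y^\star$. Your route is slightly longer but has the virtue of deriving the Sibson bound as a corollary of the R\'enyi bound via the variational identity $I_\alpha(X;Y)=\min_{Q_Y}D_\alpha(P_{XY}\Vert P_X\otimes Q_Y)$, which unifies the two probability-preservation lemmas in the appendix; the direct H\"older route is more elementary and avoids the auxiliary construction. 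Either is acceptable.
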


\subsection{Coloring Scheme}\label{app:cs}
Let $P$ be a prior distribution such that $P_1\geq P_2\geq \cdots \geq P_{m}.$ For any mechanism $Q$, we write it as the matrix form below.
$$
Q = \begin{pmatrix}  
Q(1|1) & Q(2|1) & \cdots & Q(m|1)\\ Q(1|2) & Q(2|2) & \cdots & Q(m|2)\\
\vdots & \vdots & \ddots & \vdots\\
Q(1|m) & Q(2|m) & \cdots & Q(m|m)
\end{pmatrix}.$$
To reinforce the understanding of prior distribution, we define the accumulation function $D^{(0)}=0$ $D^{(k)}:=\sum_{j=m-k+1}^{m}P_{j}$ for $1\leq k\leq m$.

By Lemma \ref{lem:02} and \ref{lem:04}, there exists mechanism costing zero approximate privacy loss as well as mechanism causing no maximal leakage under specific assumption sacrificing accuracy. In this appendix, we assume that privacy loss is inevitable, i.e., the threshold $D$ is at most $(1-\delta)D^{(m-1)}$ for approximate differential privacy and $D^{(m-1)}$ for maximal leakage. 

For any mechanism $Q$ with $\epsilon_{\delta}(Q)>0$, we say two distinct entries $Q(j|i)$ and $Q(j|k)$ of the same column consist of a critical pair if $Q(j|i)=e^{\epsilon(Q)}Q(j|k)+\delta$. We mow color every single entry of $Q$ following the rules below.
\begin{itemize}
    \item Color the element black  if it is the bigger one in the critical pair.
    \item Color the element red if it is the smaller one in the critical pair.
    \item Color the element white, otherwise.
\end{itemize}
Notice that if $Q(j|i)=0$ then for the $j$th column, every element is not greater than $\delta$ with equality if the element is black.

Similarly, for any mechanism $Q$ with $\epsilon_{ML}(Q)>0$, we color every single entry of $Q$ following the rules below.
\begin{itemize}
    \item Color the element black  if it is the biggest among the entries of the same column.
    \item Color the element red if it is the smallest one among the entries of the same column.
    \item Color the element white, otherwise.
\end{itemize}

\subsection{Special Transformers}\label{app:st}
\begin{lem}(Size of the output space)\label{lem:size_of_out}
There exists a mechanism $Q\in\mathcal{Q}^{*}_{\star}(\mathcal{P},D)$ such that $|Q(\mathcal{X})|\leq m$ for the notions aforementioned except for the approximate privacy with positive $\delta$.
\end{lem}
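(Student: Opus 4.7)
The plan is to exhibit, starting from any $Q \in \mathcal{Q}^{*}_{\star}(\mathcal{P}, D)$ with an arbitrary output space $\mathcal{Y}$, a deterministic post-processing into a new mechanism $Q'$ whose outputs lie in $\mathcal{X}$ (so that $|Q'(\mathcal{X})| \leq m$ is automatic), preserving both $(\mathcal{P},D)$-validity and the privacy loss. Because the decoder I will use is independent of the prior $P$, the construction works uniformly for every $P \in \mathcal{P}$, so the resulting $Q'$ still lies in $\mathcal{Q}^{*}_{\star}(\mathcal{P}, D)$.

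Concretely, fix any $x_0 \in \mathcal{X}$, define $g : \mathcal{Y} \to \mathcal{X}$ by $g(y) = y$ when $y \in \mathcal{X} \cap \mathcal{Y}$ and $g(y) = x_0$ otherwise, and set $Q'(x'|x) := \sum_{y \in g^{-1}(x')} Q(y|x)$. Under the discrete Hamming distortion, $d(x, g(y)) = d(x, y)$ for $y \in \mathcal{X}$, while for $y \notin \mathcal{X}$ the convention $d(x, y) = 1$ yields $d(x, g(y)) \leq 1 = d(x, y)$. Weighting by $P(x) Q(y|x)$ and summing,
\[
\mathbb{E}_{P, Q'}[d(X, Y)] = \sum_{x, y} P(x) Q(y|x)\, d(x, g(y)) \leq \mathbb{E}_{P, Q}[d(X, Y)] \leq D
\]
for every $P \in \mathcal{P}$, so $Q' \in \mathcal{Q}(\mathcal{P}, D)$.

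For the privacy bound, each of the six remaining notions satisfies a data processing inequality under the deterministic merging $g$. For $\epsilon_{DP}$, the ratio $Q'(x'|x)/Q'(x'|\tilde x)$ is a weighted mediant of the ratios $Q(y|x)/Q(y|\tilde x)$ over $y \in g^{-1}(x')$, hence bounded by $e^{\epsilon_{DP}(Q)}$; the same mediant bound controls $\epsilon_{MI}(Q')$ once the denominator is replaced by the marginal $\sum_{x''} P(x'') Q(y|x'')$. For maximal leakage, $\sum_{x'} \max_x Q'(x'|x) \leq \sum_{x'} \sum_{y \in g^{-1}(x')} \max_x Q(y|x) = e^{\epsilon_{ML}(Q)}$. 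The R\'enyi, Sibson, and Shannon mutual-information versions follow from the DPI of the underlying divergences applied to the deterministic kernel $g$ (these are precisely the standard post-processing lemmas for the respective notions). Combining with the distortion bound shows $Q' \in \mathcal{Q}^{*}_{\star}(\mathcal{P}, D)$ with output support in $\mathcal{X}$.

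The subtle point, and the reason for excluding $\epsilon_{\delta}$ with $\delta > 0$, is that the pointwise inequality $Q(y|x) \leq e^{\epsilon} Q(y|\tilde x) + \delta$ does not sum cleanly: aggregating over $y \in g^{-1}(x')$ gives $Q'(x'|x) \leq e^{\epsilon} Q'(x'|\tilde x) + |g^{-1}(x')|\,\delta$, so the additive slack scales with the fiber size and the $(\epsilon,\delta)$ guarantee is not preserved under the merge. This is the main obstacle; handling approximate DP would require a more delicate (possibly randomized) decoder $h : \mathcal{Y} \to \mathcal{X}$ tailored to the $\delta$-slack, which is outside the scope of this lemma.
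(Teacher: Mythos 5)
Your proof is correct and follows essentially the same route as the paper's: both reduce the output alphabet by merging output symbols, check that $(\mathcal{P},D)$-validity is preserved because merging can only decrease the Hamming distortion, and show each privacy loss is non-increasing under the merge. The only differences are cosmetic --- you perform a single deterministic post-processing $g:\mathcal{Y}\to\mathcal{X}$ and invoke the data-processing inequalities as black boxes, whereas the paper merges two output symbols at a time and verifies the corresponding inequalities explicitly (the mediant bound for DP and maximal information, subadditivity of column maxima for maximal leakage, and convexity of $x^{\alpha}y^{1-\alpha}$ for the R\'enyi and Sibson cases).
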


 To prove Lemma \ref{lem:size_of_out}, we introduce the transformation on mechanism given by Kalantari et al\cite{Kalantari2018}.

\begin{proof}[Proof of Lemma \ref{lem:size_of_out}]
For any $Q\in\mathcal{Q}(\mathcal{P},D)$, assume that $|Q(X)|=N> m$, that is $Q(j|i)>0$ for $j\leq N$ and $Q(j|i)=0$ for $j>N$. We create a mechanism $\tilde{Q}$ defined as follows.
$$
\tilde{Q}(j|i)=
\begin{cases}
{Q}(N-1|i)+{Q}(N|i) &\text{if }j=N-1,\\
{Q}(j|i) &\text{if }j<N-1,\\
0 &\text{if }j> N-1.
\end{cases}
$$
Note that $|\tilde{Q}(X)|=N-1$, and for any $P\in\mathcal{P}$, $\mathbb{E}_{P,\tilde{Q}}[d(x,y)]=\sum_{i=1}^{m}P_i(1-\tilde{Q}(i|i))\leq\mathbb{E}_{P,Q}[d(x,y)]$ where the equality holds if and only if $N>m+1$. It is followed by the fact $\tilde{Q}$ belongs to $\mathcal{Q}(\mathcal{P},D)$. Hence, it is sufficient to prove that $\epsilon(\tilde{Q})\leq\epsilon(Q)$.
\begin{enumerate}
\item For $1\leq j\leq N-2$ and $1\leq i\neq k\leq m$, $$\tilde{Q}(j|i)=Q(j|i)\leq e^{\epsilon_1(Q)}Q(j|k)=e^{\epsilon_1(Q)}\tilde{Q}(j|k),$$
\beqs
\lefteqn{\tilde{Q}(N-1|i)=Q(N-1|i)+Q(N|i)}\\
&&\leq e^{\epsilon_{DP}(Q)}Q(N-1|k)+e^{\epsilon_{DP}(Q)}Q(N|k)\\
&&=e^{\epsilon_{DP}(Q)}\tilde{Q}(N-1|k).
\eeqs
Hence, $\epsilon_{DP}(\tilde{Q})\leq\epsilon_{DP}(Q).$
\item For maximal information, one can obtain that $\epsilon_{MI}(\tilde{Q})\leq\epsilon_{MI}(Q)$ by changing $Q(j|k)$ to $\mathbb{E}_{k}[Q(j|k)]$ in $1)$.
\item The inequality 
\beqs
\lefteqn{\mathcal{L}(X\to\tilde{Q}(X))=\sum_{j=1}^{N-1}\max_{1\leq i\leq m}\tilde{Q}(j|i)}\\
&&=\sum_{j=1}^{N-2}\max_{1\leq i\leq m}\tilde{Q}(j|i)+\max_{1\leq i\leq m}\{\tilde{Q}(N-1|i),\tilde{Q}(N|i)\}\\
&&\leq \sum_{j=1}^{N}\max_{1\leq i\leq m}Q(j|i)=\mathcal{L}(X\to{Q}(X))
\eeqs
yields the result $\epsilon_{ML}(\tilde{Q})\leq\epsilon_{ML}(Q).$
\item Let $f_{\alpha}(x,y)=x^{\alpha}y^{1-\alpha}$ for $0\leq x$, $y\leq1$. Then $f_{\alpha}$ is convex since $\nabla^{2}f_{\alpha}$ is positive semi-definite. Therefore, we have for $1\leq i\neq k\leq m$,
\beqs
\lefteqn{\sum_{j=1}^{N-1}\tilde{Q}^{\alpha}(j|i)\tilde{Q}^{1-\alpha}(j|k)}\\
&&=\sum_{j=1}^{N-2}{Q}^{\alpha}(j|i){Q}^{1-\alpha}(j|k)+\\
&&\Big(Q(N-1|i)+Q(N|i)\Big)^{\alpha}\Big(Q(N-1|k)+Q(N|k)\Big)^{1-\alpha}\\
&&\leq \sum_{j=1}^{N-2}{Q}^{\alpha}(j|i){Q}^{1-\alpha}(j|k)+\\
&&{Q}^{\alpha}(N-1|i){Q}^{1-\alpha}(N-1|k)+{Q}^{\alpha}(N|i){Q}^{1-\alpha}(N|k)\\
&&=\sum_{j=1}^{N}{Q}^{\alpha}(j|i){Q}^{1-\alpha}(j|k).
\eeqs
Hence, $\epsilon_{\alpha,DP}(\tilde{Q})\leq\epsilon_{\alpha,DP}(Q).$
\item The function $\frac{1}{\alpha-1}\exp\left({\frac{\alpha-1}{\alpha}\epsilon_{\alpha,M}(Q)}\right)$ is convex (refer to Lemma \ref{lem:con_of_pri} in Appendix \ref{app:bpp}), and thus so is the situation for fixed $y$. Hence, we have
\beqs
\lefteqn{\sum_{j=1}^{N-1}\left(\sum_{i=1}^{m}P_{i}\tilde{Q}^{\alpha}(j|i)\right)^{\frac{1}{\alpha}}}\\
&&=\sum_{j=1}^{N-2}\left(\sum_{i=1}^{m}P_{i}{Q}^{\alpha}(j|i)\right)^{\frac{1}{\alpha}}\\
&&\ \ +\left(\sum_{i=1}^{m}P_{i}\Big(Q(N-1|i)+Q(N|i)\Big)^{\alpha}\right)^{\frac{1}{\alpha}}\\
&&\leq\sum_{j=1}^{N-2}\left(\sum_{i=1}^{m}P_{i}{Q}^{\alpha}(j|i)\right)^{\frac{1}{\alpha}}\\
&&\ \ +\left(\sum_{i=1}^{m}P_{i}{Q}^{\alpha}(N-1|i)\right)^{\frac{1}{\alpha}}+\left(\sum_{i=1}^{m}P_{i}{Q}^{\alpha}(N|i)\right)^{\frac{1}{\alpha}}\\ 
&&=\sum_{j=1}^{N}\left(\sum_{i=1}^{m}P_{i}{Q}^{\alpha}(j|i)\right)^{\frac{1}{\alpha}}
\eeqs
followed by $\epsilon_{\alpha,M}(\tilde{Q})\leq\epsilon_{\alpha,M}(Q).$
\item As $\alpha$ tends to $1$ in $5)$, we have $\epsilon_{M}(\tilde{Q})\leq\epsilon_{M}(Q).$
\end{enumerate}
\end{proof}
This explains why we choose synthetic version as the released database in local privacy framework. Next, inspired by the construction of transformation above, we introduce several interesting transformers which are useful in coloring scheme. For any mechanism $Q$ with $$Q(\sigma(1)|\sigma(1))\geq Q(\sigma(2)|\sigma(2))\geq \cdots \geq Q(\sigma(m)|\sigma(m))$$ for some permutation $\sigma$ over $\{1,2,\cdots,m\}$. We define the transformer $\mathcal{T}_1$ over $Q$ as follows.
$$\mathcal{T}_1:Q\mapsto \mathcal{T}_{1}Q \text{ where }\mathcal{T}_{1}Q(j|i)=Q(\sigma(j)|\sigma(i))$$ for $1\leq i,$ $j\leq m$.
\begin{lem}\label{lem:trans1}
For approximate differential privacy and maximal leakage, if mechanism $Q$ satisfying $\epsilon(Q)>0$, then $\mathcal{T}_{1}Q$ is still a well-defined mechanism such that 
\begin{enumerate}
    \item $\mathcal{T}_{1}Q$ is $(P,\mathbb{E}_{P,Q}[d(x,y)])$-valid for any prior distribution $P$;
    \item $\epsilon(\mathcal{T}_{1}Q)=\epsilon(Q)$;
    \item $\mathcal{T}_{1}Q(1|1)\geq \mathcal{T}_{1}Q(2|2)\geq\cdots\geq\mathcal{T}_{1}Q(m|m).$
\end{enumerate}
\end{lem}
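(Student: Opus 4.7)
The plan is to check the three assertions in order, exploiting the fact that $\mathcal{T}_1$ is nothing more than a simultaneous relabeling of the input and output alphabets by the permutation $\sigma$ that sorts the diagonal of $Q$ in decreasing order. The well-definedness of $\mathcal{T}_1Q$ as a stochastic matrix is immediate: for each $i$, $\sum_{j}\mathcal{T}_1Q(j|i)=\sum_j Q(\sigma(j)|\sigma(i))=\sum_{j'}Q(j'|\sigma(i))=1$ because $\sigma$ is a bijection. Assertion (3) is then just a restatement of the definition of $\sigma$, since $\mathcal{T}_1Q(i|i)=Q(\sigma(i)|\sigma(i))$ is by construction nonincreasing in $i$.

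For assertion (2), I would use the fact that both $\epsilon_\delta$ and $\epsilon_{ML}$ are invariant under a common permutation applied to rows and columns of $Q$, and that in the local framework every pair $x\neq x'$ is a neighbor (since $d$ is the discrete distance). For maximal leakage,
\[
\epsilon_{ML}(\mathcal{T}_1Q)=\log\sum_j\max_i Q(\sigma(j)|\sigma(i))=\log\sum_{j'}\max_{i'}Q(j'|i')=\epsilon_{ML}(Q),
\]
and an identical index substitution inside $\max_{y,\,x\neq x'}\log\frac{\mathcal{T}_1Q(y|x)-\delta}{\mathcal{T}_1Q(y|x')}$ gives $\epsilon_\delta(\mathcal{T}_1Q)=\epsilon_\delta(Q)$. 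Here the fact that the quantifier runs over all pairs of distinct inputs, not only Hamming-neighbors in some product structure, is what makes the equality (as opposed to only an inequality) hold.

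For assertion (1), recall that this appendix fixes the convention $P_1\geq P_2\geq\cdots\geq P_m$. The distortion of $\mathcal{T}_1Q$ under $P$ equals
\[
\mathbb{E}_{P,\mathcal{T}_1Q}[d(x,y)]=1-\sum_i P_i\,Q(\sigma(i)|\sigma(i)),
\]
so it suffices to show $\sum_i P_i\,Q(\sigma(i)|\sigma(i))\geq \sum_i P_i\,Q(i|i)$. Since $\{P_i\}$ is sorted in decreasing order and $\sigma$ is chosen so that $\{Q(\sigma(i)|\sigma(i))\}$ is also decreasing, the rearrangement inequality applied to the sequences $\{P_i\}$ and $\{Q(i|i)\}$ delivers exactly this bound. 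This is the main technical step, and I expect it to be the only place where one genuinely uses an ordering assumption on $P$; the rest of the argument is bookkeeping about relabelings.
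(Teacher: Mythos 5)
Your proof is correct; the paper itself omits a proof of this lemma (``We skip the proof for simplicity''), so there is nothing to compare against, but your argument supplies exactly what is needed. The relabeling observations for well-definedness, assertion (3), and assertion (2) are routine and handled properly, including the remark that in the local framework every pair of distinct inputs is a neighboring pair, which is what makes the permutation a bijection on the set of constraints and yields equality of the privacy losses. The one genuinely substantive point is assertion (1), and you identify it correctly: the claim ``for any prior distribution $P$'' is only true under the appendix's standing convention $P_1\geq P_2\geq\cdots\geq P_m$ (for an unordered $P$ the sorted pairing $\sum_i P_i\,Q(\sigma(i)|\sigma(i))$ can be strictly smaller than $\sum_i P_i\,Q(i|i)$, increasing the distortion), and under that convention the rearrangement inequality gives precisely the required bound $\sum_i P_i\,Q(\sigma(i)|\sigma(i))\geq\sum_i P_i\,Q(i|i)$.
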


We skip the proof for simplicity. It is worth noting that if we color $\mathcal{T}_{1}Q$ following the rules aforementioned, then  $\mathcal{T}_{1}Q(j|i)$ and $Q(\sigma(j)|\sigma(i))$ have the same color for $1\leq i$, $j\leq m.$

\begin{lem}\label{lem:trans2}
There exists $Q\in\mathcal{Q}^{*}(P,D)$ for approximate differential privacy or maximal leakage, such that all the off-diagonal elements in the same row have the same color.
\end{lem}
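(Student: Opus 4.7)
The plan is to start with an arbitrary optimal mechanism $Q^{*}\in\mathcal{Q}^{*}(P,D)$ and, through operations that preserve both the distortion and the privacy loss, modify it into a mechanism $\tilde Q$ whose off-diagonals are monochromatic in every row. First I would apply the transformer $\mathcal{T}_{1}$ of Lemma \ref{lem:trans1} to sort the diagonal in decreasing order; this step is free, since $\mathcal{T}_{1}$ preserves optimality and commutes with the coloring rule.

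The second step exploits the fact that both privacy notions under consideration depend only on the column extrema $M_{j}:=\max_{i}Q^{*}(j|i)$ and $m_{j}:=\min_{i}Q^{*}(j|i)$: for maximal leakage one has $\epsilon_{ML}(Q^{*})=\log\sum_{j}M_{j}$, while for approximate differential privacy the critical-pair condition gives $e^{\epsilon_{\delta}(Q^{*})}=\max_{j}\frac{M_{j}-\delta}{m_{j}}$. Consequently, any white (strictly intermediate) off-diagonal entry can be moved to one of the endpoints $M_{j}$ or $m_{j}$ without increasing the privacy loss, provided the row sums are simultaneously maintained. Since the distortion $\sum_{i}P_{i}(1-Q(i|i))$ depends only on the diagonal, this reassignment keeps the mechanism $(P,D)$-valid. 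After this step every off-diagonal entry of the resulting optimal mechanism is either black or red.

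To force each row to become monochromatic I would use a pairwise swap argument. If some row $i$ still contains both a black entry at column $j_{1}$ and a red entry at column $j_{2}$, then the row-sum identity together with the fixed column counts of $M$- and $m$-entries forces the existence of a second row $i'$ with the complementary pattern, namely red at $j_{1}$ and black at $j_{2}$. Exchanging $(Q(j_{1}|i),Q(j_{2}|i))$ with $(Q(j_{1}|i'),Q(j_{2}|i'))$ leaves every row sum, diagonal entry, column maximum, and column minimum unchanged, so neither the distortion nor the privacy loss is affected; meanwhile the number of bichromatic rows strictly decreases. Iterating produces the required mechanism $\tilde Q$.

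The main obstacle lies in the two reduction steps. In the white-to-extreme step the row-sum constraint couples all columns, so pushing one entry outward can require compensating adjustments elsewhere that create new white entries; and in the swap step one must ensure that previously monochromatized rows are not re-introduced into bichromatic status. I expect both issues to be handled by processing rows in the order dictated by the sorted diagonal of Step 1, combined with a Hall-type bipartite-matching argument on rows labelled by their residual excess colour.
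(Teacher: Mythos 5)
Your overall strategy (normalize with $\mathcal{T}_1$, then locally repair rows while preserving distortion and privacy loss) is in the right spirit, but the two reduction steps as described do not work. The decisive problem is the pairwise swap. First, the existence of the ``complementary'' row $i'$ (red at $j_1$, black at $j_2$) is not forced by row sums and column extremum counts: the column maxima and minima can be attained in a cyclic pattern across three or more rows (row $1$ black in column $2$ and red in column $3$, row $2$ black in column $3$ and red in column $1$, row $3$ black in column $1$ and red in column $2$), in which case no complementary pair exists. Second, even when such a row exists, exchanging the entries of rows $i$ and $i'$ in columns $j_1,j_2$ changes the sum of row $i$ by $(m_{j_1}+M_{j_2})-(M_{j_1}+m_{j_2})$, which is nonzero in general, so the result is no longer a conditional probability; and if $j_1$ or $j_2$ is a diagonal position of $i$ or $i'$ the distortion changes as well. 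Third, the intermediate target of your Step 2 --- every off-diagonal entry black or red --- is strictly stronger than the lemma and is typically unattainable for approximate differential privacy, where the coloring is defined through critical pairs attaining $e^{\epsilon_\delta(Q)}$: a column not attaining the worst-case ratio $\frac{M_j-\delta}{m_j}=e^{\epsilon_\delta(Q)}$ contains no black or red entry at all, so no amount of pushing entries to the column extrema will recolor it. You flag the row-sum coupling yourself, but the proposed Hall-type fix is not carried out and would still face the issues above.

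The paper's argument avoids all cross-row and cross-column bookkeeping by perturbing a single row at a time: in a bichromatic row $k$ it subtracts $\Delta/n^{(k)}_{B}$ from each off-diagonal entry of the ``higher'' color class and adds $\Delta/n^{(k)}_{R}$ (resp.\ $\Delta/n^{(k)}_{W}$) to each entry of the ``lower'' class, so the total mass removed equals the total mass added, the row sum and the diagonal (hence the distortion) are untouched, and for $\Delta$ small enough the privacy loss does not increase because lowering a column maximum or a critical numerator and raising a column minimum or a critical denominator can only help. Iterating this within-row redistribution is what homogenizes the colors. If you want to salvage your approach, you should replace the cross-row swap by this single-row rebalancing, which sidesteps both the existence question and the stochasticity question.
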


\begin{proof}
We denote the number of black (red, white, respectively) off-diagonal elements in the $k$th row by $n^{(k)}_{B}$ ($n^{(k)}_{R}$, $n^{(k)}_{W}$, respectively). 
\begin{enumerate}
    \item If $n^{(k)}_{B}$, $n^{(k)}_{R}>0$, then we define \beqs
    &&Q^{\prime}(j|i)=\\
    &&\begin{cases}
    Q(j|k)-\Delta/n^{(k)}_{B},\text{ if }i=k\neq j\text{ and }Q(j|k)\text{ is black},\\
    Q(j|k)+\Delta/n^{(k)}_{R},\text{ if }i=k\neq j\text{ and }Q(j|k)\text{ is white},\\
    Q(j|i)\text{, otherwise}
    \end{cases}
    \eeqs where $\Delta$ is small enough such that $0\leq Q^{\prime}\leq 1$ and $\epsilon$ does not increase. Moreover, $Q^{\prime}$ is also a $(P,D)$-valid mechanism.
    \item If $n^{(k)}_{B}$, $n^{(k)}_{W}>0$, then we define \beqs
    &&Q^{\prime}(j|i)=\\
    &&\begin{cases}
    Q(j|k)-\Delta/n^{(k)}_{B},\text{ if }i=k\neq j\text{ and }Q(j|k)\text{ is black},\\
    Q(j|k)+\Delta/n^{(k)}_{W},\text{ if }i=k\neq j\text{ and }Q(j|k)\text{ is white},\\
    Q(j|i)\text{, otherwise}
    \end{cases}
    \eeqs where $\Delta$ is small enough such that $0\leq Q^{\prime}\leq 1$ and $\epsilon$ does not increase. Moreover, $Q^{\prime}$ is also a $(P,D)$-valid mechanism.
    \item If $n^{(k)}_{W}$, $n^{(k)}_{R}>0$, then we define \beqs
    &&Q^{\prime}(j|i)=\\
    &&\begin{cases}
    Q(j|k)-\Delta/n^{(k)}_{W},\text{ if }i=k\neq j\text{ and }Q(j|k)\text{ is black},\\
    Q(j|k)+\Delta/n^{(k)}_{R},\text{ if }i=k\neq j\text{ and }Q(j|k)\text{ is white},\\
    Q(j|i)\text{, otherwise}
    \end{cases}
    \eeqs where $\Delta$ is small enough such that $0\leq Q^{\prime}\leq 1$ and $\epsilon$ does not increase. Moreover, $Q^{\prime}$ is also a $(P,D)$-valid mechanism.
\end{enumerate}
Notice the transformation can be done repeatedly until $Q^{\prime}$ satisfies the claim.
\end{proof}

Through the composition of the operators above, we obtain a transformer $\mathcal{T}_2$ such that for any optimal mechanism $Q$,
\begin{enumerate}
    \item $\mathcal{T}_{2}Q$ is $(P,\mathbb{E}_{P,Q}[d(x,y)])$-valid for any prior distribution $P$;
    \item $\epsilon(\mathcal{T}_{2}Q)= \epsilon(Q)$;
    \item $\mathcal{T}_{2}Q(j|j)=Q(j|j)$ for $1\leq j\leq m;$
    \item all the off-diagonal elements of a row in $\mathcal{T}_{2}Q$ have the same color.
\end{enumerate}

\begin{lem}\label{lem:trans3}
There exists $Q\in\mathcal{Q}^{*}(P,D)$ for approximate differential privacy or maximal leakage, such that if a diagonal element is not black, then the off-diagonal elements of the same the row are red.
\end{lem}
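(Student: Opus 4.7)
My approach is a compactness-based extremal argument that handles all rows simultaneously. Observe that $\mathcal{Q}^{*}(P,D)$ is compact: it is cut out from the compact simplex of row-stochastic $m\times m$ matrices by the closed linear constraint $\mathbb{E}_{P,Q}[d(x,y)]\leq D$ and, for either privacy notion at hand, the closed linear constraints defining $\epsilon(Q)\leq\epsilon^{*}(P,D)$. The diagonal-trace functional $f(Q):=\sum_{i=1}^{m}Q(i|i)$ is continuous, and hence attains a maximum at some $Q^{*}\in\mathcal{Q}^{*}(P,D)$. I claim that this $Q^{*}$ already satisfies the stated property in every row, which is enough.

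Suppose for contradiction that some row $i$ of $Q^{*}$ violates the property: $Q^{*}(i|i)$ is not black, and some off-diagonal $Q^{*}(j|i)$ with $j\neq i$ is not red. The local modification is to transfer a small mass $\Delta>0$ from $Q^{*}(j|i)$ to $Q^{*}(i|i)$, leaving every other entry fixed; call the result $Q'$. Row sums are preserved, so $Q'$ is stochastic for $\Delta$ small enough, and the distortion strictly decreases by $P_{i}\Delta$, keeping $Q'\in\mathcal{Q}(P,D)$. Moreover $f(Q')=f(Q^{*})+\Delta$, so if I can show $\epsilon(Q')\leq\epsilon(Q^{*})$ the contradiction is immediate: $Q'$ would lie in $\mathcal{Q}^{*}(P,D)$ and strictly exceed the maximum $f(Q^{*})$.

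The role of the two coloring hypotheses is to supply the strict slack needed to preserve privacy. For maximal leakage only columns $i$ and $j$ are touched: ``diagonal not black'' means $Q^{*}(i|i)<\max_{k\neq i}Q^{*}(i|k)$, so for $\Delta$ below this gap the maximum of column $i$ is unchanged; and in column $j$ only a single entry decreases, so its maximum is non-increasing. Summing gives $\epsilon_{ML}(Q')\leq\epsilon_{ML}(Q^{*})$. For approximate differential privacy, ``diagonal not black'' upgrades the active constraint $Q^{*}(i|i)-\delta\leq e^{\epsilon}Q^{*}(i|k)$ to strict inequality for every $k\neq i$, and ``$Q^{*}(j|i)$ not red'' upgrades $Q^{*}(j|k)-\delta\leq e^{\epsilon}Q^{*}(j|i)$ to strict inequality for every $k\neq i$; choosing $\Delta$ strictly less than both positive slacks preserves the pairs $(i,k)$ in column $i$ and $(k,i)$ in column $j$. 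The remaining pairs in these two columns either have the modified entry on the looser side, or have both entries unchanged, and pairs in other columns are untouched.

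The main obstacle I anticipate is the pair-by-pair bookkeeping in the approximate-DP case: one must classify each ordered pair $(k_{1},k_{2})$ in the two affected columns according to whether $k_{1}$, $k_{2}$, or neither equals $i$, and verify that each class remains feasible after the perturbation. Once this classification is in place, the ``not black'' and ``not red'' hypotheses supply exactly the strict inequalities needed for the two classes in which the constraint can tighten, and the contradiction with the maximality of $f$ at $Q^{*}$ finishes the argument.
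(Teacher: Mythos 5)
Your proof is correct and follows essentially the same route as the paper's: an extremal selection inside $\mathcal{Q}^{*}(P,D)$ (the paper picks a distortion-minimal element, you pick a trace-maximal one, which amounts to the same weighted-versus-unweighted diagonal objective), followed by a small mass transfer onto the non-black diagonal entry, whose admissibility is exactly the strict slack encoded in ``not black'' and ``not red.'' The only difference is that you move mass from a single non-red off-diagonal entry while the paper spreads the decrease over the whole row; both versions rest on the same implicit assumption that the decreased entries are positive, so nothing is gained or lost.
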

\begin{proof}
Let $\tilde{\mathcal{Q}}^{*}(P,D)\subset\mathcal{Q}^{*}(P,D)$ be the set of $(P,D)$-valid mechanisms with smallest distortion. Choose an arbitrary mechanism $Q\in\tilde{\mathcal{Q}}^{*}(P,D)$. Let $K$ denote the set of the line numbers $k$ such that $Q(k|k)$ is not black and $Q(j|k)$ is not red for $j\neq k$.We define $$
    Q^{\prime}(j|i)=
    \begin{cases}
    Q(k|k)+\Delta_{k},\text{ if }j=i=k\text{ and }k\in K,\\
    Q(j|k)-\Delta_{k}/(m-1),\text{ if }j\neq i=k\text{ and }k\in K,\\
    Q(j|i)\text{, otherwise}
    \end{cases}
    $$ where $$\Delta_{k}=\min\left\{\max_{i}Q(k|i)-Q(k|k),n\cdot\min_{j}\left(Q(j|k)-\min_{i}Q(j|i)\right)\right\}.$$ Therefore, $Q^{\prime}$ is a $(P,D)$-valid mechanism with $\epsilon(Q^{\prime})\leq\epsilon(Q).$ Furthermore, $$\mathbb{E}_{P,Q^{\prime}}[d(x,y)]=\mathbb{E}_{P,Q}[d(x,y)]-\sum_{k\in K}P_{k}\Delta_{k}<\mathbb{E}_{P,Q}[d(x,y)]$$ which contradicts the assumption $Q\in\tilde{\mathcal{Q}}^{*}(P,D).$ Thus, for any $Q\in\tilde{\mathcal{Q}}^{*}(P,D),$ if a diagonal element is not black, then the off-diagonal elements of the same the row are red.
\end{proof}

\subsection{Special Mechanism}\label{app:sm}
In this subsection, we consider the properties of $Q\in\mathcal{T}_{1}\mathcal{T}_{2}(\tilde{\mathcal{Q}}^{*}(P,D))$. By Lemma \ref{lem:trans1}, \ref{lem:trans2} and \ref{lem:trans3}, $Q$ satisfies the following properties.
\begin{enumerate}
    \item $Q(1|1)\geq Q(2|2)\geq \cdots \geq Q(m|m).$
    \item All the off-diagonal elements of a row in $Q$ have the same color.
    \item If a diagonal element is not black, then the off-diagonal elements of the same the row are red.
\end{enumerate}
There are more traits for the color pattern of $Q$.

\begin{prop}\label{prop:1}
The elements of a row in $Q$ are not all black nor red.
\end{prop}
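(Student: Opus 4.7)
The plan is to derive a contradiction in each of the two cases, ``row $k$ entirely black'' and ``row $k$ entirely red,'' by showing that either forces $\epsilon(Q)=0$ and hence contradicts the standing assumption $\epsilon(Q)>0$.

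The key preparatory step I would carry out first is to establish that, for both approximate differential privacy and maximal leakage, a black entry is the maximum of its column and a red entry is the minimum. For maximal leakage this is the coloring rule itself. For approximate differential privacy, if $Q(j|k)$ is black with critical-pair partner $Q(j|i_j)$, so that $Q(j|k)=e^{\epsilon}Q(j|i_j)+\delta$, then the definition of $\epsilon=\epsilon_{\delta}(Q)$ as the global maximum ratio yields $Q(j|k')-\delta\le e^{\epsilon}Q(j|i_j)=Q(j|k)-\delta$ for every row $k'$, giving $Q(j|k')\le Q(j|k)$; the red case is symmetric, with $Q(j|k)$ being forced to equal $(\max_i Q(j|i)-\delta)/e^{\epsilon}=\min_i Q(j|i)$.

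With this structural fact in hand, I would treat the all-black case. It gives $Q(j|k)=\max_i Q(j|i)$ for every $j$, and summing yields $\sum_j \max_i Q(j|i)=\sum_j Q(j|k)=1$. For maximal leakage this is literally $e^{\epsilon_{ML}(Q)}=1$, a direct contradiction. For approximate differential privacy I would combine, for any row $i'$, the pointwise bound $Q(j|i')\le Q(j|k)$ with the row-sum identity $\sum_j Q(j|i')=1=\sum_j Q(j|k)$ to force term-by-term equality, so every row of $Q$ coincides with row $k$; then every critical-pair ratio collapses to $(Q(j|k)-\delta)/Q(j|k)\le 1$, giving $\epsilon_{\delta}(Q)\le 0$, the desired contradiction. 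The all-red case is entirely symmetric, using $\sum_j \min_i Q(j|i)=1$ and the reverse pointwise bound to conclude again that all rows of $Q$ coincide.

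The main obstacle I expect is the preparatory identification of black (respectively red) entries with column maxima (respectively minima) under approximate differential privacy, since the critical-pair definition only constrains the ratio against one specific partner $Q(j|i_j)$; upgrading this to a column-wide comparison is exactly where the global-maximum property of $\epsilon_{\delta}(Q)$ has to be invoked. Once that identification is in place, the remainder is a short row-sum accounting that does not in fact need the structural properties imposed by the transformers $\mathcal{T}_1$ and $\mathcal{T}_2$.
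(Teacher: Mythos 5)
Your proof is correct and follows essentially the same route as the paper: both arguments reduce to the row-sum accounting $\sum_j Q(j|k)=1$ combined with the fact that a black (resp.\ red) entry dominates (resp.\ is dominated by) the other entries of its column, yielding a contradiction. You are in fact more careful than the paper on the one nontrivial point --- upgrading the single critical-pair relation to a column-wide maximum/minimum via the global maximality of $\epsilon_{\delta}(Q)$ --- and your endgame (forcing all rows to coincide and hence $\epsilon(Q)=0$) is a clean equivalent of the paper's strict inequality $1>1$.
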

\begin{proof}
Assume that all elements of the $k$th row are black. Then, there exists a row, denoted by $i$th row, such that $Q(j|k)>Q(j|i)$ for $j\neq i$. Hence, we have $$1=Q(i|k)+\sum_{j\neq i}^{m}Q(j|k)>Q(i|i)+\sum_{j\neq i}^{m}Q(j|i)=1.$$ Contradiction. The proof for no row with all red elements is similar.
\end{proof}

By Property \ref{prop:1}, there exists no black off-diagonal element. Moreover, if a diagonal element is not black, then it is white.
\begin{prop}\label{prop:2}
There is only one possible non-black diagonal element in $Q$.
\end{prop}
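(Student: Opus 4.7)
The plan is a proof by contradiction: I will assume that two distinct diagonal entries $Q(l|l)$ and $Q(k|k)$ with $l\neq k$ are both non-black and use the coloring structure together with Proposition~\ref{prop:1} to reach an absurdity. A preliminary observation is that a non-black diagonal must in fact be white, since a red diagonal would force an all-red row via Lemma~\ref{lem:trans3}, again contradicting Proposition~\ref{prop:1}. Hence $Q(l|l)$ and $Q(k|k)$ may be assumed white, and by Lemma~\ref{lem:trans3} every off-diagonal entry in rows $l$ and $k$ is red, which in both the approximate differential privacy and the maximal leakage coloring means equal to the column minimum.

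From this I will extract a troublesome row. Focus on column $l$: the entry $Q(l|k)$ is red, so it realizes $m_l$, and the structure of critical pairs (approximate DP) / column extremes (maximal leakage) forces a strictly larger black entry $Q(l|b)=M_l$ to exist in column $l$. Because $Q(l|l)$ is white and $Q(l|k)=m_l<M_l$, this black entry must sit in some row $b\notin\{l,k\}$. The pivotal step is then to pin down the color of the diagonal $Q(b|b)$: if it were non-black, the same reasoning applied to row $b$ would make all off-diagonals of row $b$ red, contradicting $Q(l|b)=M_l$ being black. Hence $Q(b|b)$ must itself be black.

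Lemma~\ref{lem:trans2} now finishes the argument: all off-diagonals of row $b$ share a single color, and since $Q(l|b)$ is black they are all black, so row $b$ is entirely black---contradicting Proposition~\ref{prop:1}. The step I expect to be most delicate in the write-up is justifying uniformly for both privacy notions that a red off-diagonal in column $l$ forces a black entry in column $l$. For maximal leakage this is automatic, since red and black are defined as column min and column max. For approximate differential privacy one needs the definition of a critical pair: a red entry is the small side of such a pair and its partner, which is strictly larger, is exactly what the coloring rule marks as black. Once that identification is in place, the rest of the argument collapses cleanly to the row-all-black contradiction above.
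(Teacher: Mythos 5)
Your proof is correct and follows essentially the same route as the paper's: both deduce from Lemma~\ref{lem:trans3} that a second non-black diagonal forces a red off-diagonal entry whose black partner (critical-pair mate or column maximum) must itself be off-diagonal, and then contradict Proposition~\ref{prop:1}. The only difference is that the paper invokes its already-stated corollary that no off-diagonal entry is black, whereas you re-derive that fact inline from Lemmas~\ref{lem:trans2} and~\ref{lem:trans3}.
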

\begin{proof}
Assume for $1\leq i<k\leq m$, $Q(i|i)$ and $Q(k|k)$ are not black. Then by Lemma \ref{lem:trans3}, $Q(i|k)$ is red. Thus, $Q(i|l)$ is black for some $1\leq l\leq m$. Because $Q(i|i)$ is not black, $l\neq i$. Thus $Q(i|l)$ is a black off-diagonal element which contradicts Property \ref{prop:1}.
\end{proof}

\begin{prop}\label{prop:3}
For the coloring scheme of approximate differential privacy, if $Q(k|k)$ is the only possible non-black diagonal element, then $Q(k|k)=Q(1|1)$.
\end{prop}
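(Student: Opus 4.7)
The plan is to argue by contradiction. The case $k=1$ is trivial, so I will suppose $k\geq 2$ and $Q(k|k)<Q(1|1)$ and build a small perturbation of $Q$ that preserves the privacy level $\epsilon := \epsilon_{\delta}^{*}(P,D)$ and the $(P,D)$-validity while strictly lowering the expected Hamming distortion. Since this will contradict $Q\in\tilde{\mathcal{Q}}^{*}(P,D)$, I will conclude $Q(k|k)\geq Q(1|1)$; combined with the $\mathcal{T}_{1}$-sort $Q(1|1)\geq Q(k|k)$ this forces equality.

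I will exploit the rigid structure already pinned down in Propositions 1--2 and Lemmas~\ref{lem:trans2}--\ref{lem:trans3}: row $k$ is the unique ``red'' row, every other row has white off-diagonals, and the only critical pairs of $Q$ are $Q(j|j)=\alpha Q(j|k)+\delta$ for $j\neq k$, where $\alpha := e^{\epsilon}>1$. In particular column $k$ contains no black entry, hence no critical pair, so $\max_{i}Q(k|i)-\delta<\alpha\min_{i}Q(k|i)$ strictly, and every white entry in column $1$ enjoys strict slack in its $\epsilon$-DP inequality. For a sufficiently small $\epsilon'>0$ I propose the four-entry move
\[
Q(1|1)'=Q(1|1)+\epsilon',\quad Q(1|k)'=Q(1|k)+\epsilon'/\alpha,\quad Q(k|1)'=Q(k|1)-\epsilon',\quad Q(k|k)'=Q(k|k)-\epsilon'/\alpha,
\]
leaving every other entry alone. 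The row sums of rows $1$ and $k$ are preserved by construction, and the critical pair in column $1$ survives because $Q(1|1)'=\alpha Q(1|k)'+\delta$; every other $\epsilon$-DP inequality in columns $1$ and $k$ is preserved for $\epsilon'$ small enough by the strict slacks above. The distortion changes by
\[
\mathbb{E}_{P,Q'}[d]-\mathbb{E}_{P,Q}[d]=-P_{1}\epsilon'+P_{k}\epsilon'/\alpha=\epsilon'(P_{k}/\alpha-P_{1})<0,
\]
because $\alpha>1$ and $P_{1}\geq P_{k}$, and this will give the desired contradiction.

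The hard part will be the boundary regime in which one of the entries being decreased sits at its extreme. The case $Q(1|1)=1$ turns out to be incompatible with the premise: it forces $Q(k|1)=0$, which through column-$k$ $\epsilon$-DP yields $Q(k|i)\leq\delta$ for every $i$, and the row-$k$ sum identity together with $\alpha>1$ then contradicts the whiteness of $Q(k|k)$. When $Q(k|1)=0$ but $Q(1|1)<1$ I will redirect the row-$1$ decrement to any positive entry $Q(j|1)$ with $j\neq 1,k$, which exists because $\sum_{j\neq 1}Q(j|1)=1-Q(1|1)>0$; the column-$j$ check uses the strict inequality $Q(j|1)>Q(j|k)$ that comes from the whiteness of $Q(j|1)$. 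If $Q(k|k)=0$ I will instead shift the row-$k$ decrement to some positive $Q(j|k)$ with $j\neq 1,k$ and rebalance column $j$'s critical pair by simultaneously decreasing $Q(j|j)$ by $\epsilon'$; the distortion change then takes the form $\epsilon'(P_{j}-P_{1})\leq 0$, still strict whenever $P_{j}<P_{1}$, and the residual fully-uniform case is excluded because symmetry via $\mathcal{T}_{1}$ forces all diagonals of $Q$ to coincide and the hypothesis then already yields $Q(k|k)=Q(1|1)$.
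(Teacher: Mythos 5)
Your overall strategy---perturb $Q$ so as to strictly reduce the distortion at equal privacy loss and contradict $Q\in\tilde{\mathcal{Q}}^{*}(P,D)$---is in the spirit of Lemma~\ref{lem:trans3} and is a genuinely different route from the paper's, which is a direct algebraic argument: from the redness of row $k$ and Property~\ref{prop:1} one gets $Q(j|j)=e^{\epsilon}Q(j|k)+\delta$ and $Q(j|i)\geq Q(j|k)$ for $j\neq i,k$; the two row-sum identities then give $Q(k|k)+Q(i|k)\geq Q(i|i)+Q(k|i)$; and a short ratio computation comparing $\frac{Q(i|i)-\delta}{Q(i|k)}=e^{\epsilon}$ against $\frac{Q(k|k)-\delta}{Q(k|i)}<e^{\epsilon}$ rules out $Q(k|k)<Q(i|i)$, with no perturbation and no boundary cases.

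Your version, however, has a genuine gap. The four-entry move preserves $\epsilon_{\delta}$ only if $Q(1|k)$ is the \emph{only} red entry of column $1$ serving as a critical partner of $Q(1|1)$. At this stage of the development nothing forces row $k$ to be the unique red row: Lemma~\ref{lem:trans2} only makes each row monochromatic off the diagonal, and Lemma~\ref{lem:trans3} only forces row $k$ to be red; another row $i$ may also be red with $Q(1|1)=e^{\epsilon}Q(1|i)+\delta$ (the arrangement of red versus white rows is only established afterwards, in Property~\ref{prop:4}, whose proof cites Property~\ref{prop:3}). If such an $i$ exists, raising $Q(1|1)$ by $\epsilon'$ while leaving $Q(1|i)$ fixed violates $Q(1|1)\leq e^{\epsilon}Q(1|i)+\delta$ and strictly increases the privacy loss; repairing this forces further increments in row $i$ and compensating decrements in other columns, a cascade your write-up does not control. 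The boundary analysis is also not airtight: from $Q(1|1)=1$ you get $Q(k|i)\leq\delta$ for all $i$ and, via the redness of row $k$, $1-Q(k|k)\leq(m-1)(1-\delta)e^{-\epsilon}$, which only yields $e^{\epsilon}\leq m-1$, not a contradiction with the whiteness of $Q(k|k)$ when $m\geq 3$. That case is indeed vacuous, but for a different reason: $Q(1|1)=1$ forces every entry of every column $j\neq 1$ below $\delta$, so the distortion is at least $(1-\delta)(1-P_{1})=(1-\delta)D^{(m-1)}>D$, contradicting $(P,D)$-validity. Likewise, redirecting the row-$1$ decrement to $Q(j|1)$ in your second boundary case presupposes that $Q(j|1)$ is white, which again is not yet known.
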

\begin{proof}
By Lemma \ref{lem:trans3}, $Q(j|k)$ is red for $j\neq k$. By Property \ref{prop:2}, $Q(i|i)$ is black for $i\neq k$. Then by Property \ref{prop:1}, $Q(j|i)$ is not black for $j\neq i$. Therefore, $Q(j|i)\geq Q(j|k)$ for $j\neq$ $i$ and $k$.
\beqs
&&\sum_{j=1}^{m}Q(j|i)=\sum_{j=1}^{m}Q(j|k)=1\\
&\Rightarrow& Q(k|k)+Q(k|i)\geq Q(i|i)+Q(k|i)\\
&\Rightarrow& Q(k|k)\geq \delta+Q(k|i)\geq\delta.
\eeqs
Because $Q(k|k)$ is white, there exists no line number $j$ such that $Q(k|j)=0$. Thus, $Q(k|i)>0.$ If $Q(i|k)=0$ then $Q(k|k)\geq Q(i|i).$ If $Q(i|k)>0$ then $\frac{Q(i|i)-\delta}{Q(i|k)}>\frac{Q(k|k)-\delta}{Q(k|i)}.$ Assume that $Q(k|k)<Q(i|i).$ Then $0<Q(i|i)-Q(k|k)\leq Q(i|k)-Q(k|i).$ Thus $$\frac{Q(i|i)-Q(k|k)}{Q(i|k)-Q(k|i)}\leq1.$$ On the contrary, 
\beqs
\lefteqn{\frac{Q(i|i)-Q(k|k)}{Q(i|k)-Q(k|i)}=\frac{Q(i|i)-\delta-(Q(k|k)-\delta)}{Q(i|k)-Q(k|i)}}\\
&&>\frac{Q(i|i)-\delta-\frac{Q(k|i)}{Q(i|k)}(Q(i|i)-\delta)}{Q(i|k)-Q(k|i)}\\
&&=\frac{Q(i|i)-\delta}{Q(i|k)}>1.
\eeqs
Contradiction! Hence, $Q(k|k)\geq Q(i|i)$ for $1\leq i\leq m.$
\end{proof}

\begin{prop}\label{prop:4}
For the coloring scheme of approximate differential privacy, all the off-diagonal elements in the first row of $Q$ are red.
\end{prop}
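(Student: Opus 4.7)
The plan is to combine the structural facts from Propositions~\ref{prop:1}--\ref{prop:3} with a local mass-shift perturbation that exploits the joint optimality of $Q$ in both privacy loss and distortion. First I would apply Property~2 from Lemma~\ref{lem:trans2}, which makes all off-diagonals of row~1 share a common color, and then Proposition~\ref{prop:1}, which rules out ``all black''; this reduces the task to excluding the ``all white'' possibility.

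Second, I would split on the color of $Q(1|1)$. If $Q(1|1)$ is not black, then by Proposition~\ref{prop:2} it is the unique non-black diagonal allowed, and Lemma~\ref{lem:trans3} immediately makes the off-diagonals of row~1 red. The remaining case is $Q(1|1)$ black; blackness produces an index $l\neq 1$ with $Q(1|1)-\delta = e^{\epsilon}Q(1|l)$, where $\epsilon=\epsilon^{*}_{\delta}(P,D)$, so that $Q(1|l)=m_{1}$ is red in column~1. Assuming for contradiction that every $Q(j|1)$ with $j\neq 1$ is white, I would construct a perturbed mechanism $Q'\in\mathcal{Q}(P,D)$ with $\epsilon_{\delta}(Q')\le \epsilon$ and strictly smaller expected distortion than $Q$, contradicting $Q\in\tilde{\mathcal{Q}}^{*}(P,D)$.

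Concretely, the perturbation I have in mind moves, for small $\Delta>0$, a total mass $\Delta$ off the row-1 off-diagonals onto $Q(1|1)$, and simultaneously moves mass $\Delta/e^{\epsilon}$ from $Q(l|l)$ onto $Q(1|l)$ for every row $l$ attaining the column-1 minimum. Row sums are preserved by construction, and the scaling factor $e^{-\epsilon}$ is forced by the requirement that the column-1 critical ratio remain exactly $e^{\epsilon}$, via $(Q(1|1)+\Delta-\delta)/(Q(1|l)+\Delta/e^{\epsilon})=e^{\epsilon}$. Every modified column $l$ either sees its ratio strictly drop (when $Q(l|l)$ is the black column maximum) or is unchanged (when $Q(l|l)$ is the unique white diagonal, which by Proposition~\ref{prop:3} sits strictly between $m_{l}$ and $M_{l}$). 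In every other column $j$, the whiteness of $Q(j|1)$ provides strict slack between $Q(j|1)$ and both $m_{j}$ and $M_{j}$ (or the column is non-critical), so a sufficiently small perturbation cannot push any ratio past $e^{\epsilon}$. The distortion change works out to
\[
\Delta\Bigl(\tfrac{1}{e^{\epsilon}}\!\!\sum_{l\,:\,Q(1|l)=m_{1}}\!\!P_{l}\;-\;P_{1}\Bigr),
\]
which is strictly negative whenever $P_{1}e^{\epsilon}$ exceeds the weight $\sum_{l}P_{l}$ of the rows attaining the column-1 minimum.

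The main obstacle is controlling that weight when more than one row attains the minimum. When the minimizer is unique, the inequality follows at once from $P_{l}\le P_{1}$ and $e^{\epsilon}>1$. For multiple minimizers I would leverage the rigidity the whiteness hypothesis forces on $Q$: for each row $l$ with $Q(1|l)=m_{1}$, Property~2 makes its entire off-diagonal red and hence equal to the column minima $m_{j}$, and the resulting row-sum identities pin both the $Q(l|l)$'s and the $m_{l}$'s to common values across such $l$. Combined with the ordering of $P$ and the optimality of $\epsilon$, this reduces the desired inequality to a concrete arithmetic check, with a simple fallback of strictly decreasing $Q(1|1)$ whenever the distortion constraint carries slack. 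Organizing these sub-cases cleanly is the tedious bookkeeping step; the only new ingredient beyond Propositions~\ref{prop:1}--\ref{prop:3} is the two-row mass shift described above.
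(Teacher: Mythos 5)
Your strategy genuinely diverges from the paper's: the paper proves Proposition~\ref{prop:4} with a short ordering argument, whereas you case-split on the colour of $Q(1|1)$ and, in the hard case, try to contradict membership in $\tilde{\mathcal{Q}}^{*}(P,D)$ via a mass-shifting perturbation. Your first case ($Q(1|1)$ not black) is correct. The gap is in the second case. The distortion change of your perturbation is $\Delta\bigl(e^{-\epsilon}\sum_{l}P_{l}-P_{1}\bigr)$, the sum running over rows attaining the column-$1$ minimum, and you need this to be negative; as you yourself note, that is only automatic when the minimizer is unique. With several minimizers, $\sum_{l}P_{l}$ can easily exceed $P_{1}e^{\epsilon}$: take $P$ near uniform and $D$ large enough that $\epsilon^{*}_{\delta}(P,D)$ is close to $0$, so that with $m-1$ minimizing rows one gets $\sum_{l}P_{l}\approx\frac{m-1}{m}$ while $P_{1}e^{\epsilon}\approx\frac{1}{m}$. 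Your proposed remedy --- ``rigidity'' pinning the $Q(l|l)$ to common values followed by ``a concrete arithmetic check'' --- is not carried out, and it is not clear it can be: equality of the $Q(l|l)$ across minimizing rows does not by itself produce the inequality $P_{1}e^{\epsilon}>\sum_{l}P_{l}$ that your sign computation requires, and the ``fallback of strictly decreasing $Q(1|1)$'' increases the distortion rather than decreasing it, so it cannot supply the contradiction either. As written, the argument fails in exactly the regime where it is needed.

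The gap is avoidable because no perturbation is needed. By Proposition~\ref{prop:1} and Lemma~\ref{lem:trans2} every row's off-diagonal block is entirely red or entirely white, and a red entry, being the smaller member of a critical pair, is the minimum of its column. Comparing the row sums $\sum_{j}Q(j|k)=\sum_{j}Q(j|i)=1$ for a red row $k$ and a white row $i$ therefore gives $Q(k|k)+Q(i|k)\geq Q(i|i)+Q(k|i)$, and the ratio argument from Proposition~\ref{prop:3} upgrades this to $Q(k|k)\geq Q(i|i)$. Since $\mathcal{T}_{1}$ has sorted the diagonal in decreasing order, and at least one red row exists whenever $\epsilon_{\delta}(Q)>0$ (the smaller member of any critical pair must be off-diagonal, again by Proposition~\ref{prop:1}), the red rows sit at the top of the matrix and row $1$ is red. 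This one comparison replaces your entire second case.
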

\begin{proof}
Assume that all the off-diagonal elements in the $k$th row are red and all the off-diagonal elements in the $i$th row are white. Since the sum of each row is one, we have
$$Q(k|k)+Q(k|i)\geq Q(i|i)+Q(k|i).$$
Similar with the proof of Property \ref{prop:3}, this inequality implies $Q(k|k)\geq Q(i|i)$. In other words, all the red rows are arranged above the white ones in $Q$.
\end{proof}

\begin{prop}\label{prop:5}
For the coloring scheme of maximal leakage, all the diagonal elements are black in $Q$.
\end{prop}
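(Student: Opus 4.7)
The strategy is to derive a contradiction from the assumption that some diagonal entry is not black, leaning entirely on the color-pattern rigidity already extracted for $Q\in \mathcal{T}_1\mathcal{T}_2(\tilde{\mathcal{Q}}^*(P,D))$. Suppose $Q(k|k)$ is not black. First I would rule out the possibility that $Q(k|k)$ is red: by Lemma~\ref{lem:trans3} a non-black diagonal forces the off-diagonals of that row to be red, and in the maximal-leakage scheme ``red'' means attaining the column minimum, so a red $Q(k|k)$ would make the entire row $k$ red, contradicting Proposition~\ref{prop:1}. Hence $Q(k|k)$ must be white, and Proposition~\ref{prop:2} then guarantees that it is the unique non-black diagonal entry.

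Next I would exploit the definition of ``white'' in the maximal-leakage scheme: $Q(k|k)$ is neither the column maximum nor the column minimum of column $k$, so there exists $l\ne k$ with $Q(k|l)>Q(k|k)$ and $Q(k|l)=\max_i Q(k|i)$, making $Q(k|l)$ a black off-diagonal entry. Applying Lemma~\ref{lem:trans2} to row $l$, every off-diagonal entry of that row must share the color of $Q(k|l)$ and is therefore black. Since $l\ne k$, Proposition~\ref{prop:2} forces $Q(l|l)$ itself to be black, so every entry of row $l$ is black, contradicting the ``no all-black row'' clause of Proposition~\ref{prop:1}.

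I do not foresee any real obstacle beyond correctly applying the monochromatic-off-diagonal property to the appropriate row. No new transformer, optimization problem, or estimate on $\mathcal{Q}^*(P,D)$ is needed: the single fact driving the proof is that under the maximal-leakage coloring a white diagonal entry certifies the existence of a strictly larger (hence black) entry elsewhere in the same column, and that entry must sit off the diagonal in a different row, propagating blackness throughout the entire row and clashing with Proposition~\ref{prop:1}.
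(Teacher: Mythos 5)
Your proof is correct, but it reaches the conclusion by a genuinely different route than the paper. The paper's argument for Proposition~\ref{prop:5} is variational: assuming $Q(k|k)$ is the (unique) white diagonal entry, it perturbs $Q$ by transferring mass from the nonzero off-diagonal entries of row $k$ onto $Q(k|k)$ until that entry becomes a column maximum; this cannot increase $\sum_j\max_i Q(j|i)$ but strictly decreases the distortion, contradicting the choice of $Q$ in $\tilde{\mathcal{Q}}^{*}(P,D)$ as an optimal mechanism of \emph{smallest} distortion, whence $Q(j|k)=0$ for $j\neq k$, $Q(k|k)=1$, and $Q(k|k)$ cannot be white. Your argument instead stays entirely inside the coloring combinatorics already established: a white $Q(k|k)$ certifies that the maximum of column $k$ sits at some off-diagonal position $Q(k|l)$ with $l\neq k$, which is therefore black; the monochromatic-row property (Lemma~\ref{lem:trans2}) paints all off-diagonal entries of row $l$ black, Proposition~\ref{prop:2} forces $Q(l|l)$ to be black as well, and an all-black row contradicts Proposition~\ref{prop:1}. (This is essentially a localized instance of the paper's own remark, made just after Proposition~\ref{prop:1}, that no black off-diagonal element can exist.) What each approach buys: the paper's perturbation re-uses the minimal-distortion refinement of $\tilde{\mathcal{Q}}^{*}(P,D)$ and requires checking that the modification does not increase the leakage, while yours needs no new construction and is shorter; on the other hand, your route leans on Propositions~\ref{prop:1} and~\ref{prop:2} and Lemma~\ref{lem:trans3}, whose proofs themselves invoke that same minimal-distortion property, so the two arguments ultimately rest on the same foundations. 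Your preliminary step ruling out a red diagonal via Lemma~\ref{lem:trans3} and Proposition~\ref{prop:1} is also sound and matches the paper's standing observation that any non-black diagonal entry must be white.
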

\begin{proof}
We claim that if the only possible white diagonal element is $Q(k|k)$, then $Q(j|k)=0$ for $j\neq k.$ Otherwise, we can transfer some non-zero off-diagonal elements' value of the $k$th row to $Q(k|k)$ until it turns black, resulting in a optima l mechanism with smaller distortion. Contradiction to $Q\in\tilde{\mathcal{Q}}^{*}(P,D).$ Thus $Q(j|k)=0$ for $j\neq k$, leading to the fact $Q(k|k)=1$. This is not possible since $Q(k|k)$ is white.
\end{proof}

Combing all aforementioned properties of $Q$, one can obtain Lemma \ref{lem:eq2}.

\subsection{Proof of Theorem \ref{thm:DP4}}\label{app:ml}
The condition of equivalence for zero maximal leakage is given as follows
\begin{lem}\label{lem:04}
For given $P$ such that $P_1\geq P_2\geq \cdots \geq P_m$, $\epsilon^{*}_{ML}(P,D)$ $=$ $0$ if and only if $D\geq D^{(m-1)}.$
\end{lem}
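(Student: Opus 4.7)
The plan is to first reduce the condition $\epsilon_{ML}(Q)=0$ to a very rigid structural constraint on $Q$, and then compute the minimum expected distortion over all mechanisms satisfying that constraint. The key observation is that for any conditional distribution $Q$,
$$\sum_{y}\max_{x}Q(y|x)\ \geq\ \sum_{y}Q(y|x_0)\ =\ 1$$
for every fixed $x_0\in\mathcal{X}$, so $\epsilon_{ML}(Q)=0$ forces equality. Since each $\max_x Q(y|x)-Q(y|x_0)\geq 0$ and these non-negative terms sum to $0$, we must have $Q(y|x)=\max_x Q(y|x)$ for every $x$ and $y$. Hence $\epsilon_{ML}(Q)=0$ if and only if $Q(y|x)$ does not depend on $x$, i.e., $Q$ is a constant channel with some output distribution $R(y)$.

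Given this reduction, I would compute
$$\mathbb{E}_{P,Q}[d(x,y)]\ =\ \sum_{x}P_x\bigl(1-R(x)\bigr)\ =\ 1-\sum_{x}P_x R(x),$$
and observe that the right-hand side is minimized over probability distributions $R$ by putting all mass on the index maximizing $P_x$, which is $x=1$ by assumption. The resulting minimum distortion achievable by a constant channel is therefore $1-P_1=D^{(m-1)}$.

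Combining the two steps gives both directions. If $\epsilon^{*}_{ML}(P,D)=0$, a witnessing $(P,D)$-valid mechanism must be a constant channel, so $D\geq 1-P_1=D^{(m-1)}$. Conversely, if $D\geq D^{(m-1)}$, the deterministic mechanism $Q(y|x)=\mathbf{1}\{y=1\}$ has $\epsilon_{ML}(Q)=0$ and expected distortion exactly $1-P_1\leq D$, so $\epsilon^{*}_{ML}(P,D)=0$. There is no real obstacle here: the only slightly delicate point is justifying that equality in $\sum_y\max_x Q(y|x)\geq 1$ forces $Q(y|\cdot)$ to be constant for every $y$ simultaneously, but this follows immediately from the non-negativity argument above. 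Note also that this lemma is consistent with Theorem~\ref{thm:DP4} at $k=m$, where $D^{(k-1)}<D\leq D^{(k)}=1$ collapses to $D>D^{(m-1)}$ and the expression $\log(m-k-\tfrac{D-D^{(k)}}{D^{(k)}-D^{(k-1)}})$ becomes $\log 1=0$, so one could alternatively derive the lemma as a corollary once Theorem~\ref{thm:DP4} is in hand.
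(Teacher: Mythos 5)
Your proof is correct and follows essentially the same route as the paper's: both directions hinge on the identity $\sum_{y}\max_{x}Q(y|x)=1$ for zero leakage, the resulting lower bound $\mathbb{E}_{P,Q}[d(x,y)]\geq 1-P_1=D^{(m-1)}$, and the constant mechanism sending every input to the most likely symbol for sufficiency. The only (immaterial) difference is that you first establish the full structural fact that zero-leakage channels are constant channels, whereas the paper extracts only the weaker consequence $\sum_{j}Q(j|j)\leq 1$ and bounds $\sum_{j}P_{j}Q(j|j)\leq P_{1}\sum_{j}Q(j|j)$ directly.
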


\begin{proof}[Proof of Lemma \ref{lem:04}]
Necessity. If $\epsilon^{*}_{ML}(P,D)=0$ then there exists a mechanism $Q$ such that $\epsilon_{ML}(Q)=0$ and $\mathbb{E}_{P,Q}[d(x,y)]\leq D$. Thus, $\sum_{j}\max_{i}Q(j|i)=1$. Hence, $\sum_{j}Q(j|j)\leq1.$ The expected distortion 
\beqs
\lefteqn{\mathbb{E}_{P,Q}[d(x,y)]=1-\sum_{j=1}^{m}P_{j}Q(j|j)}\\
&&\geq 1-P_{1}\sum_{j=1}^{m}Q(j|j)\geq D^{(m-1)}.
\eeqs
Thus $D^{(m-1)}\leq\mathbb{E}_{P,Q}[d(x,y)]\leq D.$

Sufficiency. For $D\geq D^{(m-1)}$, we define mechanism $Q$ as follows.
$$
Q=
\begin{pmatrix}  
1 & 0 & \cdots & 0\\  
1 & 0 & \cdots & 0\\
\vdots & \vdots & \ddots & \vdots\\
1 & 0 & \cdots & 0
\end{pmatrix}.
$$
One can easily obtain that $\epsilon_{ML}(Q)=0$ and $\mathbb{E}_{P,Q}[d(x,y)]=D^{(m-1)}\leq D.$ Thus, $\epsilon^{*}_{\delta}(P,D)=0$.
\end{proof}

Combing all properties of $Q$ in Appendix \ref{app:sm}, one can obtain the following lemma.
\begin{lem}\label{lem:eq4}
If $\epsilon^{*}_{ML}(P,D)>0$ then there exists a mechanism $Q\in\mathcal{Q}_{ML}^{*}(P,D)$ such that 

(1) $0\leq Q(m|m)\leq Q(m-1|m-1)\leq\cdots\leq Q(1|1)\leq1;$

(2) $\max_{i}Q(j|i)=Q(j|j)$ for $1\leq j\leq m$.
\end{lem}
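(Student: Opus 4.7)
The plan is to obtain such a $Q$ by post-processing an arbitrary optimal mechanism with the transformers from Appendix \ref{app:st} and then invoking the coloring results from Appendix \ref{app:sm}. Starting point: pick any $Q_0 \in \tilde{\mathcal{Q}}^{*}_{ML}(P,D)$, i.e.\ an optimum that, among all optima, also minimizes the expected distortion (such a minimizer exists by compactness of the feasible set and continuity of the objective). This initial choice is essential, because Lemma \ref{lem:trans3} and Propositions \ref{prop:1}--\ref{prop:5} are only guaranteed on $\tilde{\mathcal{Q}}^{*}_{ML}(P,D)$.

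Step 1 is to apply the relabeling transformer $\mathcal{T}_1$ of Lemma \ref{lem:trans1} to $Q_0$, which performs a simultaneous row--column permutation $\sigma$ sorting the diagonal entries in nonincreasing order; this directly delivers property (1). Since the maximal leakage $\epsilon_{ML}(Q)=\log\sum_j\max_iQ(j|i)$ is invariant under simultaneous permutation of inputs and outputs and is independent of the prior, the privacy loss is unchanged. Using the hypothesis $P_1\geq\cdots\geq P_m$ and the rearrangement inequality, pairing the now-decreasing diagonal sequence with the decreasing prior maximizes $\sum_i P_i Q(i|i)$, so the expected distortion weakly decreases. Hence $\mathcal{T}_1 Q_0$ still lies in $\tilde{\mathcal{Q}}^{*}_{ML}(P,D)$.

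Step 2 is to apply the row-homogenization transformer $\mathcal{T}_2$ of Lemma \ref{lem:trans2}. By construction $\mathcal{T}_2$ leaves the diagonal entries, the privacy loss, and the expected distortion unchanged, while guaranteeing that the off-diagonal entries of each row share a single color under the maximal-leakage coloring scheme. Property (1) is therefore preserved, and $Q := \mathcal{T}_2 \mathcal{T}_1 Q_0$ belongs to $\mathcal{T}_1\mathcal{T}_2(\tilde{\mathcal{Q}}^{*}_{ML}(P,D))$, the exact set on which Proposition \ref{prop:5} was established. That proposition then asserts that every diagonal entry of $Q$ is colored black. Since ``black'' in the maximal-leakage coloring is defined to mean precisely that the entry is the largest in its column, we conclude $\max_i Q(j|i)=Q(j|j)$ for all $j$, which is property (2).

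The only delicate point is checking that the compositions do not leave $\tilde{\mathcal{Q}}^{*}_{ML}(P,D)$, since the machinery of Appendix \ref{app:sm} (in particular Proposition \ref{prop:5}) requires membership in this \emph{strict} subset of $\mathcal{Q}^{*}_{ML}(P,D)$. This reduces to tracking that each transformer weakly decreases the expected distortion and leaves the privacy loss invariant, both of which are immediate from Lemmas \ref{lem:trans1} and \ref{lem:trans2}. No new computation is needed: once this chain of invariances is in place, properties (1) and (2) follow respectively from the sorting performed by $\mathcal{T}_1$ and from the black-diagonal conclusion of Proposition \ref{prop:5}.
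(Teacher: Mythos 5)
Your proposal is correct and follows essentially the same route as the paper: the paper's own (one-line) proof is precisely to take $Q\in\mathcal{T}_{1}\mathcal{T}_{2}(\tilde{\mathcal{Q}}^{*}_{ML}(P,D))$ and read off property (1) from the sorting in Lemma \ref{lem:trans1} and property (2) from the black-diagonal conclusion of Proposition \ref{prop:5}. Your write-up merely makes explicit the invariance bookkeeping (privacy loss unchanged, distortion weakly decreasing, hence membership in $\tilde{\mathcal{Q}}^{*}_{ML}(P,D)$ is preserved) that the paper leaves implicit.
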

Combining theses two lemmas leads to the proof of Theorem \ref{thm:DP4}.
\begin{proof}[Proof of Theorem\ref{thm:DP4}]
For $D\geq D^{(m-1)}$, Lemma \ref{lem:04} proves the statement.

For $0<D<D^{(m-1)}$, let $Q^{*}$ be a mechanism satisfying the conditions in Lemma \ref{lem:eq4}. Then $\epsilon_{ML}^{*}(P,D)=\log\sum_{j=1}^{m}Q^{*}(j|j).$ Consider the following optimization problem,
\begin{eqnarray}
\nonumber\lefteqn{d_6:=\min_{\alpha_1,\alpha_2,\cdots,\alpha_m}\log\sum_{j=1}^{m}\alpha_{j}}\\
\nonumber&&\text{subject to}\\
\label{op:ml1}&&(1)\text{ }\sum_{j=1}^{m}P_j\alpha_j\geq1-D,\\
\nonumber&&(2)\text{ }\sum_{j=1}^{m}\alpha_j\geq1,\\
\nonumber&&(3)\text{ }\alpha_{j}\leq\alpha_{j-1}\text{ for }1\leq j\leq m+1
\end{eqnarray}
where $\alpha_0=1$ and $\alpha_{m+1}=0$. Notice that $\{Q^{*}(j|j)\}_{1\leq j\leq m}$ is a feasible solution of \eqref{op:ml1}, and hence $\epsilon_{ML}^{*}(P,D)\geq d_6.$ Conversely, let $\{\alpha_{1}^{*},\alpha_{2}^{*},\cdots,\alpha_{m}^{*}\}$ be the optimal solution of \eqref{op:ml1}. We define the following mechanism.
$$
Q(j|i)=
\begin{cases}
\alpha_{j}^{*}&\text{ if }i=j,\\
\frac{(1-\alpha_{j}^{*})\alpha^{*}_{i}}{\sum_{i\neq j}\alpha_{i}}&\text{ if }i\neq j.
\end{cases}
$$ 
For any $j\in[1,m]$, $\sum_{i=1}^{m}Q(j|i)=1$ which leads to the fact that $Q$ is a conditional probability. The first restricted condition in the scenario of \eqref{op:ml1} implies that $Q$ is within $\mathcal{Q}(P,D)$. The other two imply that $$\epsilon_{ML}^{*}(P,D)\leq\epsilon_{ML}(Q)=\log\sum_{j=1}^{m}\max_{1\leq i\leq m}Q(j|i)=\log\sum_{j=1}^{m}\alpha^{*}_{j}=d_{6}.$$ 

By letting $\beta_{j}:=\alpha_{j}-\alpha_{j+1}$ for $1\leq j\leq m$, we obtain the following standard linear programming.
\begin{eqnarray}
\nonumber\lefteqn{d_7(\alpha_1):=\max_{\beta_1,\beta_2,\cdots,\beta_{m}}-\sum_{j=1}^{m}j\beta_j}\\
\nonumber&&\text{subject to}\\
\nonumber&&(1)\text{ }\sum_{j=1}^{m}(D^{(m-j)}-1)\beta_j\leq-1+D,\\
\label{op:ml2}&&(2)\text{ }-\sum_{j=1}^{m}j\beta_j\leq -1,\\
\nonumber&&(3)\text{ }\sum_{j=1}^{m}\beta_{j}\leq 1,\\
\nonumber&&(4)\text{ }\beta_{j}\geq0\text{ for }1\leq j\leq m.
\end{eqnarray}
Moreover, $d_6=\log(-d_7).$ The dual of \eqref{op:ml2} is the minimization linear program as below.
\begin{eqnarray}
\nonumber\lefteqn{\tilde{d}_7:=\min_{\gamma_1,\gamma_2,\gamma_{3}}(-1+D)\gamma_1-\gamma_2+\gamma_3}\\
\label{op:ml3}&&\text{subject to}\\
\nonumber&&(1)\text{ }D^{(m-j)}\gamma_1-j\gamma_2+\gamma_3\geq -j\text{ for }1\leq j\leq m\\
\nonumber&&(2)\text{ }\gamma_1,\gamma_2,\gamma_{3}\geq0.
\end{eqnarray}
By strong duality theorem, we have ${d}_7=\tilde{d}_7$. Similar with the proof of Theorem \ref{thm:DP2}, we can obtain the following result by fixing $\gamma_2\geq0$.
$$\tilde{d}_7=\min\left\{-1,\frac{D-D^{(k-1)}}{P_{m+1-k}}-(m+1-k)\right\}$$
if $D^{(k-1)}<D\leq D^{(k)}$ and $1\leq k\leq m$. Thus, 
$$\epsilon^{*}_{ML}=\max\left\{0,\log\left(m-k-\frac{D-D^{(k)}}{D^{(k)}-D^{(k-1)}}\right)\right\}$$
if $D^{(k-1)}<D\leq D^{(k)}$ and $1\leq k\leq m$.
\end{proof}

\end{appendix}

\end{document}